\documentclass[opre,nonblindrev]{informs3_arxiv}
\usepackage[margin=1in]{geometry}

\SingleSpacedXI

\usepackage{natbib}
\bibpunct[, ]{(}{)}{,}{a}{}{,}%
\def\bibfont{\small}%
\def\bibsep{\smallskipamount}%
\usepackage{booktabs}
\usepackage{tablefootnote}
\usepackage{graphicx}
\usepackage{fancyhdr}
\usepackage{multirow}
\usepackage{pdflscape}
\usepackage{subcaption}

\usepackage{amsmath}
\usepackage{amssymb}
\usepackage{mathtools}
\usepackage{bm}

\DeclareMathOperator{\Var}{Var}
\DeclareMathOperator{\Cov}{Cov}

\usepackage{tikz}
\usetikzlibrary{shapes,arrows,calc,plotmarks}

\usepackage{xcolor}
\usepackage{hyperref}
\usepackage[noabbrev,capitalise]{cleveref}

\crefname{appendix}{Appendix}{Appendices}
\Crefname{appendix}{Appendix}{Appendices}

\usepackage{thmtools}

\theoremstyle{plain}
\newtheorem{theorem}{Theorem}[section]
\newtheorem{corollary}[theorem]{Corollary}
\newtheorem{lemma}[theorem]{Lemma}
\newtheorem{proposition}[theorem]{Proposition}

\usepackage[linesnumbered,ruled,vlined]{algorithm2e}

\SetAlFnt{\small}
\SetAlCapFnt{\small}
\SetAlCapNameFnt{\small}
\SetAlCapHSkip{0pt}
\IncMargin{-\parindent}
\SetKwRepeat{Do}{do}{while}

\SetCommentSty{mycommfont}

\usepackage[normalem]{ulem}
\usepackage{footmisc}

\usepackage{color-edits}

\definecolor{green}{rgb}{1,0.5,0}

\addauthor[Amin]{mar}{blue}
\addauthor[Yuxin]{lyx}{red}
\addauthor[Space Save]{ss}{yellow}

\begin{document}


\RUNAUTHOR{Liu and Rahimian}

\RUNTITLE{Mechanism Design for Privacy-Preserving Information Sharing in Oligopoly Competition}

\TITLE{Mechanism Design for Privacy-Preserving Information Sharing in Oligopoly Competition}

\ARTICLEAUTHORS{%
\AUTHOR{Yuxin Liu$^{1}$, M. Amin Rahimian$^{1}$}
\AFF{$^{1}$Industrial Engineering, University of Pittsburgh\\
\EMAIL{yul435@pitt.edu, rahimian@pitt.edu}}
}


\ABSTRACT{

Information sharing among competing suppliers can improve decisions under demand uncertainty, but it may also intensify strategic interaction by aligning firms' beliefs. We study a Cournot oligopoly in which a platform designs an information-sharing mechanism using participation-contingent access, external platform information, and privacy-preserving noise. The central privacy-design challenge is that noise has two opposing effects: it limits how much a firm's report improves rivals' information, but it also reduces the value of the posterior signal released by the platform. In symmetric duopoly, privacy protection alone cannot implement sharing without an external platform signal. More generally, privacy can induce firms that would otherwise not share to participate only when combined with external platform information, which preserves an informational benefit independent of competitors' reports. The $n$-firm case adds a distinct force: under reciprocal access, non-participants lose access to the pooled signal generated by others, so a baseline sharing region may exist even without privacy protection or platform signals. We characterize this sharing-feasible region and show how external information and privacy noise expand it beyond the reciprocal-access baseline. We further provide conditions under which full sharing, rather than partial participation, is the unique participation equilibrium. Finally, we show that privacy noise is valuable as an implementation tool but costly for welfare, so the platform chooses the least distortionary privacy level that implements full sharing, and privacy-induced sharing improves total surplus only when participation gains outweigh informational losses.

}

\KEYWORDS{
information sharing, differential privacy, oligopoly competition, mechanism design, Bayesian games, welfare analysis
}

\maketitle


\pagestyle{fancy}
\fancyhead{}


\section{Introduction}

Digital platforms increasingly serve as information intermediaries in modern supply chains. Suppliers operating on these platforms face demand uncertainty, cost shocks, and operational disruptions, and information sharing can improve forecasting, production planning, and system-wide efficiency \citep{ha2011sharing,guan2020demand,zhang2006competition}. Yet competing suppliers may be reluctant to disclose private information because such disclosures can alter rivals' beliefs and intensify strategic competition \citep{gal1985information,raith1996general}. As a result, markets may exhibit inefficiently low levels of information exchange, especially when firms can benefit from others' disclosures without making reciprocal contributions.

This paper studies how a platform can design information-sharing mechanisms that induce voluntary participation among competing suppliers. We consider a Cournot oligopoly with an uncertain demand state. Each supplier observes a Gaussian private signal of the demand state and chooses whether to participate in a platform-mediated sharing mechanism. A participating firm reports its signal to the platform, but the platform does not disclose raw reports. Instead, it adds independent Gaussian noise to each submitted signal and releases a posterior estimate computed from the noisy reports together with the platform's external signal. This mechanism creates a central trade-off: a larger noise variance makes each firm's report less informative to its rivals, but it also reduces the precision of the posterior estimate received by participants. The platform's design problem is therefore to choose privacy noise that reduces strategic exposure while preserving enough informational value to induce participation.

Our analysis delivers four main results. First, in a two-firm market without external platform information, privacy protection alone cannot sustain voluntary information sharing. Adding noise reduces the extent to which a firm's report informs its rival, but it also weakens the value of the information made available through the platform. Thus, when the platform's information is generated solely from firms' own reports, each firm still prefers to avoid helping its rival forecast demand, even if the shared signal is partially obscured by privacy noise.

Second, the role of participation-contingent access changes the incentives in larger markets. Following the quid-pro-quo information exchange structure in \citet{kirby1988trade}, we study a reciprocal-access rule under which firms receive the aggregated signal only if they contribute their own signals. Under this rule, a firm that opts out not only avoids disclosure but also loses access to the pooled information generated by other participants. This access-loss effect can sustain full participation in an $n$-firm market even without privacy protection or external platform information. We characterize the corresponding sharing-feasible region and show how it depends on market size and signal precision.

Third, when reciprocal access alone is insufficient, platform-owned external information and privacy design act as complementary instruments. An external signal creates informational value that does not originate from competitors' disclosures, while privacy noise limits how strongly a firm's report affects rivals' beliefs. We show that a sufficiently informative platform signal can restore participation, and that calibrated privacy noise can further expand the sharing-feasible region when the platform signal alone is not enough. This complementarity also extends to environments with heterogeneous signal precision, where individually calibrated privacy noise can be used to induce participation by firms with different information qualities.

Finally, we connect these participation results to welfare and platform design. To implement full sharing as the unique participation equilibrium, the platform must make joining profitable not only at the full-sharing profile, but at every level of existing participation, thereby ruling out partial-sharing equilibria. Our welfare analysis shows that privacy noise is useful only as an implementation device: it can relax participation constraints by reducing strategic exposure, but it also lowers posterior precision and therefore reduces surplus conditional on participation. Hence, the platform chooses the least distortionary privacy level that implements full sharing, and privacy-induced sharing is welfare improving only when the participation gain outweighs the informational loss. We also show that under heterogeneous signal precision, privacy protection must be calibrated across firms, since more informative signals create greater strategic exposure and may require stronger protection. Finally, in a sequential participation setting, early adoption can incentivize later firms to join by increasing the value of the pooled signal available through the platform.

The paper contributes to the literature on information sharing under oligopoly competition by highlighting the interaction between reciprocal access, platform information, and privacy-preserving mechanism design. The main message is that privacy should not be viewed merely as a compliance constraint. In competitive information-sharing environments, privacy is a strategic design instrument whose effectiveness depends critically on the surrounding information environment. 


The remainder of the paper is organized as follows. Section~2 reviews related literature; Section~3 presents the model and benchmark regimes; Section~4 studies privacy design and participation incentives; Section~5 analyzes welfare and platform choice; and Section~6 concludes. The appendix contains proofs, consumer- and producer-surplus derivations, and additional results on heterogeneous signals and sequential participation.

\section{Literature Review}

Our work connects three strands of literature: information sharing under competition, platform-mediated information design, and privacy-preserving mechanism design.

First, we build on the classical literature on information sharing among competing firms. Foundational work shows that disclosure can intensify competition by aligning firms' beliefs and increasing strategic responsiveness, thereby reducing firms' incentives to share private information \citep{clarke1983collusion,vives1984duopoly,gal1985information,raith1996general}. A key distinction in this literature concerns the access rule governing shared information. In some settings, firms may benefit from rivals' disclosures even without contributing their own information, which weakens voluntary sharing incentives. By contrast, \citet{kirby1988trade} studies quid-pro-quo information exchange, where access is tied to participation. Our baseline adopts this reciprocal-access structure: a firm receives the aggregated signal only if it contributes its own signal. This distinction changes the equilibrium incentives and the resulting design problem. Because opting out also means losing access to the pooled information generated by other participants, reciprocal access can sustain sharing in some multi-firm markets and determines when additional platform instruments, such as external information and privacy noise, are needed.

Second, our paper relates to work on platform-mediated information sharing. Recent studies examine how platforms use information to shape competition, pricing, and supply-chain outcomes \citep{li2021information,zha2023strategic,avinadav2025information,hu2025information,gong2024impact}. Closest to our setting, \citet{liu2021information} study a retail platform that owns superior demand information and chooses whether to disclose it to competing sellers. Our focus is different. Rather than studying whether a platform should disclose information it exclusively controls, we study whether competing firms voluntarily participate in a common information-sharing mechanism. This changes both the equilibrium question and the design problem: the platform must induce firms to contribute information, rule out partial participation, and combine external information with privacy noise when participation incentives fail.

Third, our paper contributes to the literature on privacy-preserving mechanism design. Foundational work studies how privacy guarantees can be incorporated into mechanism design without undermining incentives or efficiency \citep{mcsherry2007mechanism,nissim2012privacy,pai2013privacy,kearns2014mechanism}.
More recent work on privacy-aware data acquisition examines how platforms use privacy guarantees and compensation to induce data provision \citep{chen2018optimal,cummings2023optimal,fallah2024optimal,acemoglu2023good}. In our setting, privacy noise has a dual effect: it reduces strategic exposure by limiting how much a firm's report affects rivals' beliefs, but it also lowers the precision of the aggregated signal. We show that this trade-off implies that privacy protection alone generally cannot restore voluntary sharing. Instead, privacy becomes effective when combined with platform-owned external information, which provides informational value that does not originate from competitors' reports. Our main contribution is to characterize this complementarity between privacy design and the external information environment.

\section{The Efficiency Dilemma in Competitive and Uncertain Markets}
\label{sec:problem}

We consider a set of $n$ suppliers indexed by $i=1,\dots,n$ who compete in a Cournot market by choosing non-negative quantities $q_i \ge 0$. The market price depends on total output $Q = \sum_{i=1}^n q_i$ and an unknown common parameter $\theta \in \mathbb{R}$ that captures shared market uncertainty. Following the standard setup in the information-sharing literature (e.g., \citealt{gal1985information}), we adopt the linear inverse demand function
\[
P(Q,\theta) = a + \theta - Q,
\]
where $a > 0$ is a deterministic intercept and $\theta$ is a zero-mean random variable representing the uncertain market condition, with $\theta \sim \mathcal{N}(0,u)$ and $u>0$.

Each supplier privately observes a Gaussian signal of the demand state, $s_i \mid \theta \sim \mathcal{N}(\theta,\tau_i)$, where $\tau_i$ measures the noise variance of firm $i$'s private signal. Signals are conditionally independent across suppliers. For example, these signals may represent supplier-specific indicators of future demand, such as advance orders, quote requests, or regional sales leads, which the platform can aggregate into a common demand forecast. For the symmetric benchmark, we set $\tau_i=\tau$ for all $i$.

In addition to suppliers' private signals, the platform also observes an external, independent Gaussian signal of the demand state, $s_p \mid \theta \sim \mathcal{N}(\theta,\tau_p)$, where $\tau_p>0$ is fixed and exogenously given. This signal is collected independently by the platform and is available regardless of suppliers' participation decisions.

If a subset $J \subseteq \{1,\dots,n\}$ of suppliers participates, each firm $i\in J$ reports its signal $s_i$ to the platform. The platform applies a firm-specific privacy parameter $m_i$ and perturbs each report as
\[
\tilde{s}_i = s_i + v_i,
\qquad
v_i \sim \mathcal{N}(0,m_i),
\]
where the noise terms $\{v_i\}_{i\in J}$ are mutually independent and independent of $(\theta,\{s_i\}_{i\in J},s_p)$. Adding independent Gaussian noise provides a form of approximate differential privacy \citep{dwork2014algorithmic}. In our mechanism-design interpretation, the noise levels $m_i$ trade off strategic protection and informational precision.

The platform aggregates the noisy reports $\{\tilde{s}_j\}_{j \in J}$ together with its own signal $s_p$ and computes the posterior mean
\[
\mu_J := \mathbb{E}[\theta \mid \{ \tilde{s}_j \}_{j \in J}, s_p],
\]
which is shared with all participating firms. Under the Gaussian assumptions,
\[
\mu_J
=
V_J
\left(
\sum_{j \in J} \frac{\tilde{s}_j}{\tau + m_j}
+
\frac{s_p}{\tau_p}
\right),
\qquad
V_J
:=
\operatorname{Var}\!\left(\theta \mid \{ \tilde{s}_j \}_{j \in J}, s_p\right)
=
\left(
\frac{1}{u}
+
\sum_{j \in J} \frac{1}{\tau + m_j}
+
\frac{1}{\tau_p}
\right)^{-1}.
\]
The platform does not disclose raw supplier reports; it releases only the posterior mean. In symmetric environments with $m_j=m$, the posterior depends only on the number of sharing firms $k=|J|$. We write $\mu_k$ for the posterior mean and $V_k$ for the posterior variance of $\theta$ when $k$ symmetric firms participate. In particular, $\mu_1$ corresponds to the one-firm case, $\mu_2$ to the two-firm case, and $\mu_n$ to the full $n$-firm sharing profile.

\subsection{Information Sharing Platform and Participation Decisions}

Each supplier chooses whether to \emph{join} the platform. Let \(x_i \in \{0,1\}\) denote supplier $i$'s participation decision, where $x_i=1$ means joining the platform and $x_i=0$ means non-participation.

If supplier $i$ does not join the platform, its information set is \(\mathcal{I}_i^{NS}=\{s_i\}\). If firm $i$ participates, its information set is \(\mathcal{I}_i^{S}=\{s_i,\mu_J\}\). Firms then choose quantities according to Bayesian best responses:
\[
q_i^{NS}(\mathcal{I}_i^{NS})
=
\operatorname*{arg\,max}_{q_i\ge 0}
\mathbb{E}_{\theta\mid s_i}
\!\left[q_i(a+\theta-Q)\right],
\qquad
q_i^{S}(\mathcal{I}_i^{S})
=
\operatorname*{arg\,max}_{q_i\ge 0}
\mathbb{E}_{\theta\mid s_i,\mu_J}
\!\left[q_i(a+\theta-Q)\right].
\]

\subsection{Equilibrium Structure and Benchmark Regimes}

The model defines a three-stage game. First, the platform commits to a privacy-preserving aggregation mechanism by announcing the privacy parameters $\{m_i\}_{i=1}^n$. Second, suppliers choose participation decisions $x_i \in \{0,1\}$. Third, suppliers choose quantities given the induced information structure.

Let \(\pi_i(q_i,Q,\theta)=q_i(a+\theta-Q)\) denote firm $i$'s realized profit at total output $Q$. We write \(\pi_i^{S}\) for the equilibrium profit induced by the sharing information regime and \(\pi_i^{NS}\) for the equilibrium profit induced by the no-sharing information regime. That is,
\[
\pi_i^{S}(x_i=1,x_{-i}=\mathbf{1})
=
q_i^{S}(\mathcal{I}_i^{S})(a+\theta-Q^{S}),
\qquad
\pi_i^{NS}(x_i=0,x_{-i}=\mathbf{1})
=
q_i^{NS}(\mathcal{I}_i^{NS})(a+\theta-Q^{NS}).
\] Our focus is on whether there exists a privacy-preserving mechanism $\{m_i\}_{i=1}^n$ such that platform participation is a Nash equilibrium. We define the participation incentive for supplier $i$ as
\[
\Delta_i
:=
\mathbb{E}\!\left[
\pi_i^{S}(x_i=1,x_{-i}=\mathbf{1})
\right]
-
\mathbb{E}\!\left[
\pi_i^{NS}(x_i=0,x_{-i}=\mathbf{1})
\right],
\]
which captures the expected profit gain from joining the platform when all other suppliers participate. Full sharing is a Nash equilibrium if and only if $\Delta_i \ge 0$ for all $i \in \{1,\dots,n\}$. When the suppliers are symmetric, we write $\Delta := \Delta_i$, and the condition reduces to $\Delta \ge 0$.

\subsection{Platform Objective}

The platform chooses privacy noise to maximize expected total surplus under full participation, subject to participation being incentive compatible. Given total output $Q=\sum_{i=1}^n q_i$, consumer surplus, producer surplus, and total surplus are (the derivation is provided in \cref{app:surplus-derivations}):
\[
CS(Q,\theta)=\frac{Q^2}{2},
\qquad
PS(Q,\theta)=Q(a+\theta-Q),
\qquad
TS(Q,\theta)=(a+\theta)Q-\frac{Q^2}{2}.
\]

Let $q_i^S(m)$ denote firm $i$'s equilibrium quantity under full participation when the platform chooses privacy noise level $m$, and let $Q^S(m)=\sum_{i=1}^n q_i^S(m)$ denote the corresponding aggregate output.

To rule out partial-participation equilibria, the platform requires joining to be profitable at every possible level of existing participation. For each $k\in\{0,\ldots,n-1\}$, let $x_{-i}^{(k)}\in\{0,1\}^{n-1}$ denote any participation vector of firms other than $i$ with exactly $k$ participating firms. By symmetry, firm $i$'s expected equilibrium profit depends only on the number of participating firms in $x_{-i}^{(k)}$. Define
\[
\Delta^k(m)
=
\mathbb{E}\!\left[
\pi_i(x_i=1,x_{-i}^{(k)};m)
\right]
-
\mathbb{E}\!\left[
\pi_i(x_i=0,x_{-i}^{(k)};m)
\right],
\]
which is the expected gain from joining the platform when $k$ other firms participate. The full-participation feasible set is
\[
\mathcal{M}^{F}
=
\left\{
m\ge 0:
\Delta^k(m)>0,\quad k=0,\ldots,n-1
\right\}.
\]

The condition $\Delta^k(m)>0$ for all $k=0,\ldots,n-1$ implies that no partial-participation profile can be a Nash equilibrium. If fewer than $n$ firms participate, then any non-participating firm faces some number $k$ of existing participants and strictly prefers to join. At the full-participation profile, $\Delta^{n-1}(m)>0$ ensures that no firm wants to deviate by leaving the platform. Hence full participation is the unique pure-strategy Nash equilibrium of the participation game.

The platform's problem is
\[
\max_{m\in\mathcal{M}^{F}}
\mathbb{E}\left[
TS(Q^S(m),\theta)
\right],
\]
or equivalently,
\[
\max_{m\in\mathcal{M}^{F}}
\mathbb{E}\left[
(a+\theta)Q^S(m)-\frac{(Q^S(m))^2}{2}
\right].
\]
Thus, the platform selects a privacy level that maximizes full-participation surplus while ensuring that joining is profitable for an outsider at every possible level of existing participation.

\section{Privacy-Preserving Mechanism Design for Information Sharing}

\subsection{A Two-Firm Information-Sharing Model}

We begin with a stylized example involving two symmetric competing firms. To isolate the role of privacy noise in discouraging participation, we first consider a benchmark in which the platform does not possess any external signal. This corresponds to taking the platform signal variance \(\tau_p \to \infty\), so that any posterior released by the platform is based solely on supplier reports.

\begin{proposition}[No Sharing Equilibria for Two Firms Without External Signals]\label{prop1:negative-result}
Consider two competing symmetric firms in the absence of external information provided by the platform (i.e., $\tau_p \to \infty$).
There exists no privacy-preserving information aggregation mechanism that implements information sharing as a Nash equilibrium of the participation game.
\end{proposition}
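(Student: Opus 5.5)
The plan is to compute both sides of the participation incentive $\Delta$ in closed form within the class of linear Bayesian Cournot equilibria, and then show $\Delta<0$ for every $m\ge 0$. The first step is to identify the deviation regime. Suppose firm $1$ deviates to $x_1=0$ while firm $2$ participates. Since $\tau_p\to\infty$, the released posterior is $\mu_1=V_1\tilde s_2/(\tau+m)$, a garbling of $s_2$. It carries no information for firm $2$ beyond $s_2$, and the deviator does not observe it. Hence the deviation profile has the same information structure as the no-sharing benchmark, and $\Delta=\mathbb{E}[\pi^S]-\mathbb{E}[\pi^{NS}]$, where $\pi^{NS}$ has each firm observing only $s_i$. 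The same observation applied to $k=0$ shows that a single joiner gains nothing either, so the whole case turns on $\Delta^{1}=\Delta$.

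The second step is to reduce profits to second moments. The first-order condition $q_i=\tfrac12\bigl(a+\mathbb{E}[\theta\mid\mathcal I_i]-\mathbb{E}[q_j\mid\mathcal I_i]\bigr)$ gives $\mathbb{E}[\pi_i\mid\mathcal I_i]=q_i^2$ at an interior solution. I will take $a$ large enough to ignore nonnegativity, as is standard in this literature. So $\mathbb{E}[\pi_i]=(\mathbb{E}q_i)^2+\Var(q_i)$. In both regimes the mean output is $a/3$, so $\Delta=\Var(q_i^S)-\Var(q_i^{NS})$.

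For no sharing, set $c=u/(u+\tau)$. Then $\mathbb{E}[\theta\mid s_i]=\mathbb{E}[s_j\mid s_i]=c\,s_i$. Guessing $q_i=a/3+b s_i$ gives $b=c/(2+c)$, so $\Var(q^{NS})=b^2(u+\tau)$.

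For sharing, write $\mu=\mu_2=V_2(\tilde s_1+\tilde s_2)/(\tau+m)$ and guess $q_i=a/3+\beta s_i+\gamma\mu$. I need the Gaussian projections of $\theta$ and of $s_j$ onto $(s_i,\mu)$, written as $\mathbb{E}[\theta\mid s_i,\mu]=A_1s_i+A_2\mu$ and $\mathbb{E}[s_j\mid s_i,\mu]=B_1s_i+B_2\mu$. The fixed point then reads $2\beta=A_1-\beta B_1$ and $2\gamma=A_2-\beta B_2-\gamma$, using $\mathbb{E}[\mu\mid s_i,\mu]=\mu$. This linear system yields $\beta,\gamma$ as rational functions of $(u,\tau,m)$. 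Finally, $\Var(q^S)=\beta^2(u+\tau)+2\beta\gamma\Cov(s_i,\mu)+\gamma^2\Var(\mu)$.

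The main obstacle is the last step: proving that the resulting rational function $\Delta(u,\tau,m)$ is strictly negative for all $u,\tau>0$ and $m\ge0$. I would first normalize by setting $\tau=1$, or equivalently work with the precision ratio $r=u/\tau$ and the noise ratio $\rho=m/\tau$. Then I would clear the positive denominators and try to exhibit the numerator as minus a polynomial with nonnegative coefficients in $(r,\rho)$. Two sanity checks guide this. At $\rho=0$ the expression must reduce to the classical Gal-Or/Vives result that pooling hurts Cournot duopolists under a common demand shock. As $\rho\to\infty$, $\mu$ becomes uninformative, so $\beta\to b$, $\gamma\to 0$ and $\Delta\to0^-$.

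If the coefficient sign pattern is not immediate, I would fall back on a decomposition. The term $\gamma\mu$ adds a common component that raises the correlation of outputs; this strategic effect dominates the precision gain for firm $i$. I would show this by writing $\Var(q^S)-\Var(q^{NS})$ as the value of the extra information to firm $i$ minus the loss from rival correlation. Then I would bound the ratio of the two using $\Var(\mu)\le u$, uniformly in $\rho$.
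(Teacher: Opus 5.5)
Your proposal follows the paper's proof essentially step for step. The paper also identifies the deviation payoff with the no-sharing benchmark, solves the linear Cournot fixed point by Gaussian projection (conditioning on the equivalent statistic $\tilde s=\tfrac12(\tilde s_1+\tilde s_2)$ rather than $\mu$), and clears denominators to get the numerator $-\tau^2u^2\bigl(2m(2\tau+3u)(4\tau+9u)+\tau(\tau+2u)(\tau+3u)\bigr)$, which is exactly the ``minus a nonnegative-coefficient polynomial'' form you anticipated.

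One small slip in your sanity check: as $m\to\infty$, the coefficient $\gamma$ on $\mu$ converges to $\tau/\bigl(6(3u+2\tau)\bigr)$, not $0$. What drives $\Delta\to 0^-$ is $\Var(\mu)\to 0$ and $\Cov(s_i,\mu)\to 0$.
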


\Cref{prop1:negative-result}, proved in \cref{app:proof:prop1:negative-result}, shows that privacy noise alone cannot sustain information sharing in symmetric duopoly. The reason is that when the platform's information is generated only from firms' own reports, releasing a common posterior strengthens strategic coupling between competitors. Adding noise weakens this exposure, but it also reduces the informational value of participation. Hence, the net participation gain remains negative.

This result is consistent with the no-sharing logic in \citet{gal1985information} and highlights the need for an exogenous source of informational value. If the platform also observes an external signal that is not contributed by firms, participation grants access to information that does not arise from disclosing one's own signal to a rival. The next result shows that such platform-owned information can expand the region in which $\Delta \ge 0$ and sustain information sharing in equilibrium.

\begin{proposition}[Two-Firm Information Sharing Equilibrium with Platform Signal]
\label{prop:platform-signal-2}
Consider the symmetric two-firm Cournot model with common state $\theta \sim \mathcal{N}(0, u)$, private signals $s_i \mid \theta \sim \mathcal{N}(\theta, \tau)$ for $i= 1,2$, and a platform-provided signal $s_p \mid \theta \sim \mathcal{N}(\theta, \tau_p)$.

For fixed accuracy parameters $u, \tau, \tau_p > 0$:
\begin{enumerate}
    \item There exists a threshold $\tau_p^* = \tau_p^*(u, \tau) \in (0, \infty)$ such that the Sharing profile is the unique symmetric Bayesian Nash equilibrium if and only if $\tau_p \le \tau_p^*$.
    \item The threshold $\tau_p^*$ is uniquely characterized by the indifference condition:
    \begin{equation}
    \label{eq:taup-star-equation-2}
    \frac{\text{Var}(\mu_{2})}{9} = \beta_0(\tau_p)^2 \text{Var}(\mu_1),
    \end{equation}
where $\mu_2=\mathbb{E}[\theta\mid s_1,s_2,s_p]=\left(\frac{1}{u}+\frac{2}{\tau}+\frac{1}{\tau_p}\right)^{-1}\left(\frac{s_1}{\tau}+\frac{s_2}{\tau}+\frac{s_p}{\tau_p}\right)$, $\mu_1=\mathbb{E}[\theta\mid s_1]=\left(\frac{1}{u}+\frac{1}{\tau}\right)^{-1}\frac{s_1}{\tau}$, and $\beta_0(\tau_p)$ is the equilibrium response coefficient defined in \eqref{eq:beta0-closed-2}.

\end{enumerate}
\end{proposition}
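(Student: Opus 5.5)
The plan is to compute the participation incentive $\Delta$ in closed form as a function of $\tau_p$. I would then show that $\Delta$ changes sign exactly once in $\tau_p$, and handle the no-participation profile separately.

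Throughout I set $m=0$, as the formula for $\mu_2$ in the statement indicates. The reason is that in the two-firm benchmark the relevant comparison is driven by the platform signal, and I would treat the privacy instrument separately.

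\textbf{Step 1: the sharing profile.} When both firms join, each observes $\mu_2=\mathbb{E}[\theta\mid s_1,s_2,s_p]$. This is a sufficient statistic for everything either firm knows, so the quantity stage is a Cournot game with common information. The symmetric equilibrium is $q_i=(a+\mu_2)/3$. Since $\mathbb{E}[\pi_i\mid \mathcal{I}_i]=q_i^2$ at a best response, and $\mathbb{E}[\mu_2]=0$, the expected profit is $a^2/9+\operatorname{Var}(\mu_2)/9$.

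\textbf{Step 2: the unilateral deviation.} Suppose firm $i$ opts out while firm $j$ joins. Under reciprocal access, $j$ observes $\nu=\mathbb{E}[\theta\mid s_j,s_p]$ and $i$ observes only $s_i$. I would posit linear strategies $q_i=a/3+\beta_0\,\mu_1(s_i)$ and $q_j=a/3+\gamma\,\nu$. The first-order conditions are
\[
q_i=\tfrac12\bigl(a+\mathbb{E}[\theta\mid s_i]-\mathbb{E}[q_j\mid s_i]\bigr),\qquad
q_j=\tfrac12\bigl(a+\nu-\mathbb{E}[q_i\mid s_j,s_p]\bigr).
\]
I would evaluate these using the Gaussian projections $\mathbb{E}[\nu\mid s_i]=\rho_1\mu_1$ and $\mathbb{E}[\mu_1\mid s_j,s_p]=\rho_2\nu$, whose coefficients are explicit in $u,\tau,\tau_p$. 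This yields a $2\times 2$ linear system in $(\beta_0,\gamma)$, whose solution is the closed form \eqref{eq:beta0-closed-2}. The deviator's expected profit is then $\mathbb{E}[q_i^2]=a^2/9+\beta_0^2\operatorname{Var}(\mu_1)$. Combining with Step 1 gives
\[
\Delta(\tau_p)=\tfrac19\operatorname{Var}(\mu_2)-\beta_0(\tau_p)^2\operatorname{Var}(\mu_1),
\]
so $\Delta=0$ is exactly \eqref{eq:taup-star-equation-2}.

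\textbf{Step 3: single crossing in $\tau_p$.} This is the main obstacle. In the expression for $\Delta$:
\begin{itemize}
\item $\operatorname{Var}(\mu_2)=u-V_2$ is strictly decreasing in $\tau_p$.
\item $\operatorname{Var}(\mu_1)$ does not depend on $\tau_p$.
\item $\beta_0(\tau_p)$ must be handled through its closed form. A more informed rival correlates less with $i$'s own signal, which pushes $\beta_0$ up as $\tau_p\downarrow 0$.
\end{itemize}
I would first try to show directly that $\beta_0^2$ is nondecreasing in $\tau_p$. If that fails, I would work in the precision $p=1/\tau_p$ and show that $\Delta$, written as a rational function of $p$, has a numerator with exactly one positive root. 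The boundary values pin the sign on each side:
\begin{itemize}
\item As $\tau_p\to\infty$, $\Delta\to$ the negative value established in \Cref{prop1:negative-result} (taking $m=0$).
\item As $\tau_p\to 0$, $\operatorname{Var}(\mu_2)\to u$ and $\nu\to\theta$, so $\beta_0$ tends to an explicit limit, and I would verify $u/9>\beta_0(0)^2\operatorname{Var}(\mu_1)$.
\end{itemize}
Continuity together with single crossing then gives a unique $\tau_p^*\in(0,\infty)$, with $\Delta\ge0$ if and only if $\tau_p\le\tau_p^*$.

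\textbf{Step 4: uniqueness among symmetric profiles.} The only other symmetric profile is no sharing. I would show $\Delta^0>0$ for all $\tau_p<\infty$. A lone joiner receives $\mathbb{E}[\theta\mid s_i,s_p]$, while its rival receives nothing and keeps its strategy $q_j=a/3+\beta\,\mathbb{E}[\theta\mid s_j]$. The joiner's expected profit equals the expected squared best response, which is increasing in the precision of its own information, so joining is strictly profitable. Hence no sharing is never an equilibrium, and the sharing profile is the unique symmetric equilibrium exactly when $\Delta\ge0$, that is, when $\tau_p\le\tau_p^*$.

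The statement treats the boundary case $\tau_p=\tau_p^*$, where $\Delta=0$, as an equilibrium under the weak inequality.
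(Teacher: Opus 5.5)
Your Steps 1--2 match the paper's proof: the same sharing payoff $a^2/9+\operatorname{Var}(\mu_2)/9$, the same one-participant linear equilibrium built from two Gaussian regressions, and the same $\Delta(\tau_p)$. Steps 3--4 go further than the paper. The paper defines $\tau_p^*$ as ``the (unique) solution'' of $\Delta=0$ without a single-crossing or boundary argument. It also rules out no sharing by claiming non-participants still obtain $s_p$, which contradicts reciprocal access. Your versions of both steps are the right ones, with two corrections.

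\textbf{First correction: the direction of $\beta_0$.} Your heuristic has the sign backwards, but the direct route you list first works, so the polynomial fallback is unnecessary.
\begin{itemize}
\item Your $\rho_2$ equals $\rho:=u/(u+\tau)$ for every $\tau_p$.
\item Your $\rho_1$ equals $\delta:=\operatorname{Var}(\nu)/u$ (the paper's $V_2A(\tau_p)$). It falls from $1$ to $\rho$ as $\tau_p$ runs from $0$ to $\infty$.
\item Solving your $2\times 2$ system gives $\beta_0=(2-\delta)/(4-\rho\delta)$, with $\partial\beta_0/\partial\delta=-2(2-\rho)/(4-\rho\delta)^2<0$.
\end{itemize}
So $\beta_0$ \emph{increases} in $\tau_p$, from $1/(4-\rho)$ to $1/(2+\rho)$. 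A better-informed rival is more, not less, correlated with $s_i$, since $\operatorname{Corr}(\nu,\mu_1)=\sqrt{\rho\delta}$. Firm $i$ therefore expects the rival's output to track its own signal more closely, and it responds less.

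Both terms of $\Delta$ thus push the same way, and $\Delta$ is strictly decreasing in $\tau_p$. Combined with $\Delta(0^+)=u(1-\rho)(16-\rho)/[9(4-\rho)^2]>0$ and the negative limit from \Cref{prop1:negative-result}, this gives a unique $\tau_p^*\in(0,\infty)$.

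\textbf{Second correction: Step 4.} The rival does not keep its strategy. As in your own Step 2, participation changes the information structure and the quantity stage re-equilibrates. A value-of-information argument with the rival held fixed therefore does not settle the equilibrium comparison.

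The fix is immediate, because the one-joiner game is Step 2 with the labels swapped. The joiner's coefficient is $(2-\rho)/(4-\rho\delta)$, and the no-sharing coefficient is $1/(2+\rho)=(2-\rho)/(4-\rho^2)$. Hence
\[
\Delta^0=u(2-\rho)^2\left[\frac{\delta}{(4-\rho\delta)^2}-\frac{\rho}{(4-\rho^2)^2}\right]>0,
\]
since $x\mapsto x/(4-\rho x)^2$ is increasing on $[0,1]$ and $\delta>\rho$ whenever $\tau_p<\infty$.
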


\Cref{prop:platform-signal-2} shows that an external platform signal can restore sharing incentives in duopoly. The indifference condition in \eqref{eq:taup-star-equation-2} equates the informational value of participation, captured by $\text{Var}(\mu_{2})$, with the strategic benefit of non-participation, captured by $\beta_0(\tau_p)^2 \text{Var}(\mu_1)$. Thus, $\tau_p^*$ is the threshold level of platform-signal precision required for full sharing to be sustained as the unique symmetric equilibrium.

This result motivates the role of privacy design when the platform signal exists but is not sufficiently precise on its own. Privacy noise reduces the extent to which a firm's report affects its rival's beliefs, but it also lowers the precision of the released signal. The next proposition, proved in \cref{app:proof:prop:privacy-restores-sharing-2}, shows that when the platform provides an external signal, an appropriately chosen noise level $m$ can further restore participation incentives.

\begin{proposition}[Privacy Noise Induces Voluntary Sharing  for Two Firms]
\label{prop:privacy-restores-sharing-2}
In the two-firm symmetric Cournot model, suppose that the platform signal has finite variance, i.e., $\tau_p<\infty$, but is not precise enough
to sustain full participation under the non-private mechanism, i.e., $\tau_p^*(u,\tau)<\tau_p<\infty$.
Then there exists a finite privacy noise level $m^*>0$ such that, under the privacy-preserving mechanism with $m \ge m^*$, each firm strictly prefers to participate when the other does.
Thus, full participation becomes a unique symmetric equilibrium.
\end{proposition}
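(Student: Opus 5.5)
The plan is to study the participation incentive $\Delta(m)$ as a function of the privacy level $m$, and to send $m\to\infty$. In that limit the released posterior stops depending on the firms' reports, and the strategic-exposure channel behind \cref{prop1:negative-result} vanishes. Concretely, for any participation profile the released statistic is $\mu_J=V_J\bigl(\sum_{j\in J}\tilde s_j/(\tau+m)+s_p/\tau_p\bigr)$. As $m\to\infty$ it converges in $L^2$ to $\mu_p:=\mathbb{E}[\theta\mid s_p]=(1/u+1/\tau_p)^{-1}s_p/\tau_p$, whether one firm or two firms report. Hence, in the limit:
\begin{itemize}
\item at the sharing profile each firm observes $\{s_i,\mu_p\}$;
\item after a unilateral deviation by firm $i$, firm $i$ observes $\{s_i\}$ while its rival still observes (approximately) $\{s_j,\mu_p\}$.
\end{itemize}
So firm $i$'s deviation changes only its own information and leaves the rival's information structure unchanged.

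\textbf{Step 1: solve the quantity stage for general $m$.} For each information configuration (both share; one shares, one does not), I would solve the linear-Gaussian Cournot stage exactly as in the derivation of $\beta_0$ in \cref{prop:platform-signal-2}.
\begin{itemize}
\item I posit affine strategies, $q_i=\alpha+b\,s_i+c\,\mu$ for a participant and $q_i=\alpha_0+\beta_0 s_i$ for a non-participant.
\item The first-order condition $q_i=\tfrac12\bigl(a+\mathbb{E}[\theta\mid\mathcal I_i]-\mathbb{E}[q_j\mid\mathcal I_i]\bigr)$ then pins down the coefficients through a linear system whose entries are Gaussian projection coefficients. For example, $\mathbb{E}[s_j\mid s_i,\mu]$ and $\mathbb{E}[\mu\mid s_i]$ are rational functions of $(u,\tau,\tau_p,m)$.
\item The system is uniquely solvable (it is a contraction, since the cross-response coefficient is $1/2$), so the coefficients are continuous in $m\in[0,\infty]$.
\item Using the standard identity $\mathbb{E}[\pi_i]=\mathbb{E}[q_i^2]$ for interior Cournot best responses, both $\mathbb{E}[\pi_i^S]$ and $\mathbb{E}[\pi_i^{NS}]$ are continuous in $m$, and so is $\Delta(m)$.
\end{itemize}

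\textbf{Step 2: show $\Delta(\infty)>0$.} With the rival's information fixed at $\{s_j,\mu_p\}$, I compare firm $i$'s equilibrium payoff under the information $\{s_i,\mu_p\}$ versus $\{s_i\}$.
\begin{itemize}
\item In both cases the rival's strategy is affine in $(s_j,\mu_p)$. So $\mathbb{E}[q_i^2]$ equals a constant term plus $\tfrac14\operatorname{Var}\bigl(\mathbb{E}[\theta-q_j\mid\mathcal I_i]\bigr)$.
\item This variance is weakly larger under the finer information set $\{s_i,\mu_p\}$.
\item The inequality is strict because $\mu_p$ carries information about $\theta$ not contained in $s_i$ when $\tau_p<\infty$.
\item One must still check that the rival's equilibrium coefficients, which react to $i$'s strategy, do not reverse the comparison. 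I would verify this directly from the closed-form coefficients, showing $\Delta(\infty)$ equals a positive multiple of $\operatorname{Var}(\mu_p)$-type terms minus nothing.
\end{itemize}

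\textbf{Step 3: conclude.} By continuity, define $m^*:=\sup\{m\ge0:\Delta(m)\le0\}$. This is finite because $\Delta(\infty)>0$. It is positive because the hypothesis $\tau_p>\tau_p^*$ means, by \cref{prop:platform-signal-2}, that $\Delta(0)<0$. Then $\Delta(m)>0$ for all $m>m^*$. Strictly speaking this covers $m>m^*$; the boundary case $m=m^*$ needs either monotonicity of $\Delta$ in $m$ or a slight redefinition of $m^*$, and I would check monotonicity from the closed form.

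\textbf{Uniqueness.} I also check $\Delta^0(m)>0$, the gain from joining when the rival stays out. In that case the rival's information $\{s_j\}$ is unaffected by $i$'s choice, while $i$ gains access to $\mu=\mathbb{E}[\theta\mid s_i+v_i,s_p]$, which is strictly informative beyond $s_i$ through $s_p$. The same value-of-information argument as in Step 2 gives $\Delta^0(m)>0$ for every $m$. So no-sharing and asymmetric profiles are not equilibria, and full participation is the unique equilibrium.

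\textbf{Main obstacle.} I expect the strict positivity in Step 2 to be the main obstacle. Own information is valuable when the rival's information is fixed, but the rival's equilibrium response coefficients change with $i$'s information, and a careful closed-form computation is needed to confirm the net effect is positive. I would attack it first by computing the limiting coefficients explicitly and expressing $\Delta(\infty)$ as a difference of variances of posterior means.
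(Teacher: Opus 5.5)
Your proposal is correct and follows the paper's own route: continuity of $\Delta$ in $m$, $\Delta(0)<0$ from $\tau_p>\tau_p^*$, $L^2$ convergence of both released posteriors to $\mathbb{E}[\theta\mid s_p]$ to get $\lim_{m\to\infty}\Delta>0$, the intermediate value theorem, and a value-of-information argument to rule out the no-sharing profile. Where you depart from the paper you are more careful than it is, in two places. First, the rival's re-optimization does not overturn the limiting comparison: with $u=1$, $h=1/\tau$, $p=1/\tau_p$, the numerator of $\lim_{m\to\infty}\Delta$ is $p$ times a polynomial in $(h,p)$ with positive coefficients. Second, taking $m^*$ slightly above $\sup\{m:\Delta(m)\le 0\}$ gives strict preference for all $m\ge m^*$ with no monotonicity needed; this repairs the paper's unjustified step from a single root of $\Delta$ to positivity for every larger $m$.
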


\Cref{prop:privacy-restores-sharing-2} shows that privacy noise can restore participation when the platform signal is present but not sufficiently precise on its own. By reducing the extent to which a firm's report affects its rival's beliefs, privacy noise weakens strategic coupling; meanwhile, the external platform signal preserves an exogenous informational benefit from participation. Hence, for sufficiently large $m$, the participation gain becomes positive, and there exists a finite threshold $m^*>0$ that sustains full participation.

The two-firm model highlights the basic mechanism. We next extend the analysis to an $n$-firm Cournot market and characterize how the sharing-feasible region depends on market size $n$, platform signal noise $\tau_p$, and privacy noise $m$. Extensions to heterogeneous signal precision and sequential participation decisions are discussed in \cref{sec:heterogeneous-sequential}.

\subsection{An \texorpdfstring{$n$}{n}-Firm Information-Sharing Model}

We extend the analysis to an $n$-firm symmetric Cournot market. Participating firms receive a posterior mean computed from privatized firm reports and the platform's external signal, which improves inference while limiting strategic exposure. We first study a baseline without privacy protection or platform signals, where the platform simply aggregates raw firm reports. Unlike in the two-firm case, reciprocal access can make opting out costly in larger markets, so voluntary sharing may arise even without additional platform instruments. The next proposition characterizes this baseline equilibrium.

\begin{proposition}[$n$-Firm Sharing Equilibrium without Privacy Protection]
\label{prop:sharing_equilibrium_noprivacy}
Consider the $n$-firm Cournot information-sharing model without privacy protection ($m=0$).
Let $\Delta$ denote the unilateral participation gain of a representative firm when all other firms share.

Information sharing constitutes a Nash equilibrium if and only if $\Delta \ge 0$, 
which under symmetry is equivalent to the condition
\begin{equation}\label{eq:sharing_condition_nfirms}
\frac{n}{(n+1)^2\,(n u+\tau)}
\;\ge\;
\frac{(u+\tau)\big((n-1)u+n\tau\big)^2}
{\big((n^2-1)u^2+2n^2u\tau+2n\tau^2\big)^2}.
\end{equation}
\end{proposition}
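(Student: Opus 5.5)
The plan is to compute both sides of $\Delta$ in closed form. We restrict to linear Bayesian Nash equilibria of the quantity stage, which are unique in this Gaussian–quadratic setting. We also take $m=0$ and $\tau_p\to\infty$, so the platform releases only the pooled posterior of the participants' raw signals. The first sentence of the proposition (sharing is a Nash equilibrium iff $\Delta\ge 0$) is just the definition of $\Delta$ applied at the full-participation profile under symmetry. The content is therefore the explicit form \eqref{eq:sharing_condition_nfirms}.

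\textbf{Step 1 (full sharing).} When all $n$ firms join, each receives $\mu_n=\mathbb{E}[\theta\mid s_1,\dots,s_n]$. This is a sufficient statistic that already contains $s_i$, so all firms share the same information. The symmetric Cournot best response then gives $q^S=(a+\mu_n)/(n+1)$. Since equilibrium profit equals $(q^S)^2$ in expectation,
\[
\mathbb{E}[\pi^S]=\frac{a^2+\operatorname{Var}(\mu_n)}{(n+1)^2},
\qquad
\operatorname{Var}(\mu_n)=u-V_n=\frac{nu^2}{nu+\tau}.
\]

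\textbf{Step 2 (unilateral deviation).} Firm $i$ opts out and keeps only $s_i$; the remaining $n-1$ firms receive $\mu_{n-1}$, computed from their signals. I will posit linear strategies $q_0=\alpha+\beta s_i$ for the outsider and $q_j=\gamma+\delta\,\mu_{n-1}$ (possibly plus a weight on $s_j$) for insiders. These are substituted into the first-order conditions
\[
2q_0=a+\mathbb{E}[\theta\mid s_i]-(n-1)\mathbb{E}[q_j\mid s_i],
\qquad
2q_j=a+\mathbb{E}[\theta\mid \mathcal I_j]-\textstyle\sum_{l\ne j}\mathbb{E}[q_l\mid \mathcal I_j].
\]
The needed Gaussian projections are $\mathbb{E}[\theta\mid s_i]=\frac{u}{u+\tau}s_i$, $\mathbb{E}[\mu_{n-1}\mid s_i]=\mathbb{E}[\theta\mid s_i]$ (by conditional independence of signals), and $\mathbb{E}[s_i\mid\mu_{n-1}]=\mathbb{E}[\theta\mid\mu_{n-1}]$. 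Matching constants gives $\alpha=\gamma=a/(n+1)$, so the $a^2$ terms cancel in $\Delta$. Matching slopes gives a $2\times2$ linear system for $(\beta,\delta)$, and the deviation profit is
\[
\mathbb{E}[\pi^{NS}]=\frac{a^2}{(n+1)^2}+\beta^2(u+\tau).
\]

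\textbf{Step 3 (reduction).} By Steps 1–2, $\Delta\ge0$ iff $\frac{nu^2}{(n+1)^2(nu+\tau)}\ge \beta^2(u+\tau)$. Dividing by $u^2$ should yield \eqref{eq:sharing_condition_nfirms}, with
\[
\beta=\frac{u\,((n-1)u+n\tau)}{(n^2-1)u^2+2n^2u\tau+2n\tau^2}.
\]

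\textbf{Main obstacle.} The hardest step is solving the deviation system in Step 2 correctly. In particular, it requires specifying exactly what the insiders condition on and the cross-projection coefficients between $s_i$ and $\mu_{n-1}$.

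A quick first pass that treats each insider's information as exactly $\mu_{n-1}$ gives $\beta=u/((n+1)u+2n\tau)$. This does not match the target, so that simplification must be wrong somewhere: either insiders' own-signal weights do not collapse, or the projection of $q_0$ onto an insider's information is not simply $\beta\,\mathbb{E}[\theta\mid\mu_{n-1}]$. I would therefore redo the system with general insider weights on $s_j$ and on the released statistic. I would compute all covariances explicitly, $\operatorname{Cov}(s_i,s_j)=u$ and $\operatorname{Var}(s_i)=u+\tau$, and solve. I would then check that the resulting $\beta$ has the factor $(n-1)u+n\tau$ and the denominator $(n^2-1)u^2+2n^2u\tau+2n\tau^2$. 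As a sanity check, at $n=2$ the condition should fail for all $u,\tau>0$, consistent with \cref{prop1:negative-result}.
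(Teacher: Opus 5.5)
Your route is the paper's: a linear quantity equilibrium, constants $a/(n+1)$, and expected profit equal to $\mathbb{E}[q^2]$, so that $\Delta\ge 0$ reduces to $\frac{nu^2}{(n+1)^2(nu+\tau)}\ge\beta^2(u+\tau)$. Your target $\beta$ is also correct. However, Step 2 as written does not produce that $\beta$, and Step 3 only asserts it. The error is the projection $\mathbb{E}[\mu_{n-1}\mid s_i]=\mathbb{E}[\theta\mid s_i]$. The tower property is unavailable, because $s_i$ is not in the insiders' information. Conditional independence given $\theta$ only yields $\mathbb{E}[\bar s_{-i}\mid s_i]=\mathbb{E}[\theta\mid s_i]$, which is a statement about the raw average. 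The released posterior is the shrunk statistic $\mu_{n-1}=\kappa\,\bar s_{-i}$ with $\kappa=\frac{(n-1)u}{(n-1)u+\tau}$. Hence $\Cov(\mu_{n-1},s_i)=\kappa u$ and $\mathbb{E}[\mu_{n-1}\mid s_i]=\kappa\rho\,s_i$, where $\rho=\frac{u}{u+\tau}$. Your reverse projection $\mathbb{E}[s_i\mid\mu_{n-1}]=\mu_{n-1}$ is right. The problem is that you used the normalization of $\bar s_{-i}$ in one direction and that of $\mu_{n-1}$ in the other. Setting $\kappa=1$ in the deviator's equation is exactly what generates your value $u/((n+1)u+2n\tau)$.

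Neither of the two things you suspect is at fault, so enlarging the insiders' strategy space would not repair anything. Write the insider's own-signal weight as $d_1$. It is zero: $\sum_{k\neq i,j}\mathbb{E}[s_k\mid s_j,\bar s_{-i}]=(n-1)\bar s_{-i}-s_j$, while $\mathbb{E}[\theta\mid s_j,\bar s_{-i}]$ and $\mathbb{E}[s_i\mid s_j,\bar s_{-i}]$ depend on $\bar s_{-i}$ only. The $s_j$-coefficient equation therefore reads $2d_1=d_1$. Your other suspect is also fine: $\mathbb{E}[q_0\mid s_j,\mu_{n-1}]=\alpha+\beta\mu_{n-1}$ is correct. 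With the corrected projection, the slope equations are
\[
n\delta=1-\beta,
\qquad
2\beta=\rho\bigl(1-(n-1)\kappa\delta\bigr),
\]
so that
\[
\beta=\rho\,\frac{n-(n-1)\kappa}{2n-(n-1)\rho\kappa}
=\frac{u\bigl((n-1)u+n\tau\bigr)}{(n^2-1)u^2+2n^2u\tau+2n\tau^2}.
\]
The simplification uses $n-(n-1)\kappa=\frac{(n-1)u+n\tau}{(n-1)u+\tau}$ and $2n(u+\tau)\bigl((n-1)u+\tau\bigr)-(n-1)^2u^2=(n^2-1)u^2+2n^2u\tau+2n\tau^2$. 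This is the paper's $d_0$, and your Step 3 then yields the stated inequality. Your $n=2$ sanity check also goes through. The denominator factors as $(3u+2\tau)(u+2\tau)$, so $\beta$ reduces to $u/(3u+2\tau)$. The condition becomes $0\ge 3u\tau+\tau^2$, which fails for all $u,\tau>0$.
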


\Cref{prop:sharing_equilibrium_noprivacy}, proved in 
\cref{app:proof:prop:sharing_equilibrium_noprivacy}, characterizes the baseline reciprocal-access environment without privacy protection or platform-owned information. Under this access rule, a firm receives the aggregated signal only if it contributes its own signal. Hence, a unilateral non-participant avoids disclosure but also loses access to the pooled information generated by other firms. When the number of participating firms is large, this access loss can outweigh the strategic benefit of withholding one's signal, so full information sharing may be sustained even without additional platform instruments.

To express this condition more transparently, define the precision ratio $r=\tau/u$. After simplification, the participation condition $\Delta\ge 0$ is equivalent to $\Phi(n,r)\ge 0$, where $\Phi(n,r)=a_0(n)+a_1(n)r+a_2(n)r^2+a_3(n)r^3$, with
\begin{equation} \label{eq:PHI}
\begin{aligned}
a_0(n)&=n^4-2n^3-2n^2+2n+1, 
&\quad a_1(n)&=n^4-2n^3-3n^2-n+1,\\
a_2(n)&=-n^4+2n^3-3n^2-2n, 
&\quad a_3(n)&=-(n-1)n^2<0\quad (n\ge 2).
\end{aligned}    
\end{equation}

\begin{corollary}[Characterizing the Threshold for $n$-Firm Sharing Equilibrium]
\label{cor:threshold_sharing_monotonicity}
For each $n\ge 3$, the cubic polynomial $\Phi(n,r)$ has a unique positive root, denoted by $r_0(n)$.
Information sharing constitutes a Nash equilibrium if and only if $\tau/u\le r_0(n)$.
Moreover, the threshold $r_0(n)$ is increasing in $n$.
Therefore, as the number of firms increases, the range of signal precision ratios that sustain voluntary sharing expands monotonically.
\end{corollary}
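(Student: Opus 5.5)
We take as given from \cref{prop:sharing_equilibrium_noprivacy} and the simplification preceding the corollary that $\Delta\ge 0$ is equivalent to $\Phi(n,r)\ge 0$ with $r=\tau/u>0$. The proof then has three parts: (i) $\Phi(n,\cdot)$ has exactly one positive root; (ii) $\Phi(n,r)\ge 0$ holds exactly on $(0,r_0(n)]$; (iii) $r_0(n)$ is increasing in $n$.

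\textbf{Unique root and sign pattern.} We use Descartes' rule of signs on the coefficients $(a_0,a_1,a_2,a_3)$ for $n\ge 3$.
\begin{itemize}
\item For $a_0$: $n^4-2n^3=n^3(n-2)\ge n^3$ dominates $2n^2-2n-1$. At $n=3$ it equals $81-54-18+6+1=16>0$, and it is clearly positive for larger $n$.
\item For $a_3$: $a_3<0$ is stated.
\item For $a_2$: $a_2=-n(n^3-2n^2+3n+2)<0$, since $n^3-2n^2\ge 0$.
\item For $a_1$: its sign is $+$ or $-$. At $n=3$, $a_1=81-54-27-3+1=-2<0$; at $n=4$, $a_1=256-128-48-4+1=77>0$.
\end{itemize}
In either case the sign sequence $(+,\pm,-,-)$ has exactly one sign change, so there is exactly one positive root $r_0(n)$.

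Since $\Phi(n,0)=a_0>0$ and $\Phi(n,r)\to-\infty$ as $r\to\infty$, with a single positive root (simple by Descartes' parity), we get $\Phi>0$ on $(0,r_0)$ and $\Phi<0$ on $(r_0,\infty)$. This gives the equivalence $\Delta\ge 0 \iff \tau/u\le r_0(n)$.

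\textbf{Monotonicity.} We treat $n$ as a real variable $n\ge 3$. Because $\Phi(n,\cdot)$ changes sign from positive to negative at $r_0$, we have $\partial_r\Phi(n,r_0)<0$. The implicit function theorem gives
\[
r_0'(n)=-\frac{\partial_n\Phi}{\partial_r\Phi}\Big|_{r=r_0(n)},
\]
so it suffices to show $\partial_n\Phi(n,r_0(n))>0$.

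The plan is to avoid evaluating at the unknown root. Instead, we show directly that $\Phi(n+1,r_0(n))>0$ for integer $n$. Then the sign pattern from the first part, applied to $\Phi(n+1,\cdot)$, forces $r_0(n+1)>r_0(n)$.

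\textbf{Main obstacle.} To establish $\Phi(n+1,r_0(n))>0$, form $D(n,r)=\Phi(n+1,r)-\Phi(n,r)$ and expand it as a cubic in $r$. The leading differences are:
\begin{itemize}
\item $a_0$: difference $4n^3-\cdots$, positive.
\item $a_1$: difference $4n^3-\cdots$, positive.
\item $a_2$: difference $-(4n^3-\cdots)$, negative.
\item $a_3$: difference $-(3n^2+n)$, negative.
\end{itemize}
So $D$ need not be positive for all $r$. We therefore first bound the root: show $\Phi(n,R)<0$ at a simple explicit point such as $R=1+c/n$. That point lies outside $[0,1]$, since $\Phi(n,1)=\sum a_i=n^4-2n^3-8n^2-\dots$, which is positive for large $n$. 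Indeed, dividing by $n^4$ gives $\Phi/n^4\approx 1+r-r^2$ to leading order, so $r_0(n)\to(1+\sqrt5)/2$. On the resulting interval $r\in(0,\bar R]$, e.g.\ $\bar R=2$, we show $D(n,r)>0$. With leading part $4n^3(1+r-r^2)$, this holds on $[0,\tfrac{1+\sqrt5}{2}+\varepsilon]$ for large $n$ up to lower-order terms. This requires sandwiching $r_0(n)$ tightly near the golden ratio, and those polynomial inequalities are where the effort lies.

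For the finitely many small $n$ where the asymptotic bounds are not yet sharp, we verify $r_0(n)<r_0(n+1)$ by direct numerical root computation. If the sandwich proves too loose, the fallback is to use $\Phi(n,r_0)=0$ to eliminate the $r^3$ term in $\partial_n\Phi$ and sign the resulting quadratic in $r_0$.
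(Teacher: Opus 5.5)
Your first part is correct and uses a different tool from the paper. Descartes' rule on the sign pattern $(+,\pm,-,-)$ gives exactly one positive root, necessarily simple. The paper instead uses strict concavity of $\Phi(n,\cdot)$ on $r>0$ together with $\Phi(n,0)>0$ and $\Phi(n,r)\to-\infty$. Both routes yield $\Delta\ge0\iff\tau/u\le r_0(n)$.

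\textbf{The gap is in the monotonicity step.} Reducing to $\Phi(n+1,r_0(n))=D(n,r_0(n))>0$ is legitimate. However, your claim that $D(n,\cdot)>0$ on $(0,2]$, or on $[0,\varphi+\varepsilon]$ for large $n$ (with $\varphi=(1+\sqrt5)/2$), is false for every $n$. Using $\varphi^2=\varphi+1$ and $\varphi^3=2\varphi+1$, one gets $D(n,\varphi)=-(3n^2+11n+5)-(6n^2+14n+9)\varphi<0$. Likewise $D(n,2)=-4n^3-24n^2-46n-27<0$.

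The leading part $4n^3(1+r-r^2)$ vanishes exactly at $\varphi$, which is where $r_0(n)$ accumulates, so near $\varphi$ the sign is decided by the $-3n^2r^3$ term. Both $r_0(n)$ and the positive root of $D(n,\cdot)$ approach $\varphi$ from below at rate $1/n$. Their first-order offsets are $(1+2/\sqrt5)/n$ and $\tfrac34(1+2/\sqrt5)/n$ respectively. Your route therefore needs to trap $r_0(n)$ in a window of width $O(1/n)$. That is a second-order estimate the proposal neither sets up correctly nor carries out.

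Two further problems:
\begin{itemize}
\item The test point $R=1+c/n$ cannot be an upper bound for large $n$; indeed $\Phi(n,1)=n^4-3n^3-7n^2-n+2>0$ already for $n\ge5$.
\item Deferring ``finitely many small $n$'' to numerics is not a proof without an explicit cutoff.
\end{itemize}

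\textbf{The missing idea is the elimination you list only as a fallback.} It is exactly what makes a crude root bound sufficient. The paper first shows $r_0(n)<\varphi$ from
$\Phi(n,\varphi)=(a_0+a_2+a_3)+(a_1+a_2+2a_3)\varphi=(1-n^3-4n^2)+(1-2n^3-4n^2-3n)\varphi<0$.
It then considers $Q_n(r):=\partial_n\Phi(n,r)-\tfrac{a_3'}{a_3}\Phi(n,r)=b_0+b_1r+b_2r^2$, with $b_k=a_k'-\tfrac{a_3'}{a_3}a_k$. This agrees with $\partial_n\Phi$ at $r_0(n)$ because $\Phi(n,r_0(n))=0$. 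Since $b_0,b_1>0>b_2$ for $n\ge3$, $Q_n$ is concave. Positivity on $[0,\varphi]$ then follows from $Q_n(0)=b_0>0$ and the explicit value $Q_n(\varphi)=[(5n^3-3n+2)+(6n^3+6n^2-6n+2)\varphi]/(n(n-1))>0$. The implicit function theorem then gives $r_0'(n)>0$.

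The correction $-\tfrac{a_3'}{a_3}\Phi$ is positive to the right of $r_0(n)$. That is exactly where $\partial_n\Phi$, like your $D$, turns negative near $\varphi$ (for instance $\partial_n\Phi(n,\varphi)=-(3n^2+8n)-(6n^2+8n+3)\varphi<0$).

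Your discrete route can be repaired the same way. Replace $D$ by $E(n,r)=\Phi(n+1,r)-\lambda_n\Phi(n,r)$ with $\lambda_n=a_3(n+1)/a_3(n)=(n+1)^2/(n(n-1))$. Then:
\begin{itemize}
\item the cubic terms cancel, and $E(n,r_0(n))=\Phi(n+1,r_0(n))$;
\item the $r^2$-coefficient is $(-n^4-n^3+3n^2+9n+6)/(n-1)<0$ for $n\ge3$, so $E(n,\cdot)$ is concave;
\item $E(n,0)=(n^4+2n^3+4n^2+5n+1)/n>0$;
\item $E(n,\varphi)=[(5n^3+10n^2+2n-1)+(6n^3+18n^2+9n-1)\varphi]/(n(n-1))>0$.
\end{itemize}
Hence $E(n,\cdot)>0$ on $[0,\varphi]\ni r_0(n)$, which gives $r_0(n+1)>r_0(n)$.
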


\Cref{cor:threshold_sharing_monotonicity}, proved in \cref{app:proof:cor:threshold_sharing_monotonicity}, shows that larger markets expand the sharing-feasible region. Intuitively, as more firms participate, the aggregate signal becomes more informative, making deviation more costly because an opting-out firm loses access to a richer pooled signal.



\Cref{fig:sharing_regions_baseline_platform}\textup{(a)} illustrates the monotone increase of $r_0(n)$ in $n$. Thus, in the reciprocal-access baseline, larger markets can sustain information sharing for a wider range of private-signal noise levels.

The analysis above shows that full information sharing may arise endogenously in an $n$-firm market, but only within the sharing-feasible region. Outside this region, additional platform instruments are needed. We next examine how an external platform-owned signal affects equilibrium participation.

\begin{proposition}[$n$-Firm Information-Sharing Equilibrium with Platform Signal]
\label{prop:platform-signal-n}
Consider the $n$-firm Cournot model with fixed parameters $n, u, \tau, \tau_p > 0$. 
\begin{enumerate}
    \item There exists a threshold $\tau_p^* = \tau_p^*(n, u, \tau) \in (0, \infty)$ such that the Sharing profile is the symmetric Bayesian Nash equilibrium if and only if $\tau_p \le \tau_p^*$.
    \item The threshold $\tau_p^*$ is uniquely characterized by the indifference condition:
    \begin{equation}
    \label{eq:taup-star-equation}
    \frac{\text{Var}(\mu_{n})}{(n+1)^2} = \beta_0(\tau_p)^2 \text{Var}(\mu_1),
    \end{equation}
where $ \mu_n = \left(\frac{1}{u}+\frac{n}{\tau}+\frac{1}{\tau_p}\right)^{-1} \left(\sum_{i=1}^n \frac{s_i}{\tau}+\frac{s_p}{\tau_p}\right), \quad \mu_1 = \left(\frac{1}{u}+\frac{1}{\tau}\right)^{-1} \frac{s_1}{\tau}$.

\end{enumerate}
\end{proposition}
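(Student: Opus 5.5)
I will compute the two expected profits in closed form, reduce the participation gain to the difference appearing in \eqref{eq:taup-star-equation}, and then argue monotonicity and limits in $\tau_p$.

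\textbf{Step 1 (sharing profile).} At full participation every firm observes $(s_i,\mu_n)$ with $m=0$. Since $\mu_n$ is a sufficient statistic for $\theta$ given all signals, $\mathbb{E}[\theta\mid s_i,\mu_n]=\mu_n$. Each firm's information is then effectively common, so the symmetric Cournot equilibrium is $q^S=(a+\mu_n)/(n+1)$. Expected profit is $\mathbb{E}[(q^S)^2]$, which gives
\[
\mathbb{E}[\pi^S]=\frac{a^2+\operatorname{Var}(\mu_n)}{(n+1)^2}.
\]

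\textbf{Step 2 (unilateral deviation).} Under reciprocal access, a deviator $i$ sees only $s_i$. The other $n-1$ firms see $\mu_{n-1}$, the posterior from their $n-1$ signals plus $s_p$. The deviator's signal no longer enters this posterior.

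I posit linear strategies: $q_i=\alpha_0+\beta_0 s_i$ for the deviator and $q_j=\alpha+\gamma\mu_{n-1}$ for the others. Solving the best-response system gives the constants and the coefficient $\beta_0(\tau_p)$. This uses that $\mathbb{E}[\mu_{n-1}\mid s_i]$ is proportional to $\mathbb{E}[\theta\mid s_i]=\mu_1$, and that $\mathbb{E}[\theta\mid \mu_{n-1}]=\mu_{n-1}$ and $\mathbb{E}[s_i\mid\mu_{n-1}]=\mu_{n-1}$.

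The deviator's profit equals its squared quantity, so
\[
\mathbb{E}[\pi^{NS}]=\alpha_0^2+\mathbb{E}\bigl[(\beta_0 s_i)^2\bigr].
\]
Absorbing the signal-extraction factor into $\beta_0$ rewrites the second term as $\beta_0^2\operatorname{Var}(\mu_1)$. The constants match: $\alpha_0=a/(n+1)$, because the mean quantities are the standard Cournot ones. Hence
\[
\Delta=\frac{\operatorname{Var}(\mu_n)}{(n+1)^2}-\beta_0(\tau_p)^2\operatorname{Var}(\mu_1).
\]
The sharing profile is an equilibrium if and only if $\Delta\ge 0$. This is exactly the indifference condition \eqref{eq:taup-star-equation} at equality.

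\textbf{Step 3 (threshold).} $\operatorname{Var}(\mu_n)=u-V_n$ is strictly decreasing in $\tau_p$. It goes to $u$ as $\tau_p\to 0$ and to its no-platform value as $\tau_p\to\infty$.

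For the right-hand side, as $\tau_p$ grows the rivals' signal $\mu_{n-1}$ becomes less informative. Their responsiveness $\gamma$ falls, so the deviator reacts more strongly to its own signal; I expect $\beta_0$ to be increasing in $\tau_p$. I would verify this from the closed form, which should be rational in the precision $1/\tau_p$.

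At the two ends of $\tau_p$:
\begin{itemize}
\item As $\tau_p\to 0$, the rivals know $\theta$ exactly. Then $\Delta\to u/(n+1)^2-\beta_0(0)^2\operatorname{Var}(\mu_1)>0$, because the deviator's information is strictly coarser; I would check this inequality directly.
\item As $\tau_p\to\infty$, $\Delta$ must be negative for existence of a finite threshold. For $n=2$ this is \cref{prop1:negative-result}. For general $n$ it holds exactly when the baseline condition of \cref{cor:threshold_sharing_monotonicity} fails.
\end{itemize}
The statement's $\tau_p^*\in(0,\infty)$ should therefore be read under that proviso. Otherwise $\Delta\ge0$ for all $\tau_p$ and the threshold is $+\infty$.

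Strict monotonicity of $\Delta$ in $\tau_p$ together with continuity and the intermediate value theorem then gives a unique crossing $\tau_p^*$. It also gives $\Delta\ge 0$ if and only if $\tau_p\le\tau_p^*$.

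\textbf{Main obstacle.} The hardest step is proving that $\beta_0(\tau_p)^2\operatorname{Var}(\mu_1)$ is nondecreasing in $\tau_p$, or more directly that $\Delta$ is strictly decreasing. I would first try substituting $x=1/\tau_p$ and differentiating the closed form of $\Delta$. If monotonicity of $\beta_0$ fails somewhere, I would instead show the ratio of the two terms is monotone, so that the crossing is still unique.
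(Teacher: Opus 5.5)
Your Steps 1--2 coincide with the paper's proof. You get the same full-sharing payoff $(a^2+\Var(\mu_n))/(n+1)^2$, the same linear deviation equilibrium from the regressions of $\mu_{n-1}$ on $\mu_1$ and of $\mu_1$ on $\mu_{n-1}$, and the same gain $\Delta(\tau_p)=\Var(\mu_n)/(n+1)^2-\beta_0(\tau_p)^2\Var(\mu_1)$.

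The paper stops there and simply \emph{defines} $\tau_p^*$ as ``the (unique)'' root of $\Delta=0$. So existence, uniqueness, and the ``if and only if $\tau_p\le\tau_p^*$'' structure are never established; your Step 3 supplies exactly what is missing.

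Your proviso is also correct, and you should state it sharply. Letting $\tau_p\to\infty$ in the closed form for $\beta_0$ reproduces the deviation coefficient $d_0$ of \Cref{prop:sharing_equilibrium_noprivacy} (as $\beta_0\,u/(u+\tau)$), and hence its gain. So for $n\ge 3$ and $\tau/u\le r_0(n)$, $\Delta>0$ for every finite $\tau_p$, and part~1 of the statement is false as written. For instance, with $n=3$ one has $\Phi(3,r)=16-2r-60r^2-18r^3$, so this happens whenever $\tau/u$ is below about $0.47$.

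Your ``main obstacle'' closes in a few lines.
\begin{itemize}
\item \emph{The regression coefficient is constant.} Let $g$ be the coefficient in $\mathbb{E}[\mu_1\mid\mu_{n-1}]=g\,\mu_{n-1}$; this is the paper's $\gamma$, not your rivals' slope. Since $\mathbb{E}[s_1\mid\mu_{n-1}]=\mu_{n-1}$, as you noted, $g=u/(u+\tau)$, independent of $\tau_p$.
\item \emph{Monotonicity.} Set $c=(n-1)/n$, $A=(n-1)/\tau+1/\tau_p$, and $\delta=\Cov(\mu_{n-1},\mu_1)/\Var(\mu_1)=A/(1/u+A)$. The fixed point gives $\beta_0=(1-c\delta)/(2-cg\delta)>0$, with $\partial\beta_0/\partial\delta=c(g-2)/(2-cg\delta)^2<0$. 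Since $\delta$ is strictly decreasing in $\tau_p$, $\beta_0$ is strictly increasing. Together with $\Var(\mu_n)$ strictly decreasing, $\Delta$ is continuous and strictly decreasing in $\tau_p$.
\item \emph{The $\tau_p\to 0$ endpoint.} Here $\delta\to 1$, so $\beta_0\to 1/(2n-(n-1)g)$ and $\Delta(0^+)=u\bigl[(n+1)^{-2}-g\,(2n-(n-1)g)^{-2}\bigr]$. This is positive because $f(g)=(2n-(n-1)g)^2-(n+1)^2g$ is strictly decreasing on $[0,1]$, $f(1)=0$, and $g<1$.
\end{itemize}

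The direct check at $\tau_p\to 0$ is genuinely needed. ``The deviator's information is strictly coarser'' is not an argument in a game, because the rivals' strategies also change with the deviation; that is exactly why sharing can fail.

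With these pieces your plan proves the corrected statement. A finite $\tau_p^*>0$ exists, is unique, and characterizes sharing if and only if the baseline gain is negative, i.e.\ when $n=2$ or $\tau/u>r_0(n)$. Otherwise sharing is an equilibrium for every $\tau_p$.
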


\Cref{prop:platform-signal-n}, proved in \cref{app:proof:prop:platform-signal-n}, shows that an external platform signal can restore information sharing when the baseline reciprocal-access mechanism is insufficient. The indifference condition in \eqref{eq:taup-star-equation} equates the informational value of participation with the strategic benefit of non-participation. Thus, $\tau_p^*$ separates regimes in which platform-owned information alone can sustain sharing from those in which it cannot.


\begin{figure}[t]
    \centering
    \begin{subfigure}[t]{0.31\textwidth}
        \centering
        \includegraphics[width=\linewidth]{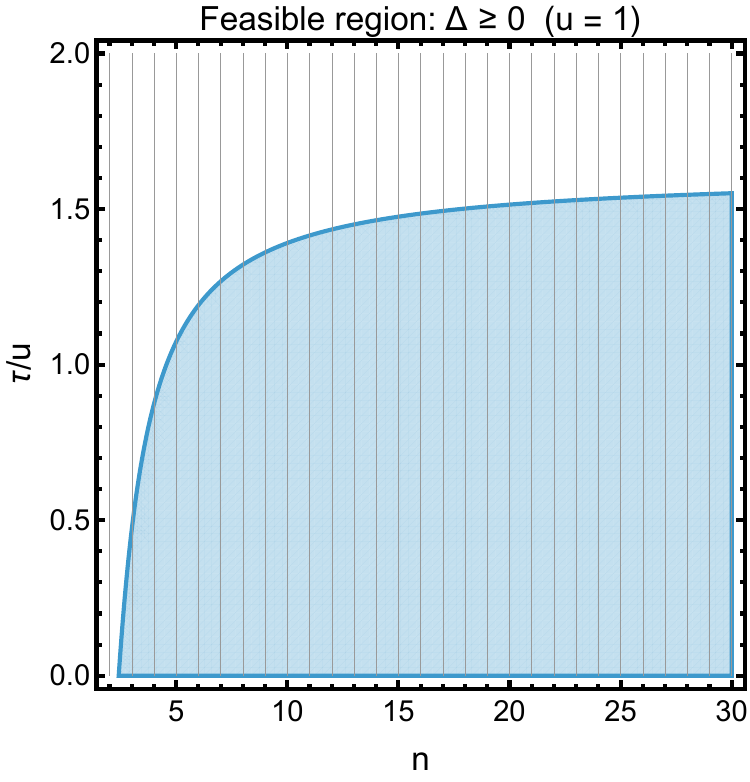}
        \subcaption{No platform signal}
        \label{fig:sharing_region_no_privacy}
    \end{subfigure}\hfill
    \begin{subfigure}[t]{0.31\textwidth}
        \centering
        \includegraphics[width=\linewidth]{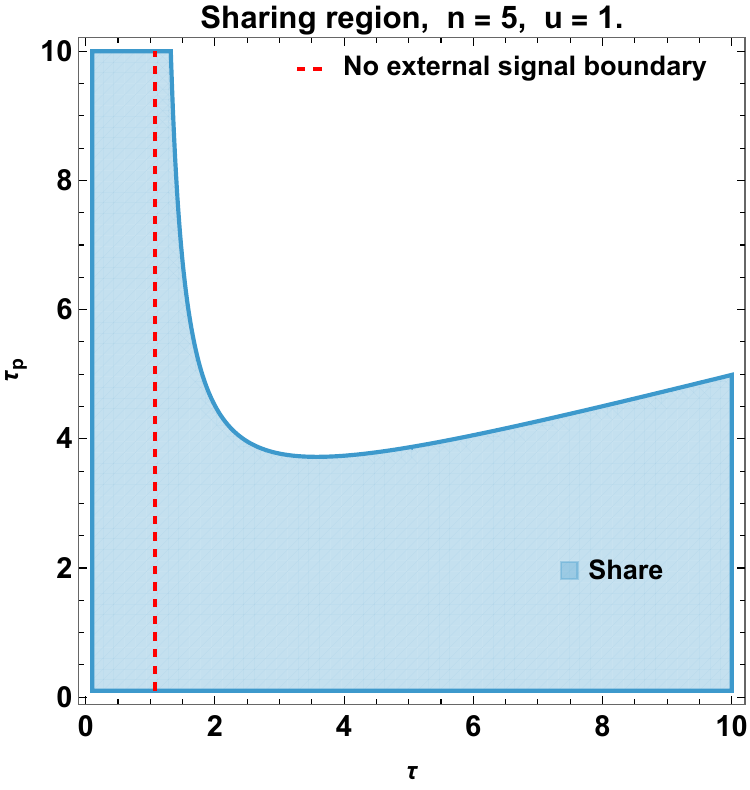}
        \subcaption{Platform signal, $n=5$}
        \label{fig:sharing_region_with_platform_signal_n5}
    \end{subfigure}\hfill
    \begin{subfigure}[t]{0.31\textwidth}
        \centering
        \includegraphics[width=\linewidth]{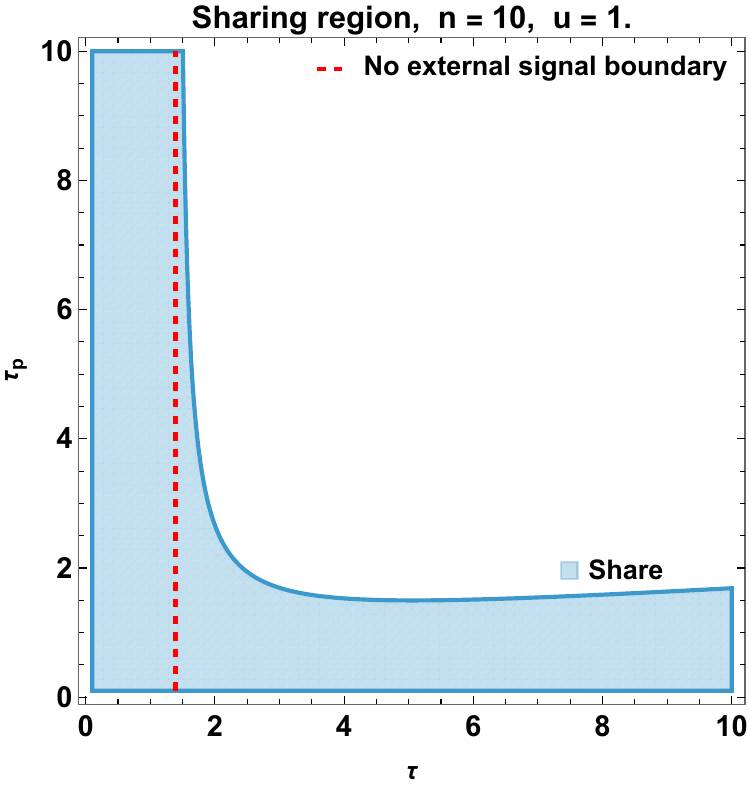}
        \subcaption{Platform signal, $n=10$}
        \label{fig:sharing_region_with_platform_signal_n10}
    \end{subfigure}
    \caption{\textbf{Sharing regions under reciprocal access and platform information.}
    Panel (a) shows the no-platform-signal benchmark, where the boundary is defined by $\Phi(n,\tau/u)=0$ in \cref{eq:PHI}, with asymptote $r_\infty=(1+\sqrt{5})/2$. Panels (b) and (c) show the sharing-feasible regions under an external platform signal for $n=5$ and $n=10$, respectively; shaded regions indicate where sharing is a symmetric Bayesian Nash equilibrium in the $(\tau,\tau_p)$-plane. The dashed vertical line is the no-external-signal benchmark.}
    \label{fig:sharing_regions_baseline_platform}
    \label{fig:sharing_region_with_platform_signal}
\end{figure}


\Cref{fig:sharing_regions_baseline_platform}\textup{(b)--(c)} illustrate how platform-owned information expands the sharing-feasible region. The comparison between $n=5$ and $n=10$ also shows that market size affects participation incentives through competing forces: reciprocal access becomes more valuable in larger markets, but common information can also intensify Cournot competition. When the platform signal is too noisy to sustain participation on its own, additional design instruments are needed. We next introduce privacy-preserving aggregation.

\begin{proposition}[Privacy Noise Induces Voluntary Sharing among $n$ Firms]
\label{prop:privacy-restores-sharing-n}
Consider the $n$-firm symmetric Cournot model described above.
Suppose the platform signal has finite variance, i.e., $\tau_p<\infty$, but is not precise enough
to sustain full participation under the non-private mechanism, i.e.,
$\tau_p^*(n,u,\tau)<\tau_p<\infty$.
Then there exists a finite privacy noise level $m^*>0$ such that, under the privacy-preserving
mechanism with noise variance $m\ge m^*$, every firm strictly prefers to participate when all
other firms participate.
Consequently, the sharing profile is sustained as a symmetric equilibrium.
\end{proposition}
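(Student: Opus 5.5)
The plan is to treat the participation gain as an explicit function $\Delta(m)$ of the privacy level. I will show that $\Delta$ is continuous on $[0,\infty)$ and has a strictly positive limit $\Delta_\infty:=\lim_{m\to\infty}\Delta(m)$. The threshold is then
\[
m^*:=1+\sup\{m\ge 0:\Delta(m)\le 0\}.
\]
This set is bounded because $\Delta(m)\to\Delta_\infty>0$, and it is nonempty because $\Delta(0)<0$ by \cref{prop:platform-signal-n} under the hypothesis $\tau_p>\tau_p^*(n,u,\tau)$. Hence $m^*$ is finite and positive, and $\Delta(m)>0$ for every $m\ge m^*$. This route deliberately avoids proving that $\Delta$ is monotone in $m$, which I expect to be messy.

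\emph{Step 1 (closed form of $\Delta(m)$).} I will use the linear-Gaussian equilibrium structure already used for \cref{prop:platform-signal-n}. In a linear Bayesian Cournot equilibrium, the first-order condition gives $q_i=\tfrac12\,\mathbb{E}[a+\theta-Q_{-i}\mid\mathcal I_i]$, so firm $i$'s expected profit equals $\mathbb{E}[q_i^2]=\bar q_i^2+\Var(q_i)$. Because all signals have mean zero, the mean quantity is $a/(n+1)$ in both the sharing and the deviation profiles, so $\Delta(m)$ reduces to a difference of variances.

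\begin{itemize}
\item \emph{Sharing profile.} Every firm $j$ observes $(s_j,\mu_n(m))$, where $\mu_n(m)$ is built from $n$ reports with noise variance $\tau+m$ and from $s_p$. Firm $j$ plays $q_j=a/(n+1)+b(m)s_j+c(m)\mu_n(m)$. The coefficients come from two linear equations obtained by projecting $\theta$ and $s_k$ ($k\ne j$) on $(s_j,\mu_n)$.
\item \emph{Deviation profile.} Firm $i$ observes only $s_i$ and plays $a/(n+1)+\beta_0(m)s_i$. The other $n-1$ firms observe $(s_j,\mu_{n-1}(m))$, and $\mu_{n-1}(m)$ no longer contains $\tilde s_i$. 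Solving the best-response system yields $\beta_0(m)$ and $\Delta(m)=\Var(q_i^S)-\beta_0(m)^2\Var(s_i)$, the analogue of \eqref{eq:taup-star-equation}.
\end{itemize}

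All coefficients are rational functions of $m$ whose denominators stay positive, because the covariance matrices are nondegenerate. Continuity of $\Delta$ on $[0,\infty)$, with a finite limit as $m\to\infty$, follows.

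\emph{Step 2 (the limit).} As $m\to\infty$, the weights $1/(\tau+m)$ vanish. Hence $\mu_n(m)$ and $\mu_{n-1}(m)$ both converge in $L^2$ to $\mathbb{E}[\theta\mid s_p]=\frac{u}{u+\tau_p}s_p$, and the equilibrium coefficients converge to those of a limiting game.
\begin{itemize}
\item In the limiting sharing game, every firm observes $(s_j,s_p)$, with $s_p$ a public signal.
\item In the limiting deviation game, firm $i$ observes only $s_i$ and the rivals observe $(s_j,s_p)$.
\end{itemize}
The limit is clean because firm $i$'s own report no longer enters the rivals' information in either game. The strategic-exposure channel, which drives the negative result in \cref{prop1:negative-result}, therefore disappears. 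What remains is purely the value to firm $i$ of the public signal $s_p$.

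\emph{Step 3 (positivity of $\Delta_\infty$; main obstacle).} I will solve both limiting games explicitly.
\begin{itemize}
\item In the sharing game, the symmetric equilibrium is $q_j=a/(n+1)+b s_j+c s_p$. The coefficients are obtained from the best-response conditions using $\mathbb{E}[\theta\mid s_j,s_p]$ and $\mathbb{E}[s_k\mid s_j,s_p]$.
\item In the deviation game, I solve the coupled equations for $\beta_0$ and for the rivals' coefficients on $(s_j,s_p)$.
\item I then show $\Var(b s_i+c s_p)>\beta_0^2\Var(s_i)$.
\end{itemize}
This is where the difficulty lies, since rivals re-optimize and a simple ``more information is better'' argument does not apply directly.

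My first attempt will be to write both sides in terms of $r=\tau/u$ and $\rho=\tau_p/u$, clear denominators, and verify that the resulting polynomial has positive coefficients for $n\ge 2$, in the spirit of the $\Phi(n,r)$ analysis in \cref{cor:threshold_sharing_monotonicity}. As a fallback, I will decompose the deviator's loss into two pieces. The first is the loss from dropping $s_p$ while holding the rivals' strategies fixed, which is strictly positive by a standard Gaussian value-of-information argument since $\tau_p<\infty$. The second is the effect of the rivals' re-optimization. I will then bound the second term by the first, using that rivals' responses to $s_p$ are scaled by the Cournot factor $1/(n+1)$.
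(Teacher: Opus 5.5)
Your proposal is correct. It follows the paper's skeleton: $\Delta(0)<0$ inherited from \cref{prop:platform-signal-n}, continuity of $\Delta$ in $m$, and a strictly positive limit as $m\to\infty$, where both released statistics collapse to $\mathbb{E}[\theta\mid s_p]$. You differ from the paper at the two steps it handles loosely.

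\emph{The threshold.} The paper uses the intermediate value theorem to find a zero $m^*$ of $\Delta$ and then asserts $\Delta(m)>0$ for all $m>m^*$. That needs monotonicity in $m$, which is never shown, and the paper's own numerics describe joining incentives as non-monotone in $m$. Your choice $m^*=1+\sup\{m\ge0:\Delta(m)\le0\}$ gets the ``for all $m\ge m^*$'' conclusion from the positive limit alone.

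\emph{The limit.} The paper justifies $\lim_{m\to\infty}\Delta>0$ with a single-agent value-of-information remark. You rightly note that rivals play differently in the two limiting games, so the comparison has to be computed. Your first attempt does go through. In precisions $h=1/u$, $k=1/\tau$, $p=1/\tau_p$, set $E=2h+(n+1)k+2p$ and $G=n(2h+k)E+2kp$.
\begin{itemize}
\item The limiting sharing game has $q_j=\frac{a}{n+1}+b\,s_j+g\,s_p$ with $b=k/E$ and $g=2p/((n+1)E)$.
\item The limiting deviation game has $\beta_0=k\,(n(2h+k)+2p)/G$.
\end{itemize}
Then
\[
h(n+1)^2E^2G^2\,\Delta_\infty
=4p\Big[(h+(n+1)k+p)\,G^2-(n+1)^2k(h+k)(2h+nk+2p)\big(n(2h+k)E+(k+E)p\big)\Big].
\]
Expand the bracket in powers of $p$. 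Each coefficient is a polynomial in $(h,k)$ with nonnegative coefficients for $n\ge2$, and the constant term is strictly positive; for instance, the $p^3$ coefficient is $16n^2h^2+4(n+1)(3n-1)hk$. So $\Delta_\infty>0$ for every $\tau_p<\infty$, and your fallback decomposition is unnecessary. The factor $p$ reflects that the two limiting games coincide when $\tau_p=\infty$.

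\emph{One small repair to Step 1.} Nondegenerate covariances make the Gaussian projections well defined. By themselves they do not make the equilibrium coefficient system nonsingular for every $m$. The clean argument is that linear-demand Cournot is a potential game with strictly concave quadratic potential. The coefficient equations are therefore the normal equations of a strictly concave problem (Radner's team argument), and nondegeneracy then gives a unique solution that is continuous in $m$. The paper glosses over the same point.
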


\Cref{prop:privacy-restores-sharing-n}, proved in \cref{app:proof:prop:privacy-restores-sharing-n}, establishes the complementarity between external platform information and privacy design. The platform signal preserves an informational benefit from participation, while privacy noise reduces the extent to which a firm's report affects its rivals' beliefs. Thus, when the platform signal alone is insufficient, a sufficiently large noise level can restore participation incentives.

\begin{figure}[t]
    \centering
    \includegraphics[width=0.4\textwidth]{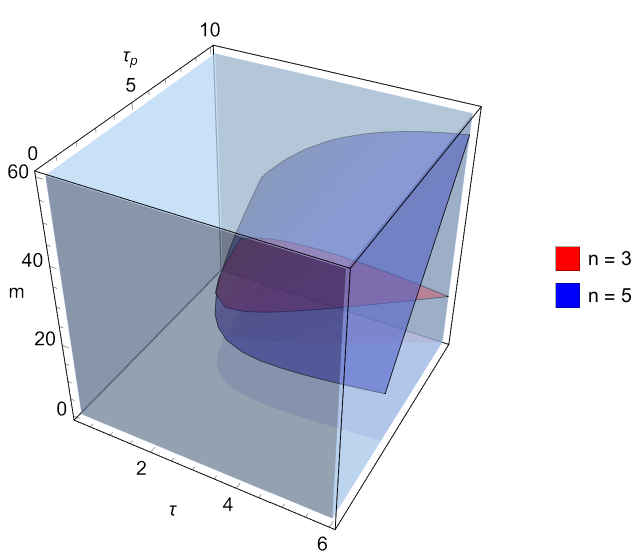}
    \caption{
    \textbf{Indifference surfaces under the privacy-preserving mechanism.}
    The surfaces are defined by $\Delta(\tau,\tau_p,m)=0$ in the $(\tau,\tau_p,m)$ space. Parameter values above the surface correspond to regions where privacy noise restores participation. The surfaces are plotted for $u=1$ and $a=1$, with red for $n=3$ and blue for $n=5$.
    }
    \label{fig:sharing_region_under_privacy_mechanism}
\end{figure}

\Cref{fig:sharing_region_under_privacy_mechanism} illustrates how the required privacy noise varies with the information environment. A noisier platform signal, reflected by a larger $\tau_p$, generally requires stronger privacy protection because participation relies more heavily on reducing strategic exposure. By contrast, the effect of firms' private-signal noise $\tau$ is not globally monotone: increasing $\tau$ can both reduce the value of private information and increase reliance on shared or platform-provided information.

The comparison across $n$ shows that market size changes the strength of competitive externalities. Larger markets may expand the non-sharing region, but the required privacy noise is not uniformly higher for larger $n$ because the effect depends on the precision of the platform signal. Overall, privacy expands the sharing-feasible region only when it is calibrated jointly with firms' signal precision, platform-signal accuracy, and market size.

\section{Welfare Effects and Platform Choice of Privacy Noise}
\label{sec:welfare-platform-choice}

We now examine the welfare implications of privacy-preserving information sharing. Privacy noise can relax firms' participation constraints by reducing strategic exposure, but it also lowers the precision of the posterior signal released by the platform. Thus, conditional on inducing participation, the platform faces a trade-off between incentive compatibility and informational efficiency.

Let $\sigma_S^2(m)$ denote the variance of the posterior mean under full platform participation:
\[
\sigma_S^2(m)
=
u-
\left(
\frac{1}{u}
+
\frac{n}{\tau+m}
+
\frac{1}{\tau_p}
\right)^{-1}.
\]
Under full participation, expected total surplus is
\[
\mathbb{E}[TS^S(m)]
=
\frac{n(n+2)}{2(n+1)^2}
\left(a^2+\sigma_S^2(m)\right).
\]
Under no information sharing, define
\[
\sigma_P^2=\frac{u^2}{u+\tau},
\qquad
\rho=\frac{u}{u+\tau},
\qquad
\beta=\frac{1}{2+(n-1)\rho}.
\]
Then
\[
\mathbb{E}[TS^{NS}]
=
\frac{n(n+2)a^2}{2(n+1)^2}
+
\frac{n\beta^2\sigma_P^2}{2}
\left[3+(n-1)\rho\right].
\]
Hence the total-surplus gain from platform participation is
\[
\Delta TS(m)
=
\frac{n}{2}
\left[
\frac{n+2}{(n+1)^2}\sigma_S^2(m)
-
\beta^2\sigma_P^2[3+(n-1)\rho]
\right].
\]
The derivations of consumer surplus, producer surplus, and additional comparative statics with respect to $u$, $\tau$, and $n$ are provided in \cref{app:welfare-derivations}.

\begin{proposition}[Effect of privacy noise]
\label{prop:privacy-noise-welfare}
The expected consumer surplus, producer surplus, and total surplus under full
platform participation are strictly decreasing in the privacy noise $m$. Moreover,
because the no-sharing regime is independent of $m$, the surplus gains
$\Delta CS$, $\Delta PS$, and $\Delta TS$ are also strictly decreasing in $m$.
\end{proposition}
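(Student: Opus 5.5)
The plan is to reduce all three surplus measures under full participation to closed forms that depend on $m$ only through the posterior-mean variance $\sigma_S^2(m)$, and then show that this variance is strictly decreasing in $m$. Monotonicity of the surplus gains then follows at once, because the no-sharing benchmark does not involve $m$.

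\textbf{Step 1: equilibrium quantities under full participation.} With all $n$ symmetric firms in the platform, each firm observes $(s_i,\mu_n)$. The first step is to argue that the symmetric linear Bayesian equilibrium takes the form
\[
q_i^S=\frac{a+\mu_n}{n+1}.
\]
The reason is that $\mu_n$ is the platform's posterior given all noisy reports and $s_p$, and it is common knowledge among participants. Conditioning on the common $\mu_n$ therefore yields the standard complete-information-like Cournot solution, with $\mu_n$ replacing $\theta$. I would verify this by checking the first-order condition
\[
a+\mathbb{E}[\theta\mid s_i,\mu_n]-2q_i-(n-1)\,\mathbb{E}[q_j\mid s_i,\mu_n]=0 .
\]
This is the step I expect to need the most care. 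The issue is that $s_i$ carries information beyond the noisy report $\tilde s_i=s_i+v_i$ when $m>0$, so $\mathbb{E}[\theta\mid s_i,\mu_n]$ need not equal $\mu_n$. If it does not, I would instead solve for a linear equilibrium $q_i=\alpha+b\,s_i+c\,\mu_n$. In either case, the quantity to control is $\mathbb{E}[Q^2]$, which is a positive combination of second moments that I expect to be monotone in $m$. For the proposal I take the closed form stated in the paper as given:
\[
\mathbb{E}[TS^S(m)]=\frac{n(n+2)}{2(n+1)^2}\bigl(a^2+\sigma_S^2(m)\bigr).
\]
Under this form $Q^S=n(a+\mu_n)/(n+1)$, and the surplus moments follow directly.

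\textbf{Step 2: closed forms for each surplus measure.} Using $\mathbb{E}[\mu_n]=0$, $\mathbb{E}[\theta\mu_n]=\mathbb{E}[\mu_n^2]=\sigma_S^2(m)$ (the projection property of the posterior mean), and $\mathbb{E}[\mu_n^2]=u-V_n(m)$, I obtain
\[
\mathbb{E}[CS^S]=\frac{n^2}{2(n+1)^2}\bigl(a^2+\sigma_S^2\bigr),
\]
\[
\mathbb{E}[PS^S]=\frac{n}{n+1}\bigl(a^2+\sigma_S^2\bigr)-\frac{n^2}{(n+1)^2}\bigl(a^2+\sigma_S^2\bigr)=\frac{n}{(n+1)^2}\bigl(a^2+\sigma_S^2\bigr).
\]
Summing these gives the total-surplus formula above. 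Each coefficient is strictly positive for $n\ge 1$.

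\textbf{Step 3: monotonicity in $m$.} From
\[
\sigma_S^2(m)=u-\left(\frac1u+\frac{n}{\tau+m}+\frac1{\tau_p}\right)^{-1},
\]
the term $n/(\tau+m)$ is strictly decreasing in $m$. Hence the posterior precision decreases, $V_n(m)$ strictly increases, and $\sigma_S^2(m)$ strictly decreases. Explicitly,
\[
\frac{d\sigma_S^2}{dm}=-V_n(m)^2\,\frac{n}{(\tau+m)^2}<0,
\]
so $CS$, $PS$ and $TS$ under full participation are all strictly decreasing in $m$. Finally, $\Delta CS$, $\Delta PS$ and $\Delta TS$ each equal the corresponding full-participation quantity minus a no-sharing term that depends only on $(n,u,\tau,a)$. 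Each difference therefore inherits the strict monotonicity.
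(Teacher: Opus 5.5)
This is essentially the paper's proof. The paper likewise takes the full-participation closed forms (each a positive multiple of $a^2+\sigma_S^2(m)$) from the surplus derivation, computes $\partial\sigma_S^2/\partial m=-n/\bigl((\tau+m)^2\mathcal{P}_n^2\bigr)<0$ (your $-V_n(m)^2\,n/(\tau+m)^2$), and uses that the no-sharing benchmark does not involve $m$.

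Your Step~1 caveat is well founded, and it applies equally to the paper. For $m>0$ one has $\Cov(\theta-\mu_n,s_i)=V_n(m)\,m/(\tau+m)\neq 0$, so $q_i=(a+\mu_n)/(n+1)$ is not an equilibrium when firms condition on $(s_i,\mu_n)$; the paper's own proof of the $n$-firm privacy result instead uses $q_i=c+d\,s_i+e\,t_{[n]}(m)$. Both arguments therefore establish the statement only for that posterior-only closed form, and your unproven expectation that $\mathbb{E}[Q^2]$ stays monotone would not by itself close this gap.
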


\begin{proposition}[Effect of platform signal noise]
\label{prop:platform-signal-noise-welfare}
The expected consumer surplus, producer surplus, and total surplus under full
platform participation are strictly decreasing in the platform signal noise $\tau_p$.
Moreover, because the no-sharing regime is independent of $\tau_p$, the surplus
gains $\Delta CS$, $\Delta PS$, and $\Delta TS$ are also strictly decreasing in
$\tau_p$.
\end{proposition}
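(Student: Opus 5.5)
The plan is to write each expected surplus under full participation as an explicit function of $\tau_p$ and show that $\tau_p$ enters only through the variance of the released posterior mean. Under full participation every firm observes $s_i$ and the common posterior $\mu_n$. With reciprocal access and symmetric noise $m$, the released posterior is a sufficient statistic for $\theta$ given everything the firm sees: $\mu_n$ incorporates $\tilde s_i$, and the extra noise $v_i$ makes $s_i$ carry no additional information about $\theta$ beyond $\tilde s_i$ relative to what is already in $\mu_n$. I will take this sufficiency step from the full-participation equilibrium characterization used in the derivation of $\mathbb{E}[TS^S(m)]$ (\cref{app:welfare-derivations}). The symmetric Cournot equilibrium is then
\[
q_i^S=\frac{a+\mu_n}{n+1}, \qquad Q^S=\frac{n(a+\mu_n)}{n+1}.
\]
Since $\mathbb{E}[\theta\mid\mu_n]=\mu_n$, we have $\mathbb{E}[\theta\mu_n]=\Var(\mu_n)=\sigma_S^2$.

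Next I will compute the three surpluses in terms of $\sigma_S^2$.
\begin{itemize}
\item For consumer surplus, $\mathbb{E}[CS^S]=\frac{n^2}{2(n+1)^2}(a^2+\sigma_S^2)$.
\item For producer surplus, $\mathbb{E}[PS^S]=\mathbb{E}[Q^S(a+\theta-Q^S)]$. Replacing $\theta$ by $\mu_n$ inside the expectation, which is valid by the tower property, gives $\frac{n}{(n+1)^2}(a^2+\sigma_S^2)$.
\item Adding the two recovers $\mathbb{E}[TS^S]=\frac{n(n+2)}{2(n+1)^2}(a^2+\sigma_S^2)$, which agrees with the displayed formula.
\end{itemize}
All three are therefore positive constants times $a^2+\sigma_S^2(m,\tau_p)$.

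The monotonicity then reduces to the sign of $\partial\sigma_S^2/\partial\tau_p$. Write $\sigma_S^2=u-V_n$ with $V_n=\bigl(\frac1u+\frac{n}{\tau+m}+\frac1{\tau_p}\bigr)^{-1}$. Then
\[
\frac{\partial \sigma_S^2}{\partial\tau_p}=-\frac{\partial V_n}{\partial\tau_p}=-V_n^2\,\frac{1}{\tau_p^2}<0,
\]
because $V_n>0$ for every $\tau_p\in(0,\infty)$. This gives strict decrease of $\mathbb{E}[CS^S]$, $\mathbb{E}[PS^S]$ and $\mathbb{E}[TS^S]$ in $\tau_p$. For the gains, the no-sharing expressions depend only on $(n,u,\tau,a)$: there each firm sees only $s_i$, and the platform signal reaches participants only. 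So $\Delta CS$, $\Delta PS$ and $\Delta TS$ differ from the sharing surpluses by $\tau_p$-independent constants and inherit strict monotonicity.

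The main obstacle is the sufficiency step justifying $q_i^S=(a+\mu_n)/(n+1)$ when $m>0$. A firm's own raw signal $s_i$ could in principle refine its belief about $\theta$ and about rivals' information beyond $\mu_n$. If this step fails, I would instead use the linear equilibrium $q_i=\alpha+\beta_1\mu_n+\beta_2 s_i$. Even then, every moment would be a rational function in which $\tau_p$ enters only through $1/\tau_p$ in the precision sum, so I would verify the sign of the derivative directly. The argument is otherwise identical to that of \cref{prop:privacy-noise-welfare}, with $\tau_p$ playing the role of $\tau+m$.
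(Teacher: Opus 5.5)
Your argument coincides with the paper's: both reduce the three full-participation surpluses to positive multiples of $a^2+\sigma_S^2$, use $\partial\sigma_S^2/\partial\tau_p=-V_n^2/\tau_p^2=-1/(\tau_p^2\mathcal{P}_n^2)<0$, and observe that the no-sharing surpluses do not depend on $\tau_p$. One caution on the step you flagged: your stated rationale is backwards, because $\tilde s_i$ is a garbling of $s_i$ and $\Cov(\theta-\mu_n,s_i)=V_n m/(\tau+m)\neq 0$ for $m>0$, so $\mathbb{E}[\theta\mid s_i,\mu_n]\neq\mu_n$ and the tower-property justification of $q_i^S=(a+\mu_n)/(n+1)$ is valid only at $m=0$; for $m>0$ your proof, like the paper's, rests on taking the closed form of Appendix~C.1 as given, and your fallback of checking the sign directly would need a computation you have not supplied.
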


\Cref{prop:privacy-noise-welfare,prop:platform-signal-noise-welfare} show that the welfare value of platform participation is increasing in the informativeness of the released posterior. Privacy noise may be necessary for implementation, but once participation is incentive compatible, additional noise only lowers welfare. Similarly, a less precise platform signal reduces the welfare gain from sharing.

\subsection{Implementation and Platform Choice}
\label{subsec:inducing-full-sharing}

The participation constraints defined in \cref{sec:problem} also determine when the platform can rule out partial-participation equilibria.

\begin{proposition}[Full-sharing implementation]
\label{prop:full-sharing-implementation}
If $m\in\mathcal{M}^{F}$, then full participation is the unique pure-strategy Nash equilibrium of the participation game.
\end{proposition}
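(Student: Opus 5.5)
The proof is a direct argument on the finite participation game. It uses only the definition of $\mathcal{M}^{F}$ and the symmetry reduction from \cref{sec:problem}: under a symmetric mechanism $m$, a firm's expected payoff from joining or staying out depends only on the number $k$ of other participants. This is because the posterior $\mu_J$ and the induced Bayesian Cournot equilibrium depend on $J$ only through $|J|$.

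The first step fixes an arbitrary pure participation profile $x\in\{0,1\}^n$ and a firm $i$. It writes firm $i$'s gain from switching from $x_i=0$ to $x_i=1$, holding $x_{-i}$ fixed, as $\Delta^{k}(m)$ with $k=\sum_{j\neq i}x_j\in\{0,\ldots,n-1\}$. This identification is the only point needing care. I would justify it by noting that each third-stage quantity subgame, for any given participation set, has a unique linear Bayesian Nash equilibrium in the Gaussian–quadratic setting. Hence the expected profits $\mathbb{E}[\pi_i(x_i,x_{-i}^{(k)};m)]$ are well defined and depend only on $(x_i,k)$, which is exactly what the definition of $\Delta^k(m)$ presumes.

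The second step is existence. Take $x=\mathbf{1}$. Each firm faces $k=n-1$ other participants, and its gain from deviating to $x_i=0$ is $-\Delta^{n-1}(m)<0$ because $m\in\mathcal{M}^{F}$. So no firm has a profitable unilateral deviation, and full participation is a pure-strategy Nash equilibrium.

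The third step is uniqueness. Take any profile $x\neq\mathbf{1}$; then some firm $i$ has $x_i=0$. This firm faces $k=\sum_{j\neq i}x_j\le n-1$ participants, and switching to $x_i=1$ raises its expected profit by $\Delta^{k}(m)>0$. Hence $x$ is not a Nash equilibrium. This covers every partial profile as well as the empty one ($k=0$), so $\mathbf{1}$ is the unique pure-strategy equilibrium.

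The main obstacle is not the game-theoretic argument, which is immediate, but making rigorous the reduction of payoffs to functions of $k$ alone. If the paper's third-stage equilibrium uniqueness is taken as given, this is routine. Otherwise I would briefly invoke the standard uniqueness of affine equilibria in linear-Gaussian Cournot games, together with the symmetry of signals and noise across firms.
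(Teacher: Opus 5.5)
Your proof is correct and follows the paper's argument: at any profile other than full participation, a non-participant facing $k\le n-1$ participants gains $\Delta^k(m)>0$ by joining, and at full participation $\Delta^{n-1}(m)>0$ rules out exit. Your remark on why payoffs depend only on $k$ is something the paper builds directly into the definition of $\Delta^k(m)$, so it is a harmless extra justification rather than a needed step.
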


The reason is that $m\in\mathcal{M}^{F}$ implies $\Delta^k(m)>0$ for every $k=0,\ldots,n-1$. Hence, at any partial-participation profile, an outsider strictly prefers to join; at the full-participation profile, no firm wants to exit. Thus, all partial-sharing profiles are eliminated, and full sharing is uniquely implemented.

Combining this implementation condition with \Cref{prop:privacy-noise-welfare}, the platform solves $\max_{m\in\mathcal{M}^{F}}\mathbb{E}[TS^S(m)]$. Since expected surplus under full participation is decreasing in $m$, the platform chooses the least distortionary feasible privacy level, $m^{F}=\inf\{m\ge 0:m\in\mathcal{M}^{F}\}$. This distinguishes implementation from welfare improvement: privacy noise may induce full sharing, but the implemented outcome improves welfare only if $\Delta TS(m^F)>0$.

\subsection{Numerical Illustration}
\label{subsec:numerical-implementation}

We illustrate the implementation procedure numerically. The figures below visualize three objects generated by the model: the joining incentives $\Delta^k(m)$ at different participation levels, the least privacy noise $m^F$ needed to implement full participation, and the welfare classification of parameter regions after implementation. In each exercise, the remaining primitives are fixed at the baseline values used in the numerical computation.

\begin{figure}[t]
    \centering
    \includegraphics[width=0.98\textwidth]{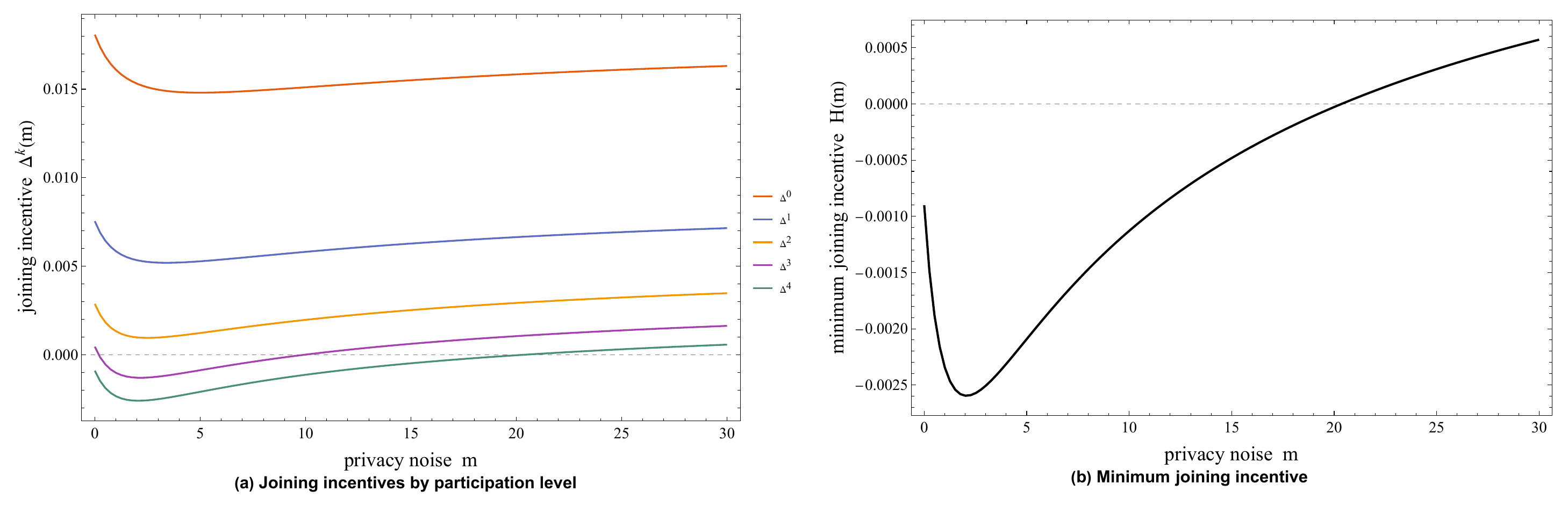}
    \caption{\textbf{Joining incentives and implementation.}
    For a fixed parameter configuration, Panel (a) plots the joining incentive $\Delta^k(m)$ as a function of privacy noise $m$, separately for each possible number $k$ of existing participants. Panel (b) plots the lower envelope $H(m)=\min_{k=0,\ldots,n-1}\Delta^k(m)$. Full participation is implementable exactly when $H(m)>0$.}
    \label{fig:delta_curves}
\end{figure}

\Cref{fig:delta_curves} shows why implementation requires checking all participation levels. Even if some joining constraints are positive, full sharing is not uniquely implemented unless the lowest joining incentive is also positive. The figure also illustrates the non-monotone role of privacy noise: moderate noise can reduce strategic exposure, whereas excessive noise lowers the value of the platform signal.

\begin{figure}[t]
    \centering
    \includegraphics[width=0.98\textwidth]{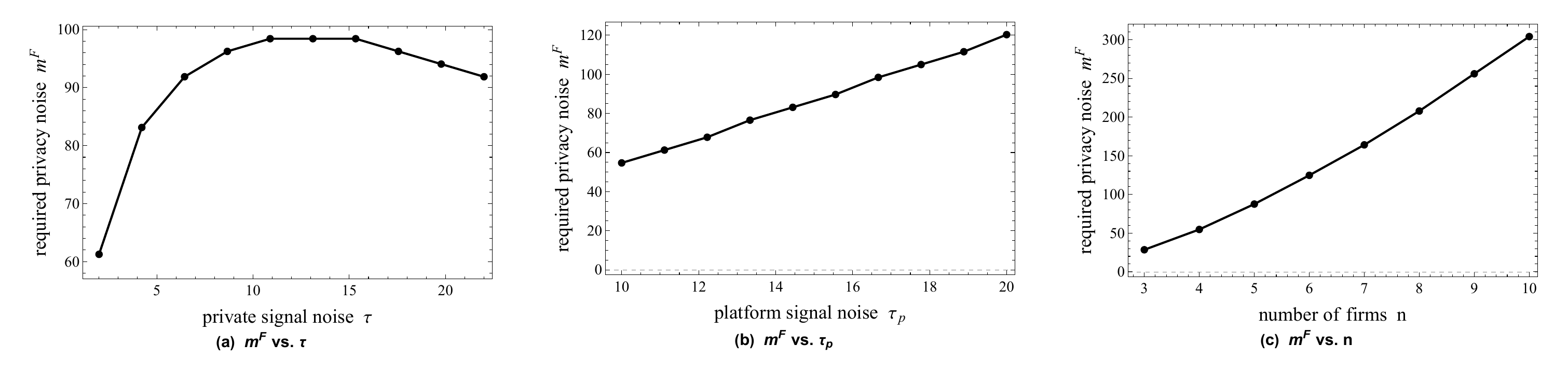}
    \caption{\textbf{Required privacy noise.}
    Each panel varies one primitive parameter and reports the least feasible privacy noise $m^F=\inf\{m\ge0:H(m)>0\}$ required to implement full participation. Panel (a) varies private signal noise $\tau$, Panel (b) varies platform signal noise $\tau_p$, and Panel (c) varies the number of firms $n$.}
    \label{fig:required_privacy}
\end{figure}

\Cref{fig:required_privacy} translates the implementation condition into the amount of privacy protection required by the platform. Panel (a) shows that the required privacy noise $m^F$ can vary non-monotonically with firms' private signal noise $\tau$. When private signals are moderately noisy, firms' reports remain strategically valuable to rivals while participation becomes harder to sustain, so stronger privacy protection is needed. When private signals become sufficiently noisy, however, the reports themselves carry less useful information, and the competitive externality from sharing weakens, reducing the required noise. Panel (b) shows that $m^F$ increases with platform signal noise $\tau_p$: as the external signal becomes less precise, the shared posterior relies more heavily on firms' reports, so the platform must add more privacy noise to limit strategic exposure. Panel (c) shows that implementation becomes more demanding as the number of firms increases, because each participating firm's report can benefit more competitors. Overall, the figure illustrates that the required privacy level is shaped jointly by private signal precision, platform-signal accuracy, and market size.

\begin{figure}[t]
    \centering
    \includegraphics[width=0.98\textwidth]{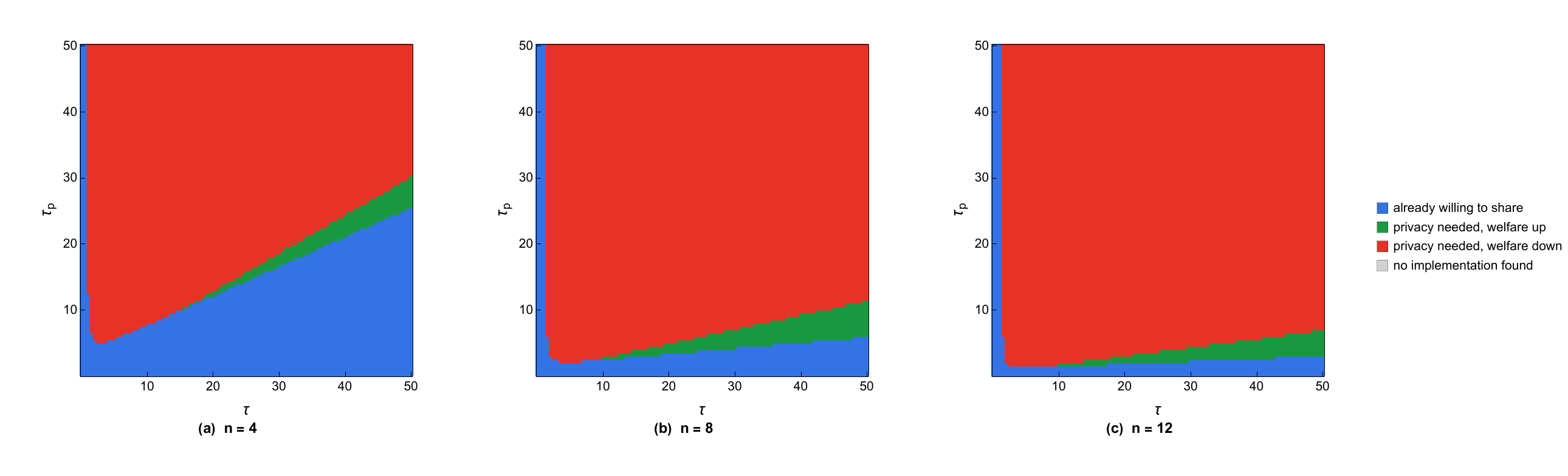}
    \caption{\textbf{Implementation and welfare regions.}
    Each panel classifies points in the $(\tau,\tau_p)$ parameter space for a fixed market size $n$. The colors indicate whether firms are already willing to share without privacy noise, whether privacy noise is needed and raises total surplus, whether privacy noise is needed but lowers total surplus, or whether no implementing privacy level is found within the numerical search range.}
    \label{fig:welfare_regions}
\end{figure}

\Cref{fig:welfare_regions} connects implementation with welfare. The blue region corresponds to parameters where full participation is already incentive compatible. The green region shows where privacy noise is needed and successfully induces welfare-improving sharing. The red region shows that implementation alone is not sufficient for welfare improvement: privacy may induce participation, but the required noise can reduce posterior precision enough that total surplus falls relative to no sharing. Thus, the platform must evaluate both incentive compatibility and welfare when choosing the privacy mechanism.

\section{Conclusion}

This paper studies voluntary information sharing among competing firms on digital platforms. We show that information sharing involves a fundamental trade-off: better inference about market uncertainty can improve decisions, but shared information also aligns firms' beliefs and intensifies strategic coupling in Cournot competition. 

Our analysis yields three main insights. First, in symmetric duopoly, privacy protection alone cannot sustain sharing without external platform information; a sufficiently informative platform signal can restore participation by providing informational value independent of rivals' reports. Second, under reciprocal access \citep{kirby1988trade}, an $n$-firm market may sustain sharing even without privacy protection or external information, because opting out means losing access to the pooled signal generated by other participants. Third, when reciprocal access or platform information alone is insufficient, calibrated privacy noise can restore participation by reducing strategic exposure while preserving informational value through the platform signal.

These results clarify the platform's design problem. To sustain full sharing as the unique participation equilibrium, the platform must make joining profitable at every level of existing participation and thereby rule out partial-participation equilibria. Privacy noise can help satisfy these constraints, but it also reduces posterior precision. Hence, the platform prefers the least distortionary feasible noise level, and privacy-induced sharing improves welfare only when participation gains outweigh informational losses. Overall, privacy is a strategic design instrument whose effectiveness depends on platform information, market size, and firms' signal quality.

Several limitations suggest directions for future research. One direction is to allow firms' demand signals to be correlated through overlapping customers, common shocks, or shared supply-chain conditions. Another is to extend the model to product differentiation, network competition, or localized substitution, where privacy and access rules may need to account for asymmetric competitive relationships. Together, these extensions would broaden the scope of privacy-preserving information sharing from symmetric Cournot markets to richer platform environments with correlated information and heterogeneous competitive relationships.

\clearpage
\bibliographystyle{informs2014}
\bibliography{refs}


\begin{APPENDICES}
\makeatletter
\renewcommand*{\theHsection}{appendix.\Alph{section}}
\renewcommand*{\theHsubsection}{appendix.\Alph{section}.\arabic{subsection}}
\renewcommand*{\theHsubsubsection}{appendix.\Alph{section}.\arabic{subsection}.\arabic{subsubsection}}
\makeatother
\renewcommand\thefigure{\thesection.\arabic{figure}}
\setcounter{figure}{0}
\phantomsection

\clearpage
\section*{Appendix Table of Contents}
\phantomsection\label{app:appendix-table-of-contents}

\begingroup
\small
\setlength{\parskip}{2pt}

\newcommand{\appendixtocsection}[3]{%
  \noindent\hyperref[#1]{\textbf{#2\quad #3}}\dotfill\pageref*{#1}\par
}

\newcommand{\appendixtocsubsection}[3]{%
  \noindent\hspace*{1.5em}\hyperref[#1]{#2\quad #3}\dotfill\pageref*{#1}\par
}

\newcommand{\appendixtocspace}{\vspace{0.25em}}

\appendixtocsection{app:a:proofs-of-main-results}{A}{Proofs of Main Results}
\appendixtocspace
\appendixtocsubsection{app:proof:prop1:negative-result}{A.1}{Proposition~\ref*{prop1:negative-result}: No sharing equilibria for two firms without external signals}
\appendixtocsubsection{app:proof:prop:platform-signal-2}{A.2}{Proposition~\ref*{prop:platform-signal-2}: Information-sharing equilibrium for two firms with platform signal}
\appendixtocsubsection{app:proof:prop:privacy-restores-sharing-2}{A.3}{Proposition~\ref*{prop:privacy-restores-sharing-2}: Privacy noise induces voluntary sharing for two firms}
\appendixtocsubsection{app:proof:prop:sharing_equilibrium_noprivacy}{A.4}{Proposition~\ref*{prop:sharing_equilibrium_noprivacy}: $n$-Firm sharing equilibrium without privacy protection}
\appendixtocsubsection{app:proof:cor:threshold_sharing_monotonicity}{A.5}{Corollary~\ref*{cor:threshold_sharing_monotonicity}: Characterizing the threshold for $n$-Firm sharing equilibrium}
\appendixtocsubsection{app:proof:prop:platform-signal-n}{A.6}{Proposition~\ref*{prop:platform-signal-n}: $n$-Firm information-sharing equilibrium with platform signal}
\appendixtocsubsection{app:proof:prop:privacy-restores-sharing-n}{A.7}{Proposition~\ref*{prop:privacy-restores-sharing-n}: Privacy noise induces voluntary sharing among $n$ firms}
\appendixtocsubsection{app:proof:prop:privacy-noise-welfare}{A.8}{Proposition~\ref*{prop:privacy-noise-welfare}: Effect of privacy noise}
\appendixtocsubsection{app:proof:prop:platform-signal-noise-welfare}{A.9}{Proposition~\ref*{prop:platform-signal-noise-welfare}: Effect of platform signal noise}
\appendixtocsubsection{app:prop:full-sharing-implementation}{A.10}{Proposition~\ref*{prop:full-sharing-implementation}: Full-sharing implementation}

\appendixtocsection{sec:heterogeneous-sequential}{B}{Information Sharing with Heterogeneous Signals and Under Sequential Entry}
\appendixtocspace
\appendixtocsubsection{app:b-1:information-sharing-with-heterogeneous-private-signals}{B.1}{Information Sharing with Heterogeneous Private Signals}
\appendixtocsubsection{app:b-2:heterogeneous-signals-under-sequential-participation}{B.2}{Heterogeneous Signals under Sequential Participation}

\appendixtocsection{app:welfare-derivations}{C}{Additional Welfare Analysis}
\appendixtocspace
\appendixtocsubsection{app:surplus-derivations}{C.1}{Surplus Derivations}
\appendixtocsubsection{app:welfare-comparative-statics}{C.2}{Additional Comparative Statics}

\appendixtocsection{app:d:proofs-of-additional-results}{D}{Proofs of Additional Results}
\appendixtocspace
\appendixtocsubsection{app:proof:prop:demand-uncertainty-welfare}{D.1}{Proposition~\ref*{prop:demand-uncertainty-welfare}: Effect of demand uncertainty}
\appendixtocsubsection{app:proof:prop:private-signal-noise-welfare}{D.2}{Proposition~\ref*{prop:private-signal-noise-welfare}: Effect of private signal noise}
\appendixtocsubsection{app:proof:prop:number-of-firms-welfare}{D.3}{Proposition~\ref*{prop:number-of-firms-welfare}: Effect of the number of firms}
\appendixtocsubsection{app:proof:prop:privacy-restores-sharing-hetero}{D.4}{Proposition~\ref*{prop:privacy-restores-sharing-hetero}: Privacy restores voluntary sharing under heterogeneous signals}
\appendixtocsubsection{app:proof:prop:sequential-heterogeneous}{D.5}{Proposition~\ref*{prop:sequential-heterogeneous}: Privacy-induced participation under sequential entry}

\endgroup
\medskip

\crefalias{section}{appendix}
\crefalias{subsection}{appendix}
\section{Proofs of Main Results}\label{app:a:proofs-of-main-results}

\subsection{\texorpdfstring{\Cref{prop1:negative-result}}{prop1negativeresult}: No sharing equilibria for two firms without external signals} \label{app:proof:prop1:negative-result}

\begin{proof}
We analyze the unilateral deviation incentive of firm~1.

When firm~1 participates (and there is no platform-provided external signal, i.e., $\tau_p\to\infty$), the platform releases the posterior mean
\[
\mu
=
\mathbb{E}\!\left[\theta \,\middle|\, \tilde{s}_1,\tilde{s}_2\right],
\qquad
\tilde{s}_i = s_i + v_i,
\quad
v_i \sim \mathcal{N}(0,m_i).
\]

Under symmetry, we restrict attention to $m_1=m_2=m$.
In this case, the posterior mean $\mu$ is a linear function of the aggregate statistic
\[
\tilde{s}
=
\frac{1}{2}\sum_{i=1}^{2}\tilde{s}_i.
\]
Since the mapping between $\mu$ and $\tilde{s}$ is linear and one-to-one (with coefficients depending only on parameters), conditioning on $\mu$ is equivalent to conditioning on $\tilde{s}$.
Therefore, without loss of generality, we may write firm~1's information set under participation as
\[
\mathcal{I}_1^{S}
=
\{s_1,\tilde{s}\}.
\]

\paragraph{Sharing Equilibrium.}

Because $(\theta,s_1,s_2,\tilde{s})$ are jointly Gaussian and the profit function is quadratic, the Bayesian Nash equilibrium in the quantity stage is linear in the sufficient statistics. Hence, for $i\in\{1,2\}$,
\[
q_i^{S}=c_1+d_1 s_i+e_1 \tilde{s}.
\]

Firm~1’s first-order condition under inverse demand $p=a+\theta-q_1-q_2$ is
\[
q_1^{S}
=\frac{1}{2}\Big(a+\mathbb{E}[\theta\mid s_1,\tilde{s}]
-\mathbb{E}[q_2^{S}\mid s_1,\tilde{s}]\Big).
\]
Substituting $q_2^{S}=c_1+d_1 s_2+e_1 \tilde{s}$ yields
\begin{equation}\label{eq:FOC-participate}
c_1+d_1 s_1+e_1 \tilde{s}
=\frac{1}{2}\Big(
a-\mathbb{E}[c_1+d_1 s_2+e_1 \tilde{s}\mid s_1,\tilde{s}]
+\mathbb{E}[\theta\mid s_1,\tilde{s}]
\Big).
\end{equation}

Let
\[
Y=\begin{bmatrix}s_1\\ \tilde{s}\end{bmatrix}.
\]
Then
\[
\mathbb{E}[s_2\mid s_1,\tilde{s}]
=\Sigma_{s_2,Y}\Sigma_{YY}^{-1}Y,
\qquad
\mathbb{E}[\theta\mid s_1,\tilde{s}]
=\Sigma_{\theta,Y}\Sigma_{YY}^{-1}Y,
\]
with
\[
\Sigma_{YY}=
\begin{bmatrix}
u+\tau & u+\dfrac{\tau}{2}\\[0.4em]
u+\dfrac{\tau}{2} & u+\dfrac{\tau}{2}+\dfrac{m}{2}
\end{bmatrix},\quad
\Sigma_{s_2,Y}=
\begin{bmatrix}
u & u+\dfrac{\tau}{2}
\end{bmatrix},\quad
\Sigma_{\theta,Y}=
\begin{bmatrix}
u & u
\end{bmatrix}.
\]

Substituting these conditional expectations into \eqref{eq:FOC-participate} and matching coefficients gives
\[
c_1=\frac{a}{3},\qquad
d_1=\frac{2 m u}{6 m u+\tau^2+2\tau u+4 m \tau},\qquad
e_1=\frac{2 \tau u}{3\left(6 m u+\tau^2+2\tau u+4 m \tau\right)}.
\]

Substituting the equilibrium strategies into the profit function yields firm~1’s ex-ante expected profit under participation:
\begin{equation}\label{eq:profit-participate}
\mathbb{E}\!\left[\pi_1^{S}\right]
=
\frac{1}{9}\left(
a^2+
\frac{
2 u^2\left(18 m^2(\tau+u)+\tau^2(\tau+2u)+m\tau(7\tau+12u)\right)
}{
\left(4 m\tau+\tau^2+6 m u+2\tau u\right)^2
}
\right).
\end{equation}

\paragraph{Uniliteral Deviation.}

If firm~1 deviates and does not participate, its information set is
\[
\mathcal{I}_1^{NS}=\{s_1\}.
\]
Since there are only two firms and no external signal, information aggregation does not occur. 
Accordingly, firm~2 must also base its decision solely on its own private signal $s_2$.

We look for a linear strategy
\[
q_1^{NS}=c_0+d_0 s_1.
\]
The first-order condition is
\[
q_1^{NS}
=\frac{1}{2}\Big(a+\mathbb{E}[\theta\mid s_1]-\mathbb{E}[q_2\mid s_1]\Big).
\]
Gaussian updating implies
\[
\mathbb{E}[s_2\mid s_1]=\frac{u}{u+\tau}s_1,
\qquad
\mathbb{E}[\theta\mid s_1]=\frac{u}{u+\tau}s_1.
\]
Matching coefficients yields
\[
c_0=\frac{a}{3},\qquad
d_0=\frac{u}{3u+2\tau},
\]
and hence
\[
q_1^{NS}=\frac{a}{3}+\frac{u}{3u+2\tau}s_1.
\]

Substituting into the profit function gives firm~1’s ex-ante expected profit under non-participation:
\begin{equation}\label{eq:profit-deviate}
\mathbb{E}\!\left[\pi_1^{NS}\right]
=
\frac{a^2}{9}+\frac{u^2(\tau+u)}{(2\tau+3u)^2}.
\end{equation}

Define the unilateral participation gain:
\[
\Delta
=
\mathbb{E}\!\left[\pi_1^{S}\right]
-
\mathbb{E}\!\left[\pi_1^{NS}\right].
\]
Using \eqref{eq:profit-participate} and \eqref{eq:profit-deviate}, direct algebra yields
\[
\Delta
=
\frac{\tau^2 u^2\left(-2m(2\tau+3u)(4\tau+9u)-\tau(\tau+2u)(\tau+3u)\right)}
{9(2\tau+3u)^2\left(2u(3m+\tau)+\tau(4m+\tau)\right)^2}
<0,
\qquad \forall\, m\ge 0.
\]

Therefore, participation is strictly dominated by non-participation when the other firm participates, for any privacy noise level $m\ge 0$. Hence, full participation cannot be a Nash equilibrium under any privacy-preserving mechanism of the stated form. This proves the proposition.

\end{proof}

\subsection{\texorpdfstring{\Cref{prop:platform-signal-2}}{propplatformsignal2}: Information-sharing equilibrium for two firms with platform signal}\label{app:proof:prop:platform-signal-2}

\begin{proof}

We analyze the unilateral incentive of firm~$1$.

For any participating set $J\subseteq\{1,2\}$, the platform releases the posterior mean
\[
\mu_J := \mathbb{E}\!\left[\theta \,\middle|\, \{s_i\}_{i\in J},\, s_p\right].
\]
Under joint Gaussianity, $m_J$ admits the precision-weighted form
\begin{equation*}
\label{eq:posterior-mean-J-2}
\mu_J
=
V_J\Big(\frac{1}{\tau}\sum_{i\in J} s_i + \frac{1}{\tau_p}s_p\Big),
\qquad
V_J
:=
\Big(\frac{1}{u}+\frac{|J|}{\tau}+\frac{1}{\tau_p}\Big)^{-1},
\end{equation*}
where $V_J=\Var(\theta\mid \{s_i\}_{i\in J},s_p)$ is the (deterministic) posterior variance.

Moreover, since $\mu_J=\mathbb{E}[\theta\mid \{s_i\}_{i\in J},s_p]$, the law of total variance yields
\begin{equation*}
\label{eq:var-mJ-2}
\Var(\mu_J)
=
u - V_J.
\end{equation*}

In particular, under full participation $J=\{1,2\}$,
\begin{equation*}
\label{eq:var-m12}
\Var(\mu_{2}) = u - V_{2},
\qquad
V_{2}=\Big(\frac{1}{u}+\frac{2}{\tau}+\frac{1}{\tau_p}\Big)^{-1}.
\end{equation*}
Firm~$1$'s private posterior mean is
\begin{equation}
\label{eq:m1-def-2}
\mu_1:=\mathbb{E}[\theta\mid s_1]=\frac{u}{u+\tau}\,s_1,
\qquad
\Var(\mu_1)=\frac{u^2}{u+\tau}.
\end{equation}

\paragraph{Sharing Equilibrium.}
When both firms participate, the platform releases $\mu_{2}=\mathbb E[\theta\mid s_1,s_2,s_p]$.
Given $\mu_{2}$, the posterior mean of $\theta$ is exactly $\mu_{2}$; hence each firm behaves as if demand intercept were $a+\mu_{2}$.
A symmetric best response implies
\begin{equation*}
\label{eq:q-joinall-2}
q_i^{S} \;=\; \frac{a+\mu_{2}}{3},\qquad i\in\{1,2\}.
\end{equation*}
As in the $n$-firm case, conditional expected profit at the best response equals the square of the chosen quantity:
\[
\mathbb E[\pi_i\mid \mu_{2}] = (q_i^{J})^2.
\]
Therefore the ex-ante expected profit is
\begin{equation}
\label{eq:profit-joinall-2}
\mathbb E[\pi_1^{S}]
=\mathbb E[(q_1^{S})^2]
=\frac{a^2}{9}+\frac{\Var(\mu_{2})}{9}.
\end{equation}

\paragraph{Uniliteral Deviation. }
Suppose firm~$1$ does not participate, so its information set is $\mathcal I_1^{NS}=\{s_1\}$ and its sufficient statistic is $m_1$ in \eqref{eq:m1-def-2}.
Firm~$2$ participates; the platform releases
\begin{equation}
\label{eq:m2}
\mu_2:=\mathbb E[\theta\mid s_2,s_p]
=V_2\Big(\frac{1}{\tau}s_2+\frac{1}{\tau_p}s_p\Big),
\qquad
V_2:=\Big(\frac{1}{u}+\frac{1}{\tau}+\frac{1}{\tau_p}\Big)^{-1}.
\end{equation}
We look for a linear Bayesian Nash equilibrium of the form
\begin{equation}
\label{eq:linear-NS-2}
q_1^{NS}=\alpha_0+\beta_0 \mu_1,
\qquad
q_2^{S}=\alpha_1+\beta_1 \mu_2.
\end{equation}

\smallskip
\emph{Key regressions.}
Because $(\mu_1,\mu_2)$ are jointly Gaussian with mean $0$, the conditional expectations are linear:
\begin{equation}
\label{eq:regressions-2}
\mathbb E[\mu_2\mid \mu_1]=\delta\, \mu_1,
\qquad
\mathbb E[\mu_1\mid \mu_2]=\gamma\, \mu_2,
\end{equation}
where
\begin{equation*}
\label{eq:delta-gamma-2}
\delta=\frac{\Cov(\mu_2,\mu_1)}{\Var(\mu_1)},
\qquad
\gamma=\frac{\Cov(\mu_2,\mu_1)}{\Var(\mu_2)}.
\end{equation*}
A direct covariance calculation gives the closed form for $\delta$.
Let
\begin{equation*}
\label{eq:A-def-2}
A(\tau_p):=\frac{1}{\tau}+\frac{1}{\tau_p}.
\end{equation*}
Then
\begin{equation*}
\label{eq:cov-2}
\Cov(\mu_2,\mu_1)
=\Cov\Big(\mathbb E[\theta\mid s_2,s_p],\ \mathbb E[\theta\mid s_1]\Big)
=\frac{u^2}{u+\tau}\,V_2\,A(\tau_p),
\end{equation*}
and therefore, using $\Var(\mu_1)=u^2/(u+\tau)$,
\begin{equation*}
\label{eq:delta-closed-2}
\delta = V_2\,A(\tau_p).
\end{equation*}
Also, using $\Var(\mu_2)=u-V_2$ from \eqref{eq:var-mJ-2},
\begin{equation*}
\label{eq:gamma-closed-2}
\gamma
=\frac{\Cov(\mu_2,\mu_1)}{\Var(\mu_2)}
=\frac{\frac{u^2}{u+\tau}V_2A(\tau_p)}{u-V_2}.
\end{equation*}

\smallskip
\emph{Solving for $(\alpha_0,\beta_0,\alpha_1,\beta_1)$.}
Firm~$1$'s best response to $q_2$ is
\[
q_1=\frac{1}{2}\Big(a+\mathbb E[\theta\mid \mu_1]-\mathbb E[q_2\mid \mu_1]\Big)
=\frac{1}{2}\Big(a+\mu_1-\mathbb E[q_2^P\mid \mu_1]\Big),
\]
and by \eqref{eq:linear-NS-2} and \eqref{eq:regressions-2},
\[
\mathbb E[q_2^P\mid \mu_1]=\alpha_1+\beta_1\mathbb E[\mu_2\mid \mu_1]=\alpha_1+\beta_1\delta\,\mu_1.
\]
Equating coefficients in $q_1^{NS}=\alpha_0+\beta_0 m_1$ yields
\begin{equation}
\label{eq:eqs-firm1-2}
\alpha_0=\frac{a-\alpha_1}{2},
\qquad
\beta_0=\frac{1-\beta_1\delta}{2}.
\end{equation}
Firm~$2$'s best response is
\[
q_2=\frac{1}{2}\Big(a+\mathbb E[\theta\mid m_2]-\mathbb E[q_1\mid m_2]\Big)
=\frac{1}{2}\Big(a+\mu_2-\mathbb E[q_1^{NS}\mid \mu_2]\Big),
\]
where
\[
\mathbb E[q_1^{NS}\mid \mu_2]=\alpha_0+\beta_0\mathbb E[\mu_1\mid \mu_2]
=\alpha_0+\beta_0\gamma\,\mu_2.
\]
Matching coefficients in $q_2^{S}=\alpha_1+\beta_1 m_2$ gives
\begin{equation}
\label{eq:eqs-firm2-2}
\alpha_1=\frac{a-\alpha_0}{2},
\qquad
\beta_1=\frac{1-\beta_0\gamma}{2}.
\end{equation}
Solving \eqref{eq:eqs-firm1-2}--\eqref{eq:eqs-firm2-2} yields $\alpha_0=\alpha_1=a/3$ and the closed form
\begin{equation}
\label{eq:beta0-closed-2}
\beta_0(\tau_p)
=
\frac{1-\frac{1}{2}\delta}{2-\frac{1}{2}\delta\,\gamma}
=
\frac{1-\frac12 V_2A(\tau_p)}
{2-\frac12\Big(V_2A(\tau_p)\Big)\cdot 
\frac{\frac{u^2}{u+\tau}V_2A(\tau_p)}{u-V_2} }.
\end{equation}

\smallskip

Using the same “best-response value” identity, conditional expected profit equals $(q_1^{NS})^2$; hence
\begin{equation}
\label{eq:profit-NS-2}
\mathbb E[\pi_1^{NS}]
=\mathbb E[(q_1^{NS})^2]
=\frac{a^2}{9}+\beta_0(\tau_p)^2\,\Var(\mu_1)
=
\frac{a^2}{9}+\beta_0(\tau_p)^2\frac{u^2}{u+\tau}.
\end{equation}

Define the unilateral gain from joining when the rival joins:
\[
\Delta_2(\tau_p):=\mathbb E[\pi_1^{J}]-\mathbb E[\pi_1^{NS}].
\]
Combining \eqref{eq:profit-joinall-2} and \eqref{eq:profit-NS-2} yields
\begin{equation}
\label{eq:Delta-2}
\Delta(\tau_p)
=
\frac{\Var(\mu_{2})}{9}
-
\beta_0(\tau_p)^2\,\frac{u^2}{u+\tau},
\qquad
\Var(\mu_{2})=u-\Big(\frac{1}{u}+\frac{2}{\tau}+\frac{1}{\tau_p}\Big)^{-1},
\end{equation}
with $\beta_0(\tau_p)$ given by \eqref{eq:beta0-closed-2} and $V_2$ defined in \eqref{eq:m2}.

The sharing profile is a symmetric equilibrium if and only if $\Delta_2(\tau_p)\ge 0$.
Define $\tau_p^*$ as the (unique) solution to $\Delta_2(\tau_p)=0$, which is exactly the indifference condition \eqref{eq:taup-star-equation-2}.
This equation provides an explicit computable characterization of $\tau_p^*$ (e.g., via a one-dimensional root search), completing the proof.

To see why the Sharing profile is \emph{unique}, note that when neither firm shares, each firm can still obtain the platform-provided signal $s_p$, and hence by unilaterally deviating to share it can additionally induce the platform to incorporate its private signal into the public posterior. This strictly increases the firm’s informational advantage relative to the no-sharing outcome (where only $s_p$ is available), and therefore yields a strictly higher expected payoff. Consequently, the No-Sharing profile cannot be a symmetric Bayesian Nash equilibrium whenever $\tau_p$ is sufficiently small, implying that Sharing is the only symmetric equilibrium in that regime.

\end{proof}

\subsection{\texorpdfstring{\Cref{prop:privacy-restores-sharing-2}}{propprivacyrestoressharing2}: Privacy noise induces voluntary sharing  for two firms}\label{app:proof:prop:privacy-restores-sharing-2}

\begin{proof}
We analyze the unilateral incentive of firm~$1$.

When a firm participates, the platform observes a privatized report
\[
y_i:=s_i+\xi_i=\theta+(\varepsilon_i+\xi_i),
\qquad
\varepsilon_i\sim \mathcal N(0,\tau),\ \xi_i\sim \mathcal N(0,m),
\]
so that $\varepsilon_i+\xi_i\sim \mathcal N(0,\tau+m)$.
If both firms participate, the platform releases the posterior mean
\[
t_{12}(m):=\mathbb{E}\!\left[\theta \,\middle|\, y_1,y_2,s_p\right].
\]
Under joint Gaussianity,
\begin{equation}
\label{eq:t12-form-proof}
t_{12}(m)
=
V_2(m)\left(
\frac{1}{\tau+m}(y_1+y_2)+\frac{1}{\tau_p}s_p
\right),
\qquad
V_2(m):=
\left(\frac{1}{u}+\frac{2}{\tau+m}+\frac{1}{\tau_p}\right)^{-1}.
\end{equation}
Moreover, by the law of total variance,
\begin{equation*}
\Var\!\big(t_{12}(m)\big)
=
u - V_2(m).
\end{equation*}

If firm~$1$ does not participate while firm~$2$ does, the platform releases
\begin{equation*}
t_{2}(m):=\mathbb{E}\!\left[\theta \,\middle|\, y_2,s_p\right]
=
V_1(m)\left(
\frac{1}{\tau+m}y_2+\frac{1}{\tau_p}s_p
\right),
\qquad
V_1(m):=
\left(\frac{1}{u}+\frac{1}{\tau+m}+\frac{1}{\tau_p}\right)^{-1}.
\end{equation*}

\paragraph{Sharing equilibrium.}
Suppose both firms participate.
Firm~$i$ observes $(s_i,t_{12}(m))$ and we look for a symmetric linear equilibrium
\[
q_i^{S}(m)=c+d\,s_i+e\,t_{12}(m),\qquad i=1,2.
\]
Let $t:=t_{12}(m)$.
Firm~$i$’s first-order condition under inverse demand $p=a+\theta-q_1-q_2$ is
\[
q_i^{S}(m)
=
\frac12\Big(a-\mathbb{E}[q_j^{S}(m)\mid s_i,t]+\mathbb{E}[\theta\mid s_i,t]\Big),
\qquad j\neq i.
\]
Using Gaussian projection, both conditional expectations are linear in $(s_i,t)$.
Matching coefficients yields a unique solution, with the constant term unchanged:
\begin{equation*}
c=\frac{a}{3}.
\end{equation*}
The remaining coefficients $(d,e)$ are uniquely determined as continuous functions of $(u,\tau,\tau_p,m)$.

Conditional expected profit equals the square of the chosen quantity, hence
\[
\mathbb{E}[\pi_1^{S}(m)]
=
\mathbb{E}\!\big[(q_1^{S}(m))^2\big]
=
c^2+d^2\Var(s_1)+e^2\Var(t)+2de\,\Cov(s_1,t),
\]
where $\Var(s_1)=u+\tau$, $\Var(t)=u-V_2(m)$, and $\Cov(s_1,t)$ follows from \eqref{eq:t12-form-proof}.

\paragraph{Unilateral deviation.}
Suppose firm~$1$ does not participate and uses
\[
q_1^{NS}(m)=c_0+d_0\,s_1.
\]
Firm~$2$ participates and observes $(s_2,t_2(m))$, using
\[
q_2^{S}(m)=c_1+d_1\,s_2+e_1\,t_2(m).
\]

Firm~$1$’s best response is
\[
q_1^{NS}
=
\frac12\Big(a+\mathbb{E}[\theta\mid s_1]-\mathbb{E}[q_2^{S}(m)\mid s_1]\Big).
\]
A direct calculation yields
\[
\mathbb{E}[t_2(m)\mid s_1]
=
\frac{\Cov(t_2(m),s_1)}{\Var(s_1)}\,s_1,
\]
and matching coefficients implies
\begin{equation*}
c_0=c_1=\frac{a}{3},
\end{equation*}
with $(d_0,d_1,e_1)$ uniquely determined as continuous functions of $(u,\tau,\tau_p,m)$.

As before,
\[
\mathbb{E}[\pi_1^{NS}(m)]
=
\mathbb{E}\!\big[(q_1^{NS}(m))^2\big]
=
\frac{a^2}{9}+d_0^2\,(u+\tau).
\]

Define
\[
\Delta(\tau_p,m)
:=
\mathbb{E}[\pi_1^{S}(m)]-\mathbb{E}[\pi_1^{NS}(m)].
\]
When $m=0$, the privacy-preserving platform coincides with the non-private platform.
Since $\tau_p>\tau_p^*(u,\tau)$, full participation is not an equilibrium without privacy, implying
\[
\Delta(\tau_p,0)<0.
\]
As $m\to\infty$, the precision of privatized reports vanishes and
\[
t_{12}(m)\ \xrightarrow[]{m\nearrow\infty}\ \mathbb{E}[\theta\mid s_p],
\qquad
t_{2}(m)\ \xrightarrow[]{m\nearrow\infty}\ \mathbb{E}[\theta\mid s_p],
\]
in $L^2$.
Thus, participation grants access to the informative public posterior $\mathbb{E}[\theta\mid s_p]$, while non-participation leaves firm~$1$ conditioning only on $s_1$.
Since $\tau_p<\infty$ implies $\Var(\mathbb{E}[\theta\mid s_p])>0$, this strictly raises firm~$1$’s optimized ex-ante payoff, yielding
\[
\lim_{m\to\infty}\Delta(\tau_p,m)>0.
\]
Finally, $\Delta(\tau_p,m)$ is continuous in $m$.
By the intermediate value theorem, there exists $m^*>0$ such that $\Delta(\tau_p,m^*)=0$, and any $m>m^*$ yields $\Delta(\tau_p,m)>0$.

For any $m>m^*$ we have $\Delta(\tau_p,m)>0$, so sharing is a strict best response when the opponent shares. 
If both firms do not share, a unilateral deviation to Sharing adds non-degenerate information (for any finite $m$) to the public posterior based on $s_p$, which strictly increases the deviator’s optimized ex-ante payoff. 
Hence No-Sharing cannot be an equilibrium. 
Therefore, for all $m>m^*$, Sharing is the unique symmetric Bayesian Nash equilibrium.

\end{proof}

\subsection{\texorpdfstring{\Cref{prop:sharing_equilibrium_noprivacy}}{propsharingequilibriumnoprivacy}: \texorpdfstring{$n$}{n}-Firm sharing equilibrium without privacy protection}\label{app:proof:prop:sharing_equilibrium_noprivacy}

\begin{proof}

Suppose first that all firms participate and privacy noise is symmetric, i.e., $m_j = m$ for all $j$. 
Each shared signal is perturbed as
\[
\tilde{s}_j = s_j + v_j,
\qquad 
v_j \sim \mathcal{N}(0,m).
\]

The platform releases the posterior mean
\[
\mu
=
\mathbb{E}\!\left[\theta \,\middle|\, \tilde{s}_1,\dots,\tilde{s}_n \right].
\]

Under symmetry, the posterior mean is a linear function of the aggregate statistic
\[
\tilde{s}
=
\frac{1}{n}\sum_{j=1}^{n}\tilde{s}_j.
\]

Because the mapping between $\mu$ and $\tilde{s}$ is linear and one-to-one 
(with coefficients depending only on model parameters), conditioning on $\mu$ 
is equivalent to conditioning on $\tilde{s}$. 
Hence, although the platform releases $\mu$, the statistic $\tilde{s}$ is sufficient, 
and the informational content is identical.

Therefore, without loss of generality, firm $i$’s information set under sharing can be written as
\[
\mathcal{I}_i^{S}
=
\{ s_i,\tilde{s} \}.
\]

If firm $i$ deviates while all other firms share, the platform aggregates only the remaining $n-1$ perturbed signals:
\[
\tilde{s}_{-i}
=
\frac{1}{n-1}\sum_{j\ne i}\tilde{s}_j.
\]

The platform releases
\[
\mu_{-i}
=
\mathbb{E}\!\left[\theta \,\middle|\, \{\tilde{s}_j\}_{j\ne i} \right],
\]
which is again a linear one-to-one transformation of $\tilde{s}_{-i}$.

Thus, the deviating firm observes only its private signal,
\[
\mathcal{I}_i^{D} = \{ s_i \},
\]
while each participating firm $j \ne i$ observes
\[
\{ s_j,\tilde{s}_{-i} \}.
\]

\paragraph{Sharing Equilibrium.}
Because $(\theta,s_1,\ldots,s_n,\bar s)$ are jointly Gaussian and profits are quadratic, a Bayesian Nash equilibrium is linear in the sufficient statistics.
We look for a symmetric linear strategy
\[
q_i^{S}=c_S+d_S s_i+e_S \bar s.
\]
Conditional on $\bar s$, the private signal $s_i$ only contains idiosyncratic noise about $\varepsilon_i$ and provides no additional information about the common state $\theta$.
Hence in equilibrium the coefficient on $s_i$ vanishes: $d_S=0$ and
\[
q_i^{S}=c_S+e_S \bar s.
\]
Imposing the Cournot first-order condition
\[
q_i^{S}=\frac12\Big(a+\mathbb E[\theta\mid s_i,\bar s]-\mathbb E[\sum_{j\ne i}q_j^{S}\mid s_i,\bar s]\Big),
\]
and matching coefficients yields
\[
c_S=\frac{a}{n+1},
\qquad
e_S=\frac{n u}{(n+1)(n u+\tau)}.
\]
Substituting these strategies into $\pi_i=(a+\theta-\sum_{j}q_j)q_i$ gives the ex-ante expected profit under full sharing:
\begin{equation*}
\mathbb E[\pi_i^{S}]
=
\frac{a^2 n u+a^2\tau+n u^2}{n^{3}u+n^{2}\tau+2n^{2}u+2n\tau+n u+\tau}.
\end{equation*}

\paragraph{Uniliteral Deviation.}
Now suppose firm $i$ deviates and does not share while all $n-1$ rivals share among themselves.
We consider linear strategies of the form
\[
q_i^{NS}=c_0+d_0 s_i,
\]
and for each participating rival $j\ne i$,
\[
q_j^{NS}=c_1+d_1 s_j+e_1 \bar s_{-i}.
\]
Conditional on $\bar s_{-i}$ the private signal $s_j$ only reveals idiosyncratic noise and does not improve inference about $\theta$; thus $d_1=0$ and insiders use only $\bar s_{-i}$:
\[
q_j^{NS}=c_1+e_1 \bar s_{-i}.
\]
Imposing the best-response (first-order) conditions for the deviator and for an insider and matching coefficients yields
\[
c_0=c_1=\frac{a}{n+1},\qquad
d_0=\frac{u\,(n\tau+n u-u)}{2n^{2}\tau u+n^{2}u^{2}+2n\tau^{2}-u^{2}},\qquad
e_1=\frac{u\,(n-1)(2\tau+u)}{2n^{2}\tau u+n^{2}u^{2}+2n\tau^{2}-u^{2}}.
\]
Substituting into the profit function gives the deviator's ex-ante expected profit $\mathbb E[\pi_i^{\text{deviate}}]$.

Define
\[
\Delta
=
\mathbb E[\pi_i^{S}]-\mathbb E[\pi_i^{NS}].
\]
Direct algebra yields a closed-form expression for the unilateral participation gain:
\[
\Delta
=
u^2\left(
\frac{n}{(n+1)^2\,(n u+\tau)}
-
\frac{(u+\tau)\big((n-1)u+n\tau\big)^2}
{\big((n^2-1)u^2+2n^2u\tau+2n\tau^2\big)^2}
\right).
\]

Since $u>0$, the sign of $\Delta$ is fully determined by the bracketed expression. Therefore,
\[
\Delta \ge 0
\quad \Longleftrightarrow \quad
\frac{n}{(n+1)^2\,(n u+\tau)}
\;\ge\;
\frac{(u+\tau)\big((n-1)u+n\tau\big)^2}
{\big((n^2-1)u^2+2n^2u\tau+2n\tau^2\big)^2}.
\]

Hence, full information sharing constitutes a Nash equilibrium if and only if the above inequality holds. 

\end{proof}

\subsection{\texorpdfstring{\Cref{cor:threshold_sharing_monotonicity}}{corthresholdsharingmonotonicity}: Characterizing the threshold for \texorpdfstring{$n$}{n}-Firm sharing equilibrium}\label{app:proof:cor:threshold_sharing_monotonicity}

We begin with the following helper lemma.

\begin{lemma}[Concavity and single-threshold structure]\label{lem:concavity_threshold}
Let $r=\tau/u>0$, and let $\Phi(n,r)$ denote the polynomial obtained by clearing denominators in the condition $\Delta\ge 0$, so that
\[
\Delta\ge 0 \quad \Longleftrightarrow \quad \Phi(n,r)\ge 0.
\]
Then, for all $n\ge 3$ and all $r>0$, the function $\Phi(n,r)$ is strictly concave in $r$, i.e.,
\[
\frac{\partial^2 \Phi(n,r)}{\partial r^2}<0.
\]
Moreover, for each fixed $n\ge 3$, $\Phi(n,r)$ satisfies
\[
\Phi(n,0)>0
\qquad\text{and}\qquad
\lim_{r\to\infty}\Phi(n,r)=-\infty.
\]
Consequently, for every $n\ge 3$, the equation $\Phi(n,r)=0$ admits a unique positive solution $r>0$, and
\[
\Phi(n,r)\ge 0 
\quad \Longleftrightarrow \quad 
0<r\le r_0(n),
\]
where $r_0(n)$ denotes this unique positive root.
Equivalently,
\[
\Delta\ge 0 
\quad \Longleftrightarrow \quad 
\frac{\tau}{u}\le r_0(n).
\]
\end{lemma}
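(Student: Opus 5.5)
The plan is to work directly with the explicit cubic $\Phi(n,r)=a_0(n)+a_1(n)r+a_2(n)r^2+a_3(n)r^3$ from \eqref{eq:PHI}. First I confirm that it really is the sign-equivalent form of \eqref{eq:sharing_condition_nfirms}. Then I establish the three analytic properties: concavity, positivity at $r=0$, and divergence to $-\infty$. The single-threshold conclusion then follows from an elementary concavity argument.

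\emph{Step 1 (reduction to $\Phi$).} Start from the closed form of $\Delta$ in \cref{prop:sharing_equilibrium_noprivacy}. Substitute $\tau=ru$ and pull out the homogeneous powers of $u$; since $u>0$, these do not affect the sign. Both denominators are positive, so cross-multiplying gives the polynomial
\[
P(n,r)=n\big((n^2-1)+2n^2r+2nr^2\big)^2-(n+1)^2(n+r)(1+r)\big((n-1)+nr\big)^2 .
\]
This is a priori of degree $5$ in $r$. One checks that $P(n,0)=n(n^2-1)^2-(n+1)^2n(n-1)^2=0$, so $r$ divides $P$. Polynomial division should then give $P(n,r)=r\,g(n,r)\,\Phi(n,r)$ with a quadratic $g$ whose coefficients are positive (or, more generally, $g>0$ on $r>0$). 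Because $r>0$, this yields $\Delta\ge0\iff\Phi(n,r)\ge0$.

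I expect this factorization to be the main obstacle. It is a purely symbolic but error-prone expansion. I would carry it out by expanding $P$ in powers of $r$, dividing by $r$, and then dividing the resulting quartic by $\Phi$ to identify $g$. As a check, I would compare leading coefficients and verify positivity of $g$ for $n\ge3$.

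\emph{Step 2 (concavity).} Compute $\partial_r^2\Phi=2a_2(n)+6a_3(n)r$. We have $a_3(n)=-(n-1)n^2<0$. Also $a_2(n)=-n\big(n^2(n-2)+3n+2\big)<0$ for $n\ge3$. Hence $\partial_r^2\Phi<0$ for all $r\ge0$.

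\emph{Step 3 (boundary behaviour).} At $r=0$,
\[
\Phi(n,0)=a_0(n)=n^2(n^2-2n-2)+2n+1 .
\]
Since $n^2-2n-2\ge1$ for $n\ge3$, this is positive; for instance $a_0(3)=16$. As $r\to\infty$, $\Phi(n,r)\to-\infty$ because $a_3(n)<0$.

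\emph{Step 4 (unique root and sign pattern).} Existence of a positive root $r_0(n)$ follows from the intermediate value theorem, using continuity, $\Phi(n,0)>0$, and $\Phi(n,r)\to-\infty$.

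For uniqueness, suppose there were two positive roots $0<r_1<r_2$. Write $r_1=\lambda\cdot 0+(1-\lambda)r_2$ with $\lambda\in(0,1)$. Strict concavity gives
\[
\Phi(n,r_1)>\lambda\,\Phi(n,0)+(1-\lambda)\,\Phi(n,r_2)=\lambda\,\Phi(n,0)>0,
\]
which is a contradiction.

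The same chord argument gives the sign pattern. For $r\in(0,r_0]$, concavity on $[0,r_0]$ gives $\Phi(n,r)\ge0$. For $r>r_0$, write $r_0$ as a convex combination of $0$ and $r$. If $\Phi(n,r)\ge0$, the chord inequality would force $\Phi(n,r_0)>0$, contradicting $\Phi(n,r_0)=0$; so $\Phi(n,r)<0$.

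Combining this with Step 1 gives $\Delta\ge0\iff\tau/u\le r_0(n)$.
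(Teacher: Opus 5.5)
Your proof is correct and follows the paper's route: strict concavity from $\partial_r^2\Phi=2a_2(n)+6a_3(n)r$ with $a_2,a_3<0$, positivity of $\Phi(n,0)=a_0(n)$, the negative cubic coefficient, and then concavity to get a single sign change. The paper reaches the same bound $2a_2=-2n(n^3-2n^2+3n+2)$ less directly and only asserts the single-crossing step, which you spell out with the chord inequality.

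One correction to Step~1. Your cleared polynomial $P$ has degree $4$ in $r$, not $5$, because both the squared quadratic and $(n+1)^2(n+r)(1+r)\big((n-1)+nr\big)^2$ are quartic. Expanding gives exactly $P(n,r)=(n-1)\,r\,\Phi(n,r)$, so $g$ is the positive constant $n-1$ rather than a quadratic. Your fallback condition $g>0$ on $r>0$ covers this, so the reduction to $\Phi$ goes through.
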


\proof
By construction, the inequality $\Delta\ge 0$ is equivalent to $\Phi(n,r)\ge 0$, where $\Phi(n,r)$ is a polynomial in $r$ whose coefficients depend on $n$.

Direct differentiation yields
\[
\frac{\partial^2 \Phi(n,r)}{\partial r^2}
= -2n^4 + (4-6r)n^3 + (6r-6)n^2 - 4n .
\]
We show that this expression is strictly negative for all $n\ge 3$ and all $r>0$.
Rewrite it as
\[
\frac{\partial^2 \Phi(n,r)}{\partial r^2}
=
\big(-2n^4-4n\big)
+
\big((4-6r)n^3+(6r-6)n^2\big).
\]
The first term is strictly negative. For the second term, factor out $n^2$:
\[
(4-6r)n^3+(6r-6)n^2
= n^2\big[(4-6r)n+(6r-6)\big].
\]
For $n\ge 3$ and $r>0$, we have
\[
(4-6r)n+(6r-6)
= 4n-6 + r(6-6n) \le 4n-6,
\]
since $6-6n\le 0$ and $r>0$. Therefore,
\[
(4-6r)n^3+(6r-6)n^2 \le 4n^3-6n^2.
\]
It follows that
\[
\frac{\partial^2 \Phi(n,r)}{\partial r^2}
\le
-2n^4 + 4n^3 - 6n^2 - 4n
=
-2n\big(n^3-2n^2+3n+2\big).
\]
Let $g(n)=n^3-2n^2+3n+2$. Then $g'(n)=3n^2-4n+3$, whose discriminant is $(-4)^2-4\cdot 3\cdot 3=-20<0$, implying $g'(n)>0$ for all $n$. Hence $g(n)$ is strictly increasing and
\[
g(3)=27-18+9+2=20>0,
\]
so $g(n)>0$ for all $n\ge 3$. Since $-2n<0$ for $n\ge 3$, we conclude
\[
\frac{\partial^2 \Phi(n,r)}{\partial r^2}<0
\qquad \forall\, n\ge 3,\ r>0.
\]
Thus $\Phi(n,r)$ is strictly concave in $r$ on $(0,\infty)$.

Evaluating at $r=0$ yields $\Phi(n,0)>0$ for all $n\ge 3$. Moreover, since the leading term in $r$ is cubic with negative coefficient, it follows that
\[
\lim_{r\to\infty}\Phi(n,r)=-\infty.
\]

Strict concavity together with $\Phi(n,0)>0$ and $\lim_{r\to\infty}\Phi(n,r)=-\infty$ implies that for each fixed $n\ge 3$, the equation $\Phi(n,r)=0$ admits a unique positive solution $r>0$. Moreover,
\[
\Phi(n,r)\ge 0 \iff 0<r\le r_0(n),
\]
where $r_0(n)$ denotes this unique positive root.  
Rewriting in terms of the original parameters yields
\[
\Delta\ge 0 \iff \frac{\tau}{u}\le r_0(n).
\]
\endproof

We can now give the proof of \Cref{cor:threshold_sharing_monotonicity}.

\begin{proof} The existence and uniqueness of a positive solution $r>0$ to $\Phi(n,r)=0$ for each fixed $n\ge 3$, as well as the threshold characterization
\[
\Phi(n,r)\ge 0 \iff 0<r\le r_0(n)
\qquad\Longleftrightarrow\qquad
\Delta\ge 0 \iff \frac{\tau}{u}\le r_0(n),
\]
follow directly from Lemma~\ref{lem:concavity_threshold}. It remains to prove the monotonicity of $r_0(n)$ in $n$.

Since $\Phi$ is homogeneous of degree $3$ in $(\tau,u)$, substituting $r=\tau/u$ and dividing by $u^3>0$ yields an equivalent polynomial (with the same sign) of the form
\[
\Phi(n,r)=a_0(n)+a_1(n)r+a_2(n)r^2+a_3(n)r^3,
\]
where
\[
\begin{aligned}
a_0(n)&=(n-1)(n+1)\big((n-2)n-1\big)=n^4-2n^3-2n^2+2n+1,\\
a_1(n)&=n\big((n-3)n(n+1)-1\big)+1=n^4-2n^3-3n^2-n+1,\\
a_2(n)&=-n\big(n\big((n-2)n+3\big)+2\big)= -n^4+2n^3-3n^2-2n,\\
a_3(n)&=-(n-1)n^2<0 \qquad (n\ge 2).
\end{aligned}
\]
By Lemma~\ref{lem:concavity_threshold}, for each $n\ge 3$ there is a unique $r_0(n)>0$ such that
$\Phi\big(n,r_0(n)\big)=0$ and $\Phi(n,r)>0$ for $r\in(0,r_0(n))$, while $\Phi(n,r)<0$ for $r>r_0(n)$.

Let $\varphi:=\frac{1+\sqrt5}{2}$ be the golden ratio, so that $\varphi^2=\varphi+1$ and $\varphi^3=2\varphi+1$.
Evaluating $\Phi(n,r)$ at $r=\varphi$ and using these identities gives
\[
\Phi(n,\varphi)
=a_0+a_1\varphi+a_2\varphi^2+a_3\varphi^3
=\big(a_0+a_2+a_3\big)+\big(a_1+a_2+2a_3\big)\varphi .
\]
A direct simplification yields
\[
a_0(n)+a_2(n)+a_3(n)=1-n^3-4n^2<0,
\]
and
\[
a_1(n)+a_2(n)+2a_3(n)=1-2n^3-4n^2-3n<0,
\]
for every $n\ge 3$. Since $\varphi>0$, it follows that $\Phi(n,\varphi)<0$ for all $n\ge 3$.
Because $\Phi(n,0)=a_0(n)>0$ (Lemma~\ref{lem:concavity_threshold}) and $\Phi(n,\cdot)$ has a unique positive root,
we conclude that
\[
0<r_0(n)<\varphi \qquad\text{for all } n\ge 3.
\]

Fix $n\ge 3$ and consider the point $(n,r_0(n))$ where $\Phi(n,r_0(n))=0$.
Since $\Phi(n,\cdot)$ is strictly concave and changes sign from positive to negative at $r_0(n)$, the crossing is strict and hence
\[
\partial_r\Phi\big(n,r_0(n)\big)<0.
\]
(Indeed, if $\partial_r\Phi(n,r_0(n))\ge 0$, strict concavity would prevent $\Phi$ from becoming strictly negative for all $r>r_0(n)$
without creating an additional zero, contradicting uniqueness.)

We now show that
\[
\partial_n\Phi\big(n,r_0(n)\big)>0 \qquad\text{for all } n\ge 3.
\]
A direct differentiation gives
\[
\partial_n\Phi(n,r)=a_0'(n)+a_1'(n)r+a_2'(n)r^2+a_3'(n)r^3.
\]
On the zero set $\{\Phi(n,r)=0\}$ we may eliminate the cubic term by using
\[
r^3=-\frac{a_0+a_1r+a_2r^2}{a_3},
\]
which yields the identity (valid at any $(n,r)$ with $\Phi(n,r)=0$)
\[
\partial_n\Phi(n,r)
=\Big(a_0'-\frac{a_3'}{a_3}a_0\Big)
+\Big(a_1'-\frac{a_3'}{a_3}a_1\Big)r
+\Big(a_2'-\frac{a_3'}{a_3}a_2\Big)r^2.
\]
Define
\[
b_k(n):=a_k'(n)-\frac{a_3'(n)}{a_3(n)}a_k(n),\qquad k\in\{0,1,2\}.
\]
Then, on $\Phi(n,r)=0$,
\[
\partial_n\Phi(n,r)=b_0(n)+b_1(n)r+b_2(n)r^2.
\]
Since $a_3(n)=-(n-1)n^2$ and $a_3'(n)=-n(3n-2)$, we have
\[
\frac{a_3'(n)}{a_3(n)}=\frac{3n-2}{n(n-1)} \;>\;0 \qquad (n>1).
\]
A straightforward simplification using the explicit $a_k(n)$ yields
\[
b_0(n)=n^3-n^2+3n-1-\frac{2}{n}>0 \qquad (n\ge 3),
\]
\[
b_1(n)=\frac{n^5-2n^4+5n^3+2n^2-4n+2}{n(n-1)}>0 \qquad (n\ge 3),
\]
and
\[
b_2(n)=\frac{-n^4+2n^3+n^2+4n-2}{\,n-1\,}<0 \qquad (n\ge 3).
\]
Thus, for each fixed $n\ge 3$, the function
\[
Q_n(r):=b_0(n)+b_1(n)r+b_2(n)r^2
\]
is a strictly concave quadratic in $r$ (since $b_2(n)<0$).

At the root we have $0<r_0(n)<\varphi$. Therefore it suffices to show $Q_n(r)>0$ on the interval $[0,\varphi]$.
Because $Q_n$ is concave, its minimum over the compact interval $[0,\varphi]$ is attained at an endpoint, so
\[
\min_{r\in[0,\varphi]}Q_n(r)=\min\{Q_n(0),Q_n(\varphi)\}.
\]
We already have $Q_n(0)=b_0(n)>0$. It remains to verify $Q_n(\varphi)>0$.
Using $\varphi^2=\varphi+1$,
\[
Q_n(\varphi)=b_0(n)+b_1(n)\varphi+b_2(n)\varphi^2
=\big(b_0(n)+b_2(n)\big)+\big(b_1(n)+b_2(n)\big)\varphi.
\]
Substituting the explicit expressions for $b_0,b_1,b_2$ above and simplifying yields
\[
Q_n(\varphi)=\frac{(8+3\sqrt5)n^3+(3+3\sqrt5)n^2-(6+3\sqrt5)n+(3+\sqrt5)}{n(n-1)}.
\]
The denominator is positive for $n\ge 3$, and the numerator is a cubic polynomial with strictly positive leading coefficient
$8+3\sqrt5>0$ and positive $n^2$ coefficient as well; in particular it is strictly positive for all $n\ge 3$.
Hence $Q_n(\varphi)>0$, and consequently $Q_n(r)>0$ for all $r\in[0,\varphi]$.
Since $r_0(n)\in(0,\varphi)$, we obtain
\[
\partial_n\Phi\big(n,r_0(n)\big)=Q_n\big(r_0(n)\big)>0 \qquad (n\ge 3).
\]

By the implicit function theorem, in a neighborhood of each $n\ge 3$ the unique root $r_0(n)$ is differentiable and satisfies
\[
\frac{dr_0(n)}{dn}=-\frac{\partial_n\Phi\big(n,r_0(n)\big)}{\partial_r\Phi\big(n,r_0(n)\big)}.
\]
Since $\partial_r\Phi\big(n,r_0(n)\big)<0$ and $\partial_n\Phi\big(n,r_0(n)\big)>0$.
Therefore
\[
\frac{dr_0(n)}{dn}>0 \qquad\text{for all } n\ge 3,
\]
so $r_0(n)$ is strictly increasing in $n$.

Finally, since the equilibrium condition is $\tau/u\le r_0(n)$, the feasible region for full information sharing expands
monotonically with $n$ and is bounded by the implicit curve $\Phi\!\left(n,\tau/u\right)=0$.

\end{proof}

\subsection{\texorpdfstring{\Cref{prop:platform-signal-n}}{propplatformsignaln}: \texorpdfstring{$n$}{n}-Firm information-sharing equilibrium with platform signal}\label{app:proof:prop:platform-signal-n}

\begin{proof}

We analyze the unilateral incentive of firm~$1$.

Fix any participating set $J$. By the platform rule in Section~\ref{sec:problem}, the released statistic is the posterior mean
\[
\mu_J := \mathbb{E}\!\left[\theta \,\middle|\, \{s_i\}_{i\in J},\, s_p\right].
\]
Under joint Gaussianity, $\mu_J$ is linear in the signals and can be written in the precision-weighted form
\begin{equation*}
\mu_J
=
V_J\Big(\frac{1}{\tau}\sum_{i\in J} s_i + \frac{1}{\tau_p}s_p\Big),
\qquad
V_J
:=
\Big(\frac{1}{u}+\frac{|J|}{\tau}+\frac{1}{\tau_p}\Big)^{-1},
\end{equation*}
where $V_J$ is the posterior variance $\Var(\theta\mid \{s_i\}_{i\in J},s_p)$.

Moreover, since $\mu_J=\mathbb{E}[\theta\mid \{s_i\}_{i\in J},s_p]$, the law of total variance yields
\begin{equation*}
\Var(\mu_J)
=
\Var\!\Big(\mathbb{E}[\theta\mid \{s_i\}_{i\in J},s_p]\Big)
=
\Var(\theta)-\mathbb{E}\!\big[\Var(\theta\mid \{s_i\}_{i\in J},s_p)\big]
=
u - V_J,
\end{equation*}
where the last equality uses $\Var(\theta)=u$ and the fact that, in the Gaussian signal model, the posterior variance
$\Var(\theta\mid \{s_i\}_{i\in J},s_p)=V_J$ is deterministic (i.e., does not depend on the realized signals).

In particular, under full participation $J=[n]$,
\begin{equation*}
\Var(\mu_{n}) = u - V_{n},
\qquad
V_{n}=\Big(\frac{1}{u}+\frac{n}{\tau}+\frac{1}{\tau_p}\Big)^{-1}.
\end{equation*}
Firm~$1$'s private posterior mean is
\begin{equation}
\label{eq:m1-def}
\mu_1:=\mathbb{E}[\theta\mid s_1]=\frac{u}{u+\tau}\,s_1,
\qquad
\Var(\mu_1)=\frac{u^2}{u+\tau}.
\end{equation}

\paragraph{Sharing Equilibrium.}
When all firms participate, the platform releases $m_{[n]}=\mathbb E[\theta\mid s_1,\dots,s_n,s_p]$.
Given $m_{[n]}$, the posterior mean of $\theta$ is exactly $m_{[n]}$; hence each firm behaves as if demand intercept were $a+m_{[n]}$.
A symmetric best response implies
\begin{equation*}
q_i^{S} \;=\; \frac{a+\mu_{n}}{n+1},\qquad i=1,\dots,n.
\end{equation*}
Under linear demand and risk-neutrality, conditional expected profit at the best response equals the square of the chosen quantity:
\[
\mathbb E[\pi_i\mid m_{[n]}] 
= \max_{q_i}\ \mathbb E[(a+\theta-\sum_j q_j)q_i\mid m_{[n]}]
= (q_i^S)^2.
\]
Therefore the ex-ante expected profit is
\begin{equation}
\label{eq:profit-joinall}
\mathbb E[\pi_1^{S}]
=\mathbb E[(q_1^S)^2]
=\frac{a^2}{(n+1)^2}+\frac{\Var(\mu_{n})}{(n+1)^2}.
\end{equation}

\paragraph{Unilateral deviation }
Suppose firm $1$ does not participate, so its information set is $\mathcal I_1^{NS}=\{s_1\}$ and its sufficient statistic is $\mu_1$ in \eqref{eq:m1-def}.
The remaining $n-1$ firms participate; the platform releases
\begin{equation}
\label{eq:m-minus1}
\mu_{-1}:=\mathbb E[\theta\mid s_2,\dots,s_n,s_p]
=V_{-1}\Big(\frac{1}{\tau}\sum_{i=2}^n s_i+\frac{1}{\tau_p}s_p\Big),
\qquad
V_{-1}:=\Big(\frac{1}{u}+\frac{n-1}{\tau}+\frac{1}{\tau_p}\Big)^{-1}.
\end{equation}
We look for a linear Bayesian Nash equilibrium of the form
\begin{equation}
\label{eq:linear-NS}
q_1^{NS}=\alpha_0+\beta_0 \mu_1,
\qquad
q_j^{S}=\alpha_1+\beta_1 \mu_{-1},\ \ j=2,\dots,n.
\end{equation}
(Again $\mu_{-1}$ is sufficient for $\theta$ given participating information, so $q_j^P$ depends only on $m_{-1}$.)

\smallskip
\emph{Key regressions.}
Because $(m_1,m_{-1})$ are jointly Gaussian with mean $0$, the conditional expectations are linear:
\begin{equation}
\label{eq:regressions}
\mathbb E[\mu_{-1}\mid \mu_1]=\delta\, \mu_1,
\qquad
\mathbb E[\mu_1\mid \mu_{-1}]=\gamma\, \mu_{-1},
\end{equation}
where
\begin{equation*}
\label{eq:delta-gamma}
\delta=\frac{\Cov(\mu_{-1},\mu_1)}{\Var(\mu_1)},
\qquad
\gamma=\frac{\Cov(\mu_{-1},\mu_1)}{\Var(\mu_{-1})}.
\end{equation*}
A direct covariance calculation using \eqref{eq:m1-def}--\eqref{eq:m-minus1} gives a simple closed form for $\delta$.
Let
\begin{equation*}
\label{eq:A-def}
A(\tau_p):=\frac{n-1}{\tau}+\frac{1}{\tau_p}.
\end{equation*}
Then
\begin{equation*}
\label{eq:cov}
\Cov(\mu_{-1},\mu_1)=\Cov\Big(\mathbb E[\theta\mid s_2,\dots,s_n,s_p],\ \mathbb E[\theta\mid s_1]\Big)
=\frac{u^2}{u+\tau}\,V_{-1}\,A(\tau_p),
\end{equation*}
and therefore, using $\Var(\mu_1)=u^2/(u+\tau)$,
\begin{equation}
\label{eq:delta-closed}
\delta = V_{-1}\,A(\tau_p).
\end{equation}
Also, using $\Var(\mu_{-1})=u-V_{-1}$ from \eqref{eq:m-minus1},
\begin{equation}
\label{eq:gamma-closed}
\gamma
=\frac{\Cov(\mu_{-1},\mu_1)}{\Var(\mu_{-1})}
=\frac{\frac{u^2}{u+\tau}V_{-1}A(\tau_p)}{u-V_{-1}}.
\end{equation}

\smallskip
\emph{Solving for $(\alpha_0,\beta_0,\alpha_1,\beta_1)$.}
Firm $1$'s best response to opponents' total quantity $Q_{-1}$ is
\[
q_1=\frac{1}{2}\Big(a+\mathbb E[\theta\mid \mu_1]-\mathbb E[Q_{-1}\mid \mu_1]\Big)
=\frac{1}{2}\Big(a+\mu_1-(n-1)\mathbb E[q_2^S\mid \mu_1]\Big),
\]
and by \eqref{eq:linear-NS} and \eqref{eq:regressions},
\[
\mathbb E[q_2^S\mid \mu_1]=\alpha_1+\beta_1 \mathbb E[\mu_{-1}\mid \mu_1]=\alpha_1+\beta_1\delta \mu_1.
\]
Equating coefficients in $q_1^{NS}=\alpha_0+\beta_0 \mu_1$ yields
\begin{equation}
\label{eq:eqs-firm1}
\alpha_0=\frac{a-(n-1)\alpha_1}{2},
\qquad
\beta_0=\frac{1-(n-1)\beta_1\delta}{2}.
\end{equation}
For any participating firm $j\ge 2$, the best response condition is
\[
q_j=\frac{1}{2}\Big(a+\mathbb E[\theta\mid \mu_{-1}] - \mathbb E[q_1^{NS}+(n-2)q_2^P\mid \mu_{-1}]\Big)
=\frac{1}{2}\Big(a+\mu_{-1}-\mathbb E[q_1^{NS}\mid \mu_{-1}]-(n-2)q_2^P\Big),
\]
where
\[
\mathbb E[q_1^{NS}\mid \mu_{-1}]=\alpha_0+\beta_0\mathbb E[\mu_1\mid \mu_{-1}]
=\alpha_0+\beta_0\gamma \mu_{-1}.
\]
Matching coefficients in $q_j^P=\alpha_1+\beta_1 \mu_{-1}$ gives
\begin{equation}
\label{eq:eqs-firm2}
\alpha_1=\frac{a-\alpha_0}{n},
\qquad
\beta_1=\frac{1-\beta_0\gamma}{n}.
\end{equation}
Solving \eqref{eq:eqs-firm1}--\eqref{eq:eqs-firm2} yields $\alpha_0=\alpha_1=a/(n+1)$ and
\begin{equation}
\label{eq:beta0-closed}
\beta_0(\tau_p)
=
\frac{1-\frac{n-1}{n}\delta}{2-\frac{n-1}{n}\delta\,\gamma}
=
\frac{1-\frac{n-1}{n}V_{-1}A(\tau_p)}
{2-\frac{n-1}{n}\Big(V_{-1}A(\tau_p)\Big)\cdot 
\frac{\frac{u^2}{u+\tau}V_{-1}A(\tau_p)}{u-V_{-1}} }.
\end{equation}
(Here $\delta,\gamma$ are given explicitly by \eqref{eq:delta-closed}--\eqref{eq:gamma-closed}.)

Using the same best-response identity, conditional expected profit equals $(q_1^{NS})^2$; hence
\begin{equation}
\label{eq:profit-NS}
\mathbb E[\pi_1^{NS}]
=\mathbb E[(q_1^{NS})^2]
=\frac{a^2}{(n+1)^2}+\beta_0(\tau_p)^2\,\Var(\mu_1)
=
\frac{a^2}{(n+1)^2}+\beta_0(\tau_p)^2\frac{u^2}{u+\tau}.
\end{equation}

Define the unilateral gain from joining when others join:
\[
\Delta(\tau_p):=\mathbb E[\pi_1^{S}]-\mathbb E[\pi_1^{NS}].
\]
Combining \eqref{eq:profit-joinall} and \eqref{eq:profit-NS} yields the fully explicit expression
\begin{equation*}
\label{eq:Delta-n}
\Delta(\tau_p)
=
\frac{\Var(\mu_{n})}{(n+1)^2}
-
\beta_0(\tau_p)^2\,\frac{u^2}{u+\tau},
\qquad
\Var(\mu_{n})=u-\Big(\frac{1}{u}+\frac{n}{\tau}+\frac{1}{\tau_p}\Big)^{-1},
\end{equation*}
with $\beta_0(\tau_p)$ given by \eqref{eq:beta0-closed} and $V_{-1}$ defined in \eqref{eq:m-minus1}.

The sharing profile is a symmetric equilibrium if and only if $\Delta_n(\tau_p)\ge 0$.
Define $\tau_p^*$ as the (unique) solution to $\Delta_n(\tau_p)=0$, which is exactly the indifference condition
\[
\frac{\Var(\mu_{n})}{(n+1)^2}
=
\beta_0(\tau_p)^2\frac{u^2}{u+\tau},
\]
equivalently \eqref{eq:taup-star-equation}.  This equation provides an explicit computable characterization of $\tau_p^*$ (e.g., via a one-dimensional root search), completing the proof.

\end{proof}

\subsection{\texorpdfstring{\Cref{prop:privacy-restores-sharing-n}}{propprivacyrestoressharingn}: Privacy noise induces voluntary sharing among $n$ firms}\label{app:proof:prop:privacy-restores-sharing-n}
\begin{proof}
We analyze the unilateral incentive of firm~$1$.

When a set $J\subseteq\{1,\dots,n\}$ participates, the platform observes privatized reports
\[
y_i:=s_i+\xi_i=\theta+(\varepsilon_i+\xi_i),\qquad \varepsilon_i\sim\mathcal N(0,\tau),\ \xi_i\sim\mathcal N(0,m),
\]
so that $\varepsilon_i+\xi_i\sim \mathcal N(0,\tau+m)$.
The platform releases the posterior mean
\[
t_J(m):=\mathbb{E}\!\left[\theta \,\middle|\, \{y_i\}_{i\in J},\, s_p\right].
\]
Under joint Gaussianity, $t_J(m)$ is linear and admits the precision-weighted form
\begin{equation}
\label{eq:tJ-form-proof}
t_J(m)
=
V_J(m)\left(
\frac{1}{\tau+m}\sum_{i\in J} y_i+\frac{1}{\tau_p}s_p
\right),
\qquad
V_J(m):=
\left(\frac{1}{u}+\frac{|J|}{\tau+m}+\frac{1}{\tau_p}\right)^{-1}.
\end{equation}
Moreover, since $t_J(m)=\mathbb{E}[\theta\mid \{y_i\}_{i\in J},s_p]$, the law of total variance yields
\begin{equation*}
\label{eq:var-tJ-proof}
\Var\!\big(t_J(m)\big)
=
\Var(\theta)-\mathbb{E}\!\big[\Var(\theta\mid \{y_i\}_{i\in J},s_p)\big]
=
u - V_J(m),
\end{equation*}
where the last equality uses $\Var(\theta)=u$ and the fact that in the Gaussian signal model the posterior variance
$\Var(\theta\mid \{y_i\}_{i\in J},s_p)=V_J(m)$ is deterministic.

We will use two special cases. Under full participation $J=[n]$,
\begin{equation*}
\label{eq:t-all-proof}
t_{[n]}(m)=V_n(m)\left(\frac{1}{\tau+m}\sum_{i=1}^n y_i+\frac{1}{\tau_p}s_p\right),
\qquad
V_n(m):=\left(\frac{1}{u}+\frac{n}{\tau+m}+\frac{1}{\tau_p}\right)^{-1}.
\end{equation*}
Under unilateral non-participation by firm~$1$, the platform aggregates only firms $2,\dots,n$:
\begin{equation*}
\label{eq:t-minus1-proof}
t_{-1}(m):=t_{[n]\setminus\{1\}}(m)
=
V_{n-1}(m)\left(\frac{1}{\tau+m}\sum_{i=2}^n y_i+\frac{1}{\tau_p}s_p\right),
\qquad
V_{n-1}(m):=\left(\frac{1}{u}+\frac{n-1}{\tau+m}+\frac{1}{\tau_p}\right)^{-1}.
\end{equation*}

\paragraph{Sharing equilibrium.}
Suppose all firms participate. Firm~$i$ observes $(s_i,t_{[n]}(m))$ and we look for a symmetric linear equilibrium of the form
\begin{equation}
\label{eq:q-joinall-privacy}
q_i^{S}(m)=c+d\,s_i+e\,t_{[n]}(m),\qquad i=1,\dots,n.
\end{equation}
Let $t:=t_{[n]}(m)$ and define $X:=(s_i,t)^{\mathsf T}$.
By Gaussian projection, for any $j\neq i$ and for $\theta$,
\begin{equation}
\label{eq:gauss-proj-joinall}
\mathbb{E}[s_j\mid s_i,t]=\Sigma_{s_jX}\Sigma_{XX}^{-1}X,
\qquad
\mathbb{E}[\theta\mid s_i,t]=\Sigma_{\theta X}\Sigma_{XX}^{-1}X,
\end{equation}
where the covariance objects are explicit. Writing $t=\alpha\sum_{k=1}^n y_k+\beta s_p$ with
\[
\alpha=\frac{V_n(m)}{\tau+m},\qquad \beta=\frac{V_n(m)}{\tau_p},
\]
we have
\[
\Var(s_i)=u+\tau,\qquad \Var(t)=u-V_n(m),
\qquad \Cov(\theta,t)=\Var(t)=u-V_n(m),
\]
and
\[
\Cov(s_i,t)=\Cov(s_j,t)=:\,C_{st},
\qquad
C_{st}=\alpha(nu+\tau)+\beta u\quad (j\neq i).
\]
\[
\Cov(s_i,s_j)=u\quad (j\neq i),
\qquad
\Cov(\theta,s_i)=u.
\]

Hence
\begin{align}
\label{eq:SigmaXX-joinall}
\Sigma_{XX}&=
\begin{bmatrix}
u+\tau & \alpha(nu+\tau)+\beta u\\
\alpha(nu+\tau)+\beta u & u-V_n(m)
\end{bmatrix}, \\
\Sigma_{s_jX}&=\begin{bmatrix} u & \alpha(nu+\tau)+\beta u\end{bmatrix}. \nonumber \\
\Sigma_{\theta X}&=\begin{bmatrix} u & u-V_n(m)\end{bmatrix}, \nonumber
\end{align}

Firm~$i$'s first-order condition under inverse demand $p=a+\theta-\sum_{k=1}^n q_k$ is
\begin{equation}
\label{eq:foc-joinall-privacy}
q_i^{S}(m)=\frac12\Big(a-\mathbb{E}[Q_{-i}\mid s_i,t]+\mathbb{E}[\theta\mid s_i,t]\Big),
\qquad Q_{-i}:=\sum_{k\neq i}q_k^{J}(m).
\end{equation}
Using \eqref{eq:q-joinall-privacy} and \eqref{eq:gauss-proj-joinall},
\[
\mathbb{E}[Q_{-i}\mid s_i,t]
=(n-1)c+(n-1)d\,\mathbb{E}[s_j\mid s_i,t]+(n-1)e\,t.
\]
Substituting into \eqref{eq:foc-joinall-privacy}, and matching coefficients on $(1,s_i,t)$ yields a linear system in $(c,d,e)$ with unique solution.\footnote{Uniqueness follows because $\Sigma_{XX}$ is positive definite for $u>0$, $\tau>0$, $\tau_p\in(0,\infty)$, and $m\ge 0$.}
In particular, the constant term is unchanged:
\begin{equation}
\label{eq:c-privacy}
c=\frac{a}{n+1}.
\end{equation}
The remaining coefficients can be written compactly in terms of the regression (projection) coefficients obtained from \eqref{eq:SigmaXX-joinall}. Define
\[
\mathbb{E}[s_j\mid s_i,t]=\lambda_s\,s_i+\lambda_t\,t,\qquad
\mathbb{E}[\theta\mid s_i,t]=\kappa_s\,s_i+\kappa_t\,t,
\]
where $(\lambda_s,\lambda_t)$ and $(\kappa_s,\kappa_t)$ are computed explicitly from \eqref{eq:gauss-proj-joinall}--\eqref{eq:SigmaXX-joinall}.

Matching coefficients on $(s_i,t)$ yields the linear system
\begin{equation}
\label{eq:de-system-privacy}
\begin{aligned}
d &= \frac12\big(\kappa_s-(n-1)d\lambda_s\big),\\
e &= \frac12\big(\kappa_t-(n-1)d\lambda_t\big)-\frac{n-1}{2}e.
\end{aligned}
\end{equation}

i.e., equivalently,
\begin{equation}
\label{eq:de-matched-privacy}
d=\frac12\Big(\kappa_s-(n-1)d\lambda_s\Big),
\qquad
e=\frac12\Big(\kappa_t-(n-1)d\lambda_t\Big)-\frac{n-1}{2}e.
\end{equation}
(These yield closed-form rational expressions in $(n,u,\tau,\tau_p,m)$ upon substituting $\lambda_s,\lambda_t,\kappa_s,\kappa_t$ from \eqref{eq:SigmaXX-joinall}.)

Under linear demand and risk-neutrality, conditional expected profit at the best response equals the square of the chosen quantity, hence
\[
\mathbb{E}[\pi_i^{S}(m)\mid s_i,t]=(q_i^{S}(m))^2,
\qquad\Rightarrow\qquad
\mathbb{E}[\pi_1^{S}(m)]=\mathbb{E}\!\big[(q_1^{S}(m))^2\big].
\]
Using \eqref{eq:q-joinall-privacy} and $\mathbb{E}[s_1]=\mathbb{E}[t]=0$,
\begin{equation}
\label{eq:profit-joinall-privacy}
\mathbb{E}[\pi_1^{S}(m)]
=
c^2+d^2\Var(s_1)+e^2\Var(t)+2de\,\Cov(s_1,t),
\end{equation}
where $\Var(s_1)=u+\tau$, $\Var(t)=u-V_n(m)$, and $\Cov(s_1,t)=\alpha(nu+\tau)+\beta u$.

\paragraph{Unilateral deviation.}
Suppose firm~$1$ does not participate, so it observes only $s_1$ and uses a linear strategy
\begin{equation}
\label{eq:q1-NS-privacy}
q_1^{NS}(m)=c_0+d_0\,s_1.
\end{equation}
The remaining $n-1$ firms participate. The platform releases $t_{-1}(m)$ in \eqref{eq:t-minus1-proof}.
Each participating firm $j\in\{2,\dots,n\}$ observes $(s_j,t_{-1}(m))$ and uses a linear strategy
\begin{equation}
\label{eq:qj-P-privacy}
q_j^{S}(m)=c_1+d_1\,s_j+e_1\,t_{-1}(m).
\end{equation}

\smallskip
\emph{Firm 1 (non-participation).}
Firm~$1$'s best response is
\[
q_1=\frac12\Big(a+\mathbb{E}[\theta\mid s_1]-\mathbb{E}[Q_{-1}\mid s_1]\Big),
\qquad Q_{-1}:=\sum_{j=2}^n q_j^{S}(m).
\]
We have $\mathbb{E}[\theta\mid s_1]=\frac{u}{u+\tau}s_1$ and $\mathbb{E}[s_2\mid s_1]=\frac{u}{u+\tau}s_1$.
Moreover, since $t_{-1}(m)$ is Gaussian and linear in $(\theta,\varepsilon_2,\dots,\varepsilon_n,\xi_2,\dots,\xi_n,\eta)$,
\[
\mathbb{E}[t_{-1}(m)\mid s_1]=\frac{\Cov(t_{-1}(m),s_1)}{\Var(s_1)}\,s_1.
\]
Writing $t_{-1}(m)=\alpha_{-1}\sum_{k=2}^n y_k+\beta_{-1}s_p$ with
\[
\alpha_{-1}=\frac{V_{n-1}(m)}{\tau+m},\qquad \beta_{-1}=\frac{V_{n-1}(m)}{\tau_p},
\]
a direct covariance calculation gives
\[
\Cov(s_1,t_{-1}(m))=\alpha_{-1}(n-1)u+\beta_{-1}u,\qquad \Var(s_1)=u+\tau.
\]
Therefore,
\begin{equation}
\label{eq:Etminus1-given-s1}
\mathbb{E}[t_{-1}(m)\mid s_1]=\frac{\alpha_{-1}(n-1)u+\beta_{-1}u}{u+\tau}\,s_1.
\end{equation}
Using \eqref{eq:qj-P-privacy} and linearity,
\[
\mathbb{E}[Q_{-1}\mid s_1]
=(n-1)c_1+(n-1)d_1\frac{u}{u+\tau}s_1+(n-1)e_1\,\mathbb{E}[t_{-1}(m)\mid s_1].
\]
Substituting into the best response and matching $(1,s_1)$ with \eqref{eq:q1-NS-privacy} yields
\begin{equation}
\label{eq:c0-d0-rel}
c_0=\frac{a-(n-1)c_1}{2},
\qquad
d_0=\frac12\left(
\frac{u}{u+\tau}-(n-1)d_1\frac{u}{u+\tau}-(n-1)e_1\frac{\alpha_{-1}(n-1)u+\beta_{-1}u}{u+\tau}
\right).
\end{equation}

\smallskip
\emph{Participating firms $j\ge 2$.}
Fix $j\in\{2,\dots,n\}$ and let $X:=(s_j,t_{-1}(m))^{\mathsf T}$.
Then Gaussian projection implies
\[
\mathbb{E}[s_1\mid s_j,t_{-1}(m)]=\Sigma_{s_1X}\Sigma_{XX}^{-1}X,
\qquad
\mathbb{E}[\theta\mid s_j,t_{-1}(m)]=\Sigma_{\theta X}\Sigma_{XX}^{-1}X.
\]
Moreover, for any $k\in\{2,\dots,n\}\setminus\{j\}$,
\[
\mathbb{E}[s_k\mid s_j,t_{-1}(m)]
=\Sigma_{s_kX}\Sigma_{XX}^{-1}X.
\]

The covariance elements are explicit:
\[
\Var(s_j)=u+\tau,\qquad 
\Var\!\big(t_{-1}(m)\big)=u-V_{n-1}(m),\qquad 
\Cov(\theta,t_{-1}(m))=\alpha_{-1}(n-1)u+\beta_{-1}u,
\]
\[
\Cov(s_j,t_{-1}(m))=\alpha_{-1}((n-1)u+\tau)+\beta_{-1}u,
\qquad
\Cov(s_1,t_{-1}(m))=\alpha_{-1}(n-1)u+\beta_{-1}u,
\]
\[
\Cov(s_1,s_j)=u,\qquad 
\Cov(\theta,s_j)=u,
\qquad
\Cov(s_k,s_j)=u,
\qquad
\Cov(s_k,t_{-1}(m))=\Cov(s_j,t_{-1}(m)).
\]

Thus the $2\times2$ covariance matrix $\Sigma_{XX}$ of $(s_j,t_{-1}(m))$ and the cross-covariance vectors
$\Sigma_{s_1X}$ ,$\Sigma_{\theta X}$, $\Sigma_{s_kX}$ are known explicitly and can be inverted in closed form.

Since $s_1=\theta+\varepsilon_1$, where $\varepsilon_1$ is independent of 
$(s_j,t_{-1}(m))$ and has zero mean, it follows that
\[
\mathbb{E}[s_1\mid s_j,t_{-1}(m)]
=\mathbb{E}[\theta\mid s_j,t_{-1}(m)].
\]
Therefore, Gaussian projection yields
\[
\mathbb{E}[s_1\mid s_j,t_{-1}(m)]
=\mathbb{E}[\theta\mid s_j,t_{-1}(m)]
=\phi_s\,s_j+\phi_t\,t_{-1}(m),
\]
and for any $k\neq j$, $k\ge 2$,
\[
\mathbb{E}[s_k\mid s_j,t_{-1}(m)]
=\chi_s\,s_j+\chi_t\,t_{-1}(m),
\]
where
\[
(\phi_s,\phi_t):=\Sigma_{s_1X}\Sigma_{XX}^{-1}
=\Sigma_{\theta X}\Sigma_{XX}^{-1},
\qquad
(\chi_s,\chi_t):=\Sigma_{s_kX}\Sigma_{XX}^{-1}.
\]

The first-order condition for firm $j$ is
\[
q_j=\frac12\Big(a+\mathbb{E}[\theta\mid s_j,t_{-1}(m)]
-\mathbb{E}[q_1^{NS}(m)+\sum_{k\in\{2,\dots,n\}\setminus\{j\}}q_k^{S}(m)\mid s_j,t_{-1}(m)]\Big).
\]

Using the linear strategies
\[
q_1^{NS}(m)=c_0+d_0 s_1,
\qquad
q_k^{S}(m)=c_1+d_1 s_k+e_1 t_{-1}(m),
\]
we obtain
\[
\mathbb{E}[q_1^{NS}(m)\mid s_j,t]
= c_0+d_0(\phi_s s_j+\phi_t t),
\]
\[
\mathbb{E}\!\left[\sum_{k\neq j,k\ge 2} q_k^{S}(m)\mid s_j,t\right]
=(n-2)c_1+(n-2)d_1(\chi_s s_j+\chi_t t)+(n-2)e_1 t.
\]

Matching coefficients on $(1,s_j,t_{-1}(m))$ with
$q_j^{S}(m)=c_1+d_1 s_j+e_1 t_{-1}(m)$ yields the linear system
\begin{equation}
\label{eq:c1-d1-e1-rel-correct}
\begin{aligned}
c_1 &= \frac12\Big(a-c_0-(n-2)c_1\Big),\\
d_1 &= \frac12\Big(\phi_s - d_0\,\phi_s - (n-2)d_1\,\chi_s\Big),\\
e_1 &= \frac12\Big(\phi_t - d_0\,\phi_t - (n-2)d_1\,\chi_t - (n-2)e_1\Big).
\end{aligned}
\end{equation}

Combining \eqref{eq:c0-d0-rel} and the first equation in
\eqref{eq:c1-d1-e1-rel-correct} gives
\begin{equation*}
\label{eq:c0c1-final}
c_0=c_1=\frac{a}{n+1},
\end{equation*}
and the remaining coefficients $(d_0,d_1,e_1)$ are uniquely determined by the resulting
$3\times3$ linear system in \eqref{eq:c0-d0-rel} and
\eqref{eq:c1-d1-e1-rel-correct}.

Conditional expected profit at the best response equals the square of the chosen quantity, hence
\[
\mathbb{E}[\pi_1^{NS}(m)\mid s_1]=(q_1^{NS}(m))^2,
\qquad\Rightarrow\qquad
\mathbb{E}[\pi_1^{NS}(m)]=\mathbb{E}\!\big[(q_1^{NS}(m))^2\big]
=c_0^2+d_0^2\Var(s_1),
\]
so
\begin{equation}
\label{eq:profit-NS-privacy}
\mathbb{E}[\pi_1^{NS}(m)]
=
\frac{a^2}{(n+1)^2}+d_0^2\,(u+\tau).
\end{equation}

Define the unilateral participation gain under privacy noise,
\[
\Delta(\tau_p,m):=\mathbb{E}[\pi_1^{S}(m)]-\mathbb{E}[\pi_1^{NS}(m)].
\]
By construction, the privacy mechanism with $m=0$ coincides with the non-private platform. Since $\tau_p>\tau_p^*(n,u,\tau)$ implies that full participation is not an equilibrium under the non-private platform, we have
\begin{equation}
\label{eq:Delta-neg-at-0}
\Delta_n(\tau_p,0)<0.
\end{equation}

Next, consider the limit $m\to\infty$. In \eqref{eq:tJ-form-proof}, the precision of each privatized report is $1/(\tau+m)\to 0$, so the platform statistic satisfies
\[
t_{[n]}(m)\ \xrightarrow[]{m\nearrow\infty}\ \mathbb{E}[\theta\mid s_p],
\qquad
t_{-1}(m)\ \xrightarrow[]{m\nearrow\infty}\ \mathbb{E}[\theta\mid s_p],
\]
in $L^2$ (and hence in distribution). Thus, as $m\to\infty$, the sharing environment grants firm~$1$ access (through participation) to the informative public posterior $\mathbb{E}[\theta\mid s_p]$, whereas under unilateral non-participation firm~$1$ remains restricted to $s_1$ while its rivals condition on $(s_j,\mathbb{E}[\theta\mid s_p])$. Since $\tau_p<\infty$ implies that $\mathbb{E}[\theta\mid s_p]$ is non-degenerate and payoff-relevant, this strictly increases firm~$1$'s optimized ex-ante payoff, implying
\begin{equation}
\label{eq:Delta-pos-infty}
\lim_{m\to\infty}\Delta(\tau_p,m)>0.
\end{equation}

Finally, $\Delta_n(\tau_p,m)$ is continuous in $m$ on $[0,\infty)$: all equilibrium coefficients are obtained by solving linear systems whose entries are polynomials in the covariance elements of $(\theta,s_i,s_p,\{y_i\})$, and these covariance elements depend continuously on $m$ (indeed, rationally through $V_n(m)$ and $V_{n-1}(m)$).
Therefore, by \eqref{eq:Delta-neg-at-0}--\eqref{eq:Delta-pos-infty} and the intermediate value theorem, there exists $m^*>0$ such that $\Delta_n(\tau_p,m^*)=0$, and any $m>m^*$ yields $\Delta_n(\tau_p,m)>0$.
This proves that under the privacy-preserving platform, full participation is sustained as a symmetric equilibrium for a suitable privacy noise level.
\end{proof}

\subsection{\texorpdfstring{\Cref{prop:privacy-noise-welfare}}{propprivacynoisewelfare}: Effect of privacy noise}
\label{app:proof:prop:privacy-noise-welfare}

\begin{proof}
The privacy noise \(m\) affects welfare only through the informativeness of the
platform posterior. Differentiating \(\sigma_S^2\) with respect to \(m\), we obtain
\[
\frac{\partial \sigma_S^2}{\partial m}
=
-\frac{n}{(\tau+m)^2\mathcal{P}_n^2}
<0.
\]
Since
\[
\frac{\partial \mathbb{E}[CS^S]}{\partial \sigma_S^2}
=
\frac{n^2}{2(n+1)^2}>0,
\]
\[
\frac{\partial \mathbb{E}[PS^S]}{\partial \sigma_S^2}
=
\frac{n}{(n+1)^2}>0,
\]
and
\[
\frac{\partial \mathbb{E}[TS^S]}{\partial \sigma_S^2}
=
\frac{n(n+2)}{2(n+1)^2}>0,
\]
it follows that
\[
\frac{\partial \mathbb{E}[CS^S]}{\partial m}<0,
\qquad
\frac{\partial \mathbb{E}[PS^S]}{\partial m}<0,
\qquad
\frac{\partial \mathbb{E}[TS^S]}{\partial m}<0.
\]
The no-sharing surplus levels do not depend on \(m\). Therefore,
\[
\frac{\partial \Delta CS}{\partial m}<0,
\qquad
\frac{\partial \Delta PS}{\partial m}<0,
\qquad
\frac{\partial \Delta TS}{\partial m}<0.
\]
\end{proof}

\subsection{\texorpdfstring{\Cref{prop:platform-signal-noise-welfare}}{propplatformsignalnoisewelfare}: Effect of platform signal noise}
\label{app:proof:prop:platform-signal-noise-welfare}

\begin{proof}
The platform signal noise \(\tau_p\) affects welfare only through the informativeness
of the shared posterior. Differentiating \(\sigma_S^2\) with respect to \(\tau_p\), we obtain
\[
\frac{\partial \sigma_S^2}{\partial \tau_p}
=
-\frac{1}{\tau_p^2\mathcal{P}_n^2}
<0.
\]
Because each sharing-regime welfare term is increasing in \(\sigma_S^2\), we have
\[
\frac{\partial \mathbb{E}[CS^S]}{\partial \tau_p}<0,
\qquad
\frac{\partial \mathbb{E}[PS^S]}{\partial \tau_p}<0,
\qquad
\frac{\partial \mathbb{E}[TS^S]}{\partial \tau_p}<0.
\]
The no-sharing surplus levels are independent of \(\tau_p\). Hence,
\[
\frac{\partial \Delta CS}{\partial \tau_p}<0,
\qquad
\frac{\partial \Delta PS}{\partial \tau_p}<0,
\qquad
\frac{\partial \Delta TS}{\partial \tau_p}<0.
\]
\end{proof}

\subsection{\texorpdfstring{\Cref{prop:full-sharing-implementation}}{propfullsharing}: Full-sharing implementation}
\label{app:prop:full-sharing-implementation}
\begin{proof}
Consider any participation profile that is not full participation. Then there exists at least
one firm that does not participate. Let \(k\in\{0,\ldots,n-1\}\) be the number of other firms
that participate. By assumption,
\[
\Delta^k(m)>0.
\]
Therefore, this nonparticipating firm can strictly increase its expected profit by joining the
platform. Hence, no non-full participation profile can be a pure-strategy equilibrium.

At the full-participation profile, each firm faces \(n-1\) other participants. Since
\[
\Delta^{n-1}(m)>0,
\]
no firm has a profitable deviation from participation to nonparticipation. Therefore, full
participation is a pure-strategy equilibrium. Since all non-full participation profiles have
been ruled out, full participation is the unique pure-strategy equilibrium.
\end{proof}

\section{Information Sharing with Heterogeneous Signals and Under Sequential Entry} \label{sec:heterogeneous-sequential}

\subsection{Information Sharing with Heterogeneous Private Signals}\label{app:b-1:information-sharing-with-heterogeneous-private-signals}

We extend the baseline model to allow for heterogeneity in firms’ private information.
Firm $i$ observes a private signal
\[
s_i = \theta + \varepsilon_i, \qquad \varepsilon_i \sim \mathcal N(0,\tau_i),
\]
where signal noise variances $(\tau_i)_{i=1}^n$ may differ across firms.
As before, the platform observes an external signal
\[
s_p = \theta + \varepsilon_p, \qquad \varepsilon_p \sim \mathcal N(0,\tau_p).
\]

Under the privacy-preserving mechanism, if firm $i$ participates, its reported signal is perturbed as
\[
\tilde s_i = s_i + v_i, \qquad v_i \sim \mathcal N(0,m_i),
\]
where the platform may choose firm-specific privacy noise levels $(m_i)_{i=1}^n$.
Given the privatized reports of participating firms and the platform signal,
the platform releases the posterior mean of the common state to all participants.

\begin{proposition}[Privacy Restores Voluntary Sharing under Heterogeneous Signals]
\label{prop:privacy-restores-sharing-hetero}
Consider the $n$-firm model with heterogeneous signal precisions $(\tau_i)_{i=1}^n$ described above.
Suppose the platform signal has finite variance, i.e., $\tau_p<\infty$, but is not precise enough
to sustain full participation under the non-private mechanism.
Then there exists a finite vector of privacy noise levels
$\mathbf m^\ast=(m_1^\ast,\dots,m_n^\ast)\in(0,\infty)^n$ such that,
under the privacy-preserving mechanism with $\mathbf m \ge \mathbf m^\ast$,
every firm strictly prefers to participate when all other firms participate. Consequently, the sharing profile is sustained as an equilibrium of the participation game.
\end{proposition}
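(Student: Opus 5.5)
The plan mirrors the proof of \Cref{prop:privacy-restores-sharing-n}, now with a vector of noise levels. Fix a firm $i$. I compare two profiles. In the first, all firms participate and firm $i$ observes $(s_i,t_{[n]}(\mathbf m))$. In the second, firm $i$ opts out and observes only $s_i$, while each rival $j\neq i$ observes $(s_j,t_{-i}(\mathbf m))$. Here
\[
t_J(\mathbf m)=V_J(\mathbf m)\Big(\sum_{j\in J}\tfrac{\tilde s_j}{\tau_j+m_j}+\tfrac{s_p}{\tau_p}\Big),\qquad V_J(\mathbf m)=\Big(\tfrac1u+\sum_{j\in J}\tfrac{1}{\tau_j+m_j}+\tfrac{1}{\tau_p}\Big)^{-1}.
\]
In each profile I look for a linear Bayesian Nash equilibrium with firm-specific coefficients, $q_j=c_j+d_js_j+e_jt$. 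Matching coefficients in the first-order conditions gives a finite linear system whose entries are rational in $(u,\tau_j,\tau_p,m_j)$. I will argue it has a unique solution, by strict concavity of the quadratic team-like payoff structure and positive definiteness of the Gaussian covariance matrices. The best-response identity $\mathbb E[\pi_j\mid\mathcal I_j]=q_j^2$ then gives expected profits as $\mathbb E[q_j^2]$. This defines the participation gain $\Delta_i(\mathbf m)$, which is continuous on $[0,\infty]^n$ once the endpoint is interpreted via limits.

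The key step is the joint limit $\mathbf m\to\infty$ componentwise. All report precisions $1/(\tau_j+m_j)$ vanish, so $t_{[n]}(\mathbf m)$ and $t_{-i}(\mathbf m)$ both converge in $L^2$ to $\mu_p:=\mathbb E[\theta\mid s_p]$. The limiting game is then an asymmetric Cournot game. If $i$ participates, every firm observes $(s_j,\mu_p)$. If $i$ deviates, it observes only $s_i$ while rivals observe $(s_j,\mu_p)$. In the limit, firm $i$'s own report no longer affects rivals' information, so the strategic-exposure channel is shut off. What remains is the value to $i$ of the extra, non-degenerate signal $\mu_p$.

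I then need to show $\lim\Delta_i>0$ strictly, and I expect this to be the main obstacle. In Cournot games, extra information for one firm, holding rivals' information fixed, is valuable. To see this, write firm $i$'s equilibrium profit as $\mathbb E[q_i^2]$. Rivals' strategies in the two limit games differ only through how they respond to $i$'s changed strategy, and this feedback must be controlled. I would first try an explicit computation. In both limit games the equilibrium can be solved in closed form in terms of the projections onto $(s_j,\mu_p)$. One then checks that $\mathbb E[q_i^2]$ is strictly larger with $\mu_p$, using $\Var(\mu_p\mid s_i)>0$ whenever $\tau_p<\infty$. If that becomes messy, the fallback is an envelope/revealed-preference argument. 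With the rivals' limit strategies $\{q_j\}$ held fixed, firm $i$'s best response given $(s_i,\mu_p)$ weakly beats any $s_i$-measurable strategy, and strictly so because $\mathbb E[\theta-\sum_j q_j\mid s_i,\mu_p]$ genuinely depends on $\mu_p$. One then bounds the equilibrium adjustment of the rivals, which is second order and dampened by strategic substitutability. Even if strictness fails for some parameters, the argument should at least deliver $\lim\Delta_i>0$ under the stated hypothesis $\tau_p<\infty$.

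Given $\lim_{\mathbf m\to\infty}\Delta_i(\mathbf m)>0$ for each of the finitely many $i$, continuity yields a threshold $M_i<\infty$ with $\Delta_i(\mathbf m)>0$ whenever $\min_j m_j\ge M_i$. Setting $m_j^\ast:=\max_i M_i$ for all $j$ (or any larger firm-specific vector) gives $\mathbf m\ge\mathbf m^\ast\Rightarrow\Delta_i(\mathbf m)>0$ for every $i$. The phrase ``for all $\mathbf m\ge\mathbf m^\ast$'' requires a uniform neighborhood of infinity rather than a single crossing point, and this is why I define $M_i$ via the limit rather than via the intermediate value theorem. Finally, the hypothesis that the non-private mechanism fails is used only to ensure $\mathbf m^\ast$ is strictly positive, since $\Delta_i(\mathbf 0)<0$ for some $i$ forces $\mathbf m^\ast\neq\mathbf 0$, and any $\mathbf m^\ast$ in the constructed neighborhood can be taken in $(0,\infty)^n$.
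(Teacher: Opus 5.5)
\textbf{There is a genuine gap, and it cannot be closed: the key limit step is false for heterogeneous precisions.}

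Your architecture matches the paper's. Both use continuity of $\Delta_i$, $L^2$ convergence of both released statistics to $\mu_p=\mathbb E[\theta\mid s_p]$, strict positivity of the limiting gain, and then thresholds. Your endgame is actually tighter than the paper's. The paper controls only the ray $\mathbf m=t\mathbf 1$ and then asserts $\Delta_i(\mathbf m)\ge\Delta_i(\mathbf m^\ast)$ for $\mathbf m\ge\mathbf m^\ast$ without proving any monotonicity. Your limit as $\min_j m_j\to\infty$ is the right way to get a uniform neighborhood; it is rigorous once $\Delta_i$ is viewed as a continuous function of the precisions $1/(\tau_j+m_j)$ up to $0$.

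The gap is the step you flag yourself: $\lim\Delta_i>0$. The paper handles it by asserting that a Blackwell-more-informative signal has strictly positive value in linear-quadratic Cournot. Your fallback treats the rivals' re-equilibration as second order. Neither is valid. The rivals' adjustment enters $i$'s profit at first order, through $-\mathbb E[q_i\,\delta Q_{-i}]$. With heterogeneous precisions it can favor \emph{not} seeing $\mu_p$. A firm that observes only a precise private signal is effectively committed to respond to the $\mu_p$-component of that signal. Against rivals who rely on $\mu_p$, this acts like Stackelberg leadership, and observing $\mu_p$ separately destroys the commitment.

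Here is a concrete counterexample. Take $n=2$, $u=1$, $\tau_1=1/9$, and $\tau_p=1$, so $\Var(\mu_p)=\Var(\theta\mid s_p)=1/2$. Let $\tau_2\to\infty$; by continuity, any sufficiently large finite $\tau_2$ gives the same signs.
\begin{itemize}
\item In the limiting participation game, $q_1=\frac a3+\frac13\mu_p+\frac9{22}(s_1-\mu_p)$ and $q_2=\frac a3+\frac13\mu_p$. Firm~1 therefore earns $\frac{a^2}9+\frac{125}{792}$.
\item In the limiting deviation game, $q_1=\frac a3+\frac{27}{71}s_1$ and $q_2=\frac a3+\frac{22}{71}\mu_p$. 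Firm~2 shades its response to $\mu_p$ below $\frac13$ because it expects firm~1 to respond to it with slope $\frac{27}{71}>\frac13$. Firm~1 therefore earns $\frac{a^2}9+\frac{810}{5041}$.
\end{itemize}
Hence $\lim\Delta_1=\frac{125}{792}-\frac{810}{5041}\approx-0.0029<0$.

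Also $\Delta_1(\mathbf 0)=\frac{10}{99}-\frac{810}{5041}<0$, so the proposition's hypothesis holds. Yet firm~1 strictly prefers to opt out for every sufficiently large $\mathbf m$, so no $\mathbf m^\ast$ with the stated property exists.

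So no sharper argument for the limit step can rescue this route. Your approach, like the paper's, yields the claim only under an additional hypothesis that guarantees a positive limiting gain. You should also drop your closing hedge, that strictness might fail yet $\lim\Delta_i>0$ would still follow; it is self-contradictory, since strictness is exactly what the argument needs.
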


\Cref{prop:privacy-restores-sharing-hetero}, proved in \cref{app:proof:prop:privacy-restores-sharing-hetero}, shows that the participation-restoring
role of privacy noise is robust to heterogeneity in firms’ information quality.
Even when firms possess private signals of different precisions, the platform can design a
privacy-preserving mechanism---potentially with asymmetric noise levels across firms---that
induces voluntary participation by all firms.

The key economic insight is unchanged from the homogeneous benchmark.
When privacy noise is sufficiently large, the information revealed through firms’ reports becomes
negligible, and the platform’s released statistic converges to the public posterior based solely
on the platform signal.
Because the platform signal has finite precision, access to this public posterior strictly
improves each firm’s information set relative to non-participation, under which a deviating firm
relies only on its private signal while competitors condition on a richer information set.
As a result, participation becomes a strictly dominant response to full participation by others,
despite heterogeneity in private signal accuracy.

To illustrate how heterogeneity in signal accuracy affects the required level of privacy noise,
we present a numerical example in a two-firm setting.
\Cref{fig:sharing_region_hete_tau} depicts the information-sharing regions in the $(m_1,m_2)$ plane
under the privacy-preserving mechanism.
The shaded regions indicate parameter values for which firm~1 and firm~2, respectively, are willing
to participate in the platform, while their intersection corresponds to values of $(m_1,m_2)$ for which
joint participation constitutes a Nash equilibrium.

Across panels, we vary firm~1’s signal precision by changing $\tau_1\in\{1,3,5\}$, while holding
firm~2’s signal noise fixed at $\tau_2=5$.
Other parameters are set to $u=1$, $\tau_p=20$, and $a=1$.
The figure reveals a clear comparative-static pattern.
As firm~1’s private signal becomes more precise (i.e., as $\tau_1$ decreases), the minimum privacy
noise required to induce participation by firm~2 decreases, whereas the privacy noise required for
firm~1’s own participation increases.
Intuitively, a more informative signal from firm~1 raises the informational benefit of participation
for firm~2, reducing its need for privacy protection, while simultaneously increasing firm~1’s
exposure to information leakage, thereby necessitating stronger privacy noise.

In the extreme case $\tau_1=1$, firm~2 is willing to participate even when $m_2=0$,
provided that the privacy noise applied to firm~1’s report is not too large.
In this regime, firm~2 benefits sufficiently from accessing firm~1’s highly informative signal,
making participation attractive despite the absence of direct privacy protection.

\begin{figure}[t]
    \centering
    \begin{subfigure}[t]{0.4\textwidth}
        \centering
        \includegraphics[width=\textwidth]{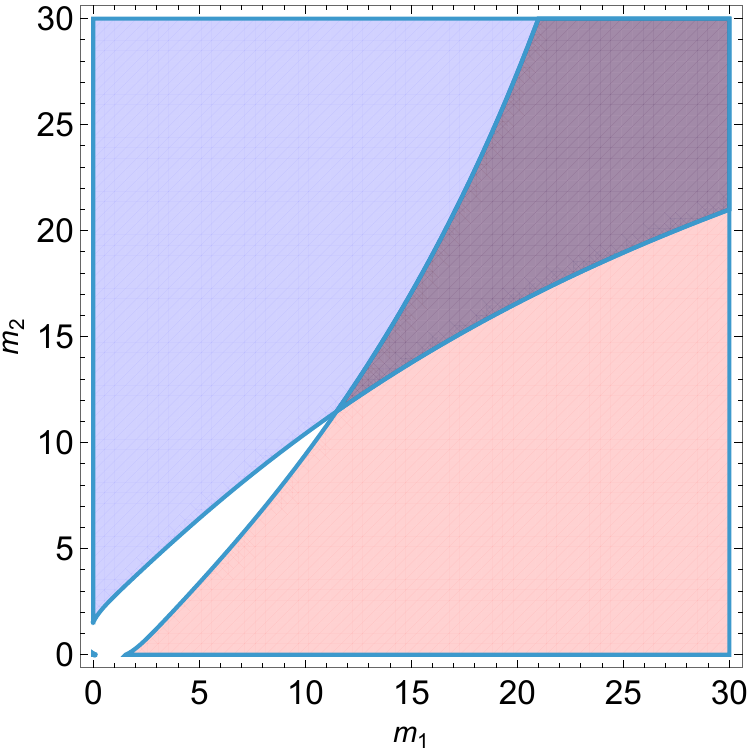}
        \caption{$\tau_1=5$}
        \label{fig:hete_tau5}
    \end{subfigure}
    \hfill
    \begin{subfigure}[t]{0.4\textwidth}
        \centering
        \includegraphics[width=\textwidth]{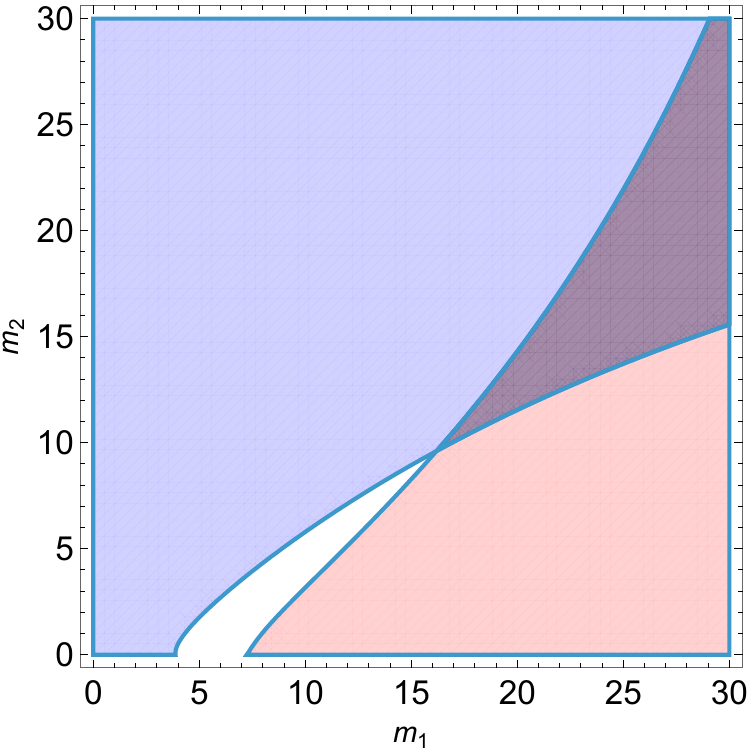}
        \caption{$\tau_1=3$}
        \label{fig:hete_tau3}
    \end{subfigure}

    \vspace{0.6em}

    \hspace*{0.15\textwidth}
    \begin{subfigure}[t]{0.55\textwidth}
        \centering
        \includegraphics[width=\textwidth]{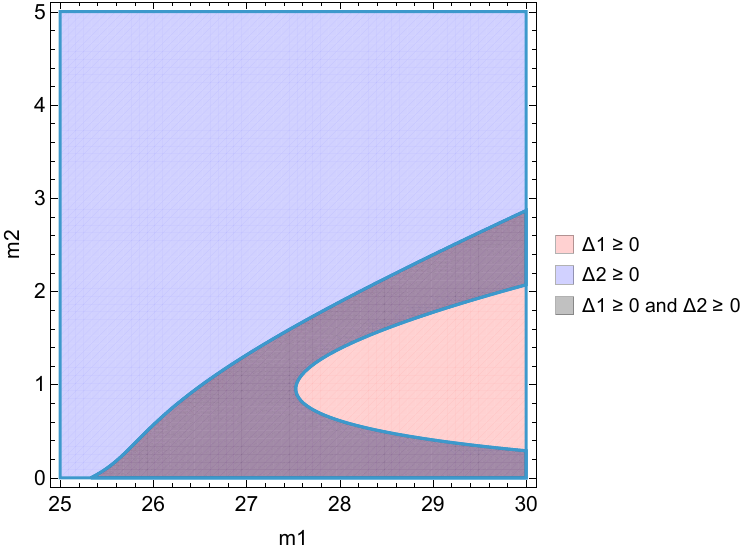}
        \caption{$\tau_1=1$}
        \label{fig:hete_tau1}
    \end{subfigure}

    \caption{
    \textbf{Information-sharing regions under heterogeneous private signals.}
    Each panel depicts the regions in the $(m_1,m_2)$ plane where firm~1 and firm~2 are willing to participate in the privacy-preserving information-sharing platform.
    The intersection corresponds to parameter values for which joint participation constitutes a Nash equilibrium.
    Across panels, firm~1's signal precision varies ($\tau_1\in\{1,3,5\}$), while firm~2's signal noise is fixed at $\tau_2=5$.
    Other parameters are $u=1$, $\tau_p=20$, and $a=1$.
    }
    \label{fig:sharing_region_hete_tau}
\end{figure}

To complement the two-dimensional sharing regions in
\Cref{fig:sharing_region_hete_tau,fig:equilibrium_m_vs_tau1}
provide a one-dimensional comparative-static perspective.
As firm~1’s signal noise $\tau_1$ increases, its equilibrium privacy noise requirement $m_1^*$
declines, reflecting the reduced privacy cost associated with a less informative signal.
In contrast, the equilibrium noise required to induce firm~2’s participation, $m_2^*$,
increases as the informational benefit from firm~1’s signal diminishes.

\begin{figure}[t]
    \centering
    \begin{subfigure}[t]{0.48\textwidth}
        \centering
        \includegraphics[width=\textwidth]{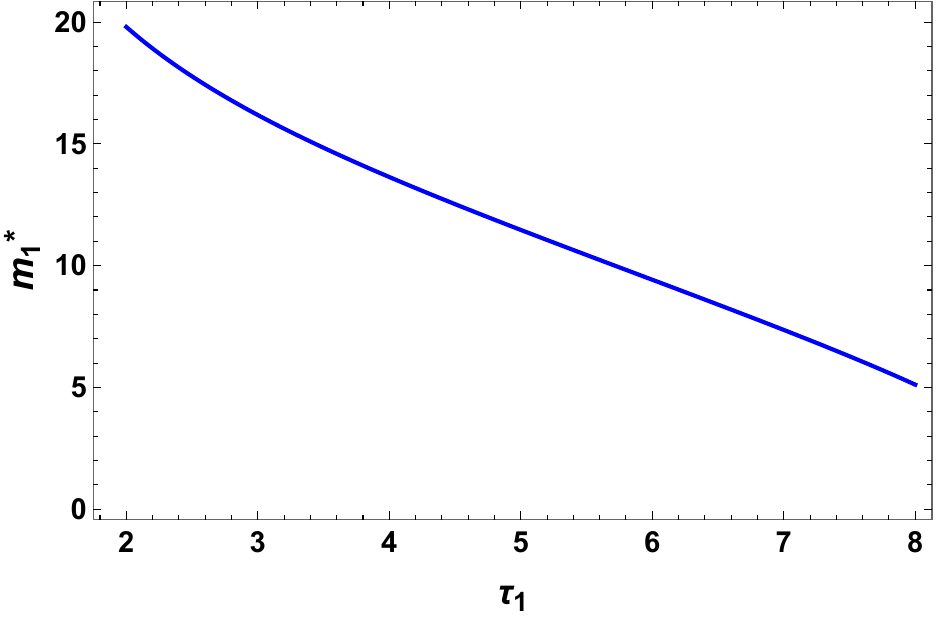}
        \caption{Equilibrium $m_1^*$ vs $\tau_1$}
        \label{fig:m1_tau1}
    \end{subfigure}
    \hfill
    \begin{subfigure}[t]{0.48\textwidth}
        \centering
        \includegraphics[width=\textwidth]{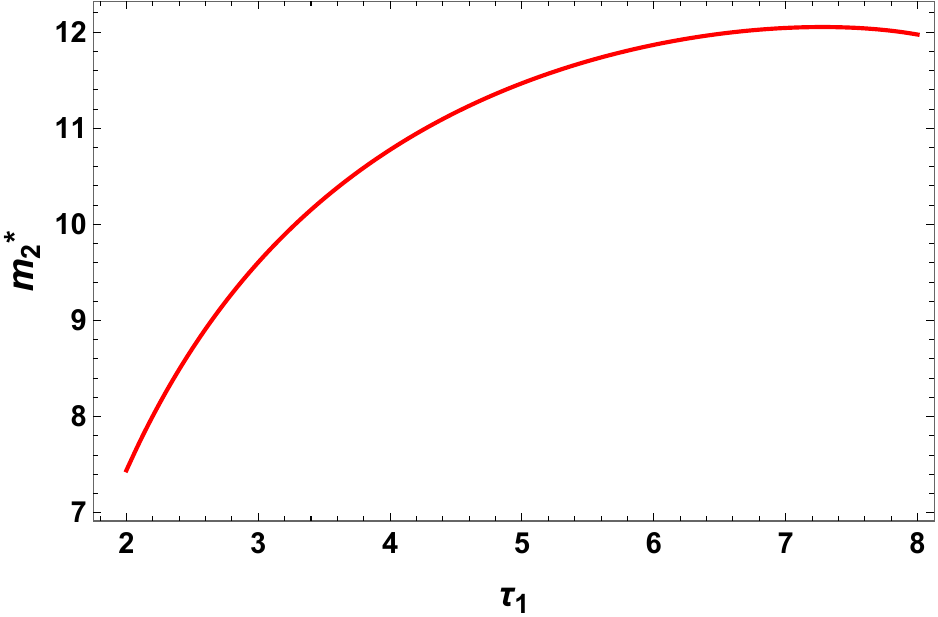}
        \caption{Equilibrium $m_2^*$ vs $\tau_1$}
        \label{fig:m2_tau1}
    \end{subfigure}

    \caption{
    \textbf{Impact of firm 1's signal precision $\tau_1$ on equilibrium privacy noise levels.} 
The left panel shows how the platform’s equilibrium privacy noise level for firm~1, $m_1^*$, varies with firm~1’s signal precision $\tau_1$, while the right panel shows the corresponding cross-effect on the equilibrium noise level $m_2^*$ assigned to firm~2. 
Fixed parameters are $\tau_2=5$, $u=1$, $\tau_p=20$, and $a=1$.
    }
    \label{fig:equilibrium_m_vs_tau1}
\end{figure}

\subsection{Heterogeneous Signals under Sequential Participation}\label{app:b-2:heterogeneous-signals-under-sequential-participation}

In this subsection, we study participation incentives in a heterogeneous information environment
under sequential decision-making.
We consider a setting in which all firms other than firm $i$ have already joined the platform
and share their signals using a common privacy noise level $m$.
Firm $i$ observes a private signal with noise variance $\tau_i$ that may differ from that of its competitors.

Rather than solving for a joint equilibrium in participation and privacy choices,
we analyze the participation decision of firm $i$ conditional on the existing participation of its competitors.
The platform is allowed to tailor the privacy noise level $m_i$ applied to firm $i$'s shared signal
in order to induce its participation.
This formulation captures a natural dynamic interpretation in which firms join the platform sequentially,
and the platform designs firm-specific privacy protection to encourage additional participation. In the following proposition, by focusing on this best-response problem, we isolate how firm-level signal heterogeneity shapes participation incentives, abstracting from the equilibrium feedback effects of simultaneous participation decisions (proof in \cref{app:proof:prop:sequential-heterogeneous}).

\begin{proposition}[Privacy-Induced Participation under Sequential Entry]
\label{prop:sequential-heterogeneous}
Consider the $n$-firm Cournot model under sequential participation.
Suppose all firms other than firm $i$ have already joined the platform
and share their signals using a common privacy noise level $m$.
Firm $i$ observes a private signal with noise variance $\tau_i$
and, if participating, is subject to a firm-specific privacy noise $m_i$.

Assume the platform signal has finite variance $\tau_p<\infty$ but is not sufficiently informative
to induce firm $i$ to participate without privacy protection, i.e.,
$\tau_p^*(n,u,\tau_i,\tau)<\tau_p<\infty$.
Then there exists a finite threshold $\bar m_i>0$ such that
assigning $m_i\ge \bar m_i$ induces firm $i$ to prefer participation,
taking other firms’ participation as given.
\end{proposition}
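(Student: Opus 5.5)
The argument follows the template of \Cref{prop:privacy-restores-sharing-n}, applied to firm~$i$'s one-dimensional best-response problem. Fix the incumbents' common noise $m$ and treat firm $i$'s noise $m_i$ as the only design variable. Define
\[
\Delta_i(m_i):=\mathbb{E}[\pi_i^{S}(m_i)]-\mathbb{E}[\pi_i^{NS}],
\]
where the second term does not depend on $m_i$. The steps are: compute both profits as solutions of linear Gaussian-projection systems; establish continuity of $\Delta_i$ in $m_i$; sign $\Delta_i(0)$ and $\lim_{m_i\to\infty}\Delta_i(m_i)$; and apply the intermediate value theorem.

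\textbf{Step 1 (quantity stage).} Under participation, the platform releases
\[
t_{[n]}(m_i)=V(m_i)\Big(\tfrac{y_i}{\tau_i+m_i}+\sum_{j\neq i}\tfrac{y_j}{\tau+m}+\tfrac{s_p}{\tau_p}\Big),
\qquad
V(m_i)=\Big(\tfrac1u+\tfrac{1}{\tau_i+m_i}+\tfrac{n-1}{\tau+m}+\tfrac1{\tau_p}\Big)^{-1}.
\]
Firm $i$ uses $q_i=c+d_is_i+e_it$ and each rival $j$ uses $q_j=c+d s_j+e t$. Under deviation, firm $i$ uses $c_0+d_0s_i$ and the rivals condition on $(s_j,t_{-i})$, where $t_{-i}$ does not depend on $m_i$. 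In both profiles I would write the first-order conditions, compute the $2\times2$ covariance matrices exactly as in the proof of \Cref{prop:privacy-restores-sharing-n} (now with $\tau_i$ in place of $\tau$ for firm $i$), and match coefficients. The intercepts are $a/(n+1)$ in both cases. Expected profits then equal $\mathbb{E}[q_i^2]$ by the best-response identity.

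The slope coefficients solve a linear system whose entries are rational in $m_i$. The system is nonsingular because the covariance matrices are positive definite, so $\Delta_i$ is continuous on $[0,\infty)$.

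\textbf{Step 2 (endpoints).} At $m_i=0$, the hypothesis $\tau_p>\tau_p^*(n,u,\tau_i,\tau)$ is by definition the statement that firm $i$ does not join without its own protection, so $\Delta_i(0)<0$. I read $\tau_p^*$ with the incumbents' noise $m$ built into the environment, consistent with the sequential framing.

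As $m_i\to\infty$, the weight on $y_i$ vanishes. Hence $t_{[n]}(m_i)\to t_{-i}$ in $L^2$, and the equilibrium coefficients converge to those of the limit game. In that limit game firm $i$ joins only to receive $t_{-i}$ while revealing nothing. Every firm conditions on the same rival statistic $t_{-i}$, and firm~$i$ additionally keeps its own $s_i$. Rivals' strategies as functions of $(s_j,t_{-i})$ are then essentially those of the deviation profile, up to the re-equilibration caused by firm $i$'s changed behavior.

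\textbf{Main obstacle.} The hard step is showing $\lim\Delta_i>0$ strictly, since re-equilibration by the rivals could in principle offset firm $i$'s informational gain. I would try first to compute the limiting profit explicitly: in the limit, firm $i$'s information $(s_i,t_{-i})$ is a strict refinement of $s_i$, and $t_{-i}$ is non-degenerate because $\tau_p<\infty$. I would show that firm $i$'s equilibrium profit equals $a^2/(n+1)^2+\Var(\mathbb{E}[q_i\mid\cdot])$ and verify, using the closed-form coefficients, that this exceeds $a^2/(n+1)^2+d_0^2(u+\tau_i)$. If the direct algebra is unwieldy, a fallback is a revealed-preference argument: at the limit, firm $i$ could mimic its deviation strategy $c_0+d_0s_i$, against which rivals' behavior coincides with the deviation profile because they observe the same $t_{-i}$ and their strategy depends on firm $i$ only through the regression of $q_i$ on $(s_j,t_{-i})$. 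Firm $i$'s strictly profitable use of $t_{-i}$ then gives strict improvement.

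Given both endpoints, the intermediate value theorem yields a zero of $\Delta_i$. Taking $\bar m_i$ to be the largest zero, which exists because $\Delta_i$ has a positive limit at infinity, gives $\Delta_i(m_i)>0$ for all $m_i>\bar m_i$. If $\Delta_i(\bar m_i)=0$, I would enlarge $\bar m_i$ slightly so that $m_i\ge\bar m_i$ gives strict preference.
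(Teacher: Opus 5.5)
Your skeleton (linear equilibria, continuity in $m_i$, $\Delta_i(0)<0$, the limit as $m_i\to\infty$, then the intermediate value theorem) is the paper's, and you correctly isolate $\lim_{m_i\to\infty}\Delta_i(m_i)>0$ as the crux. Neither of your routes closes that step.

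\textbf{The revealed-preference fallback is invalid.} In the limit game the rivals' strategies are best responses to the correct conjecture that firm $i$ plays $c+d_is_i+e_it_{-i}$. If firm $i$ switched to $c_0+d_0s_i$, the rivals would keep those strategies. They would not revert to their deviation-profile strategies, which were best responses to a different conjecture. Revealed preference therefore only bounds the limit payoff below by the payoff of $c_0+d_0s_i$ against the limit-game rivals, and that quantity is not $\mathbb{E}[\pi_i^{NS}]$. Identifying the two gives firm $i$ a commitment power it does not have in a simultaneous Bayesian game.

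\textbf{The direct computation cannot succeed either, because the inequality fails on part of the parameter region.} Take $n=2$, incumbent noise $m=0$, $u=1$, $\tau=\tau_p=2$. Then $T:=t_{-i}=(s_j+s_p)/4$, with $\Var(T)=\tfrac12$ and residual variance $V=\tfrac12$. Since $\mathbb{E}[\theta\mid s_j,T]=\mathbb{E}[s_i\mid s_j,T]=T$, the rival's output depends only on $T$ in both profiles.

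In the deviation profile, $q_i=\tfrac a3+d_0s_i$ with $d_0=3/(7+8\tau_i)$, and the rival plays $\tfrac a3+\tfrac{1-d_0}{2}T$. In the limit game, $q_i=\tfrac a3+\tfrac k2 s_i+(\tfrac13-\tfrac k2)T$ with $k=V/(V+\tau_i)$, and the rival plays $\tfrac a3+\tfrac13 T$. Hence
\[
\lim_{m_i\to\infty}\Delta_i(m_i)=\frac{1}{18}+\frac{1}{8+16\tau_i}-\frac{9(1+\tau_i)}{(7+8\tau_i)^2}\ \xrightarrow{\ \tau_i\to 0\ }\ \frac{13}{72}-\frac{9}{49}<0 .
\]
At $\tau_i=0.1$ this limit is about $-0.003$, while $\Delta_i(0)\approx-0.06$. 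So the hypothesis holds, yet no threshold $\bar m_i$ exists. The same happens with $n=3$, $m=0$, $u=1$, $\tau=4$, $\tau_p=2$, $\tau_i\to0$: the information terms of the two payoffs tend to $\tfrac{5}{32}<\tfrac{4}{25}$.

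The mechanism is strategic. Once it is common knowledge that firm $i$ sees $T$, firm $i$ shades its output against the rival's $T$-driven output, and the rival then responds more aggressively to $T$ ($\tfrac13>\tfrac{1-d_0}{2}$ for $\tau_i<\tfrac14$). With strategic substitutes, this loss can outweigh the forecasting gain when $s_i$ is already precise.

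The paper's proof asserts the same step (``strictly expands firm $i$'s information and hence strictly increases its optimized ex-ante profit''). So the defect lies in the statement itself: it needs an additional hypothesis guaranteeing $\lim_{m_i\to\infty}\Delta_i(m_i)>0$. Your choice of $\bar m_i$ as the largest zero, which yields strict preference for all $m_i>\bar m_i$, improves on the paper's conclusion of mere indifference at some zero. It too, however, rests on the positive limit.
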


We complement the analytical results with a numerical illustration that characterizes
how the firm-specific privacy noise threshold varies with signal heterogeneity.
Throughout the experiment, we fix the prior variance at $u=1$ and the demand intercept at $a=1$.
Firm $i$’s signal noise variance $\tau_i$ varies over the interval $[2,6]$,
while all other firms share signals with a common noise variance $\tau=5$.
The platform provides an external signal with noise variance $\tau_p=10$,
which by itself is insufficient to induce participation in the absence of privacy protection.
All other firms apply a common privacy noise level $m=10$.

For each value of $\tau_i$ and each market size $n\in\{3,5,10\}$,
we numerically compute the minimal firm-specific privacy noise $m_i^*$
that makes firm $i$ indifferent between participating and not participating,
taking the participation of all other firms as given.
The threshold $m_i^*$ is obtained by solving $\Delta_i(\tau_i,m_i)=0$,
where $\Delta_i$ denotes firm $i$’s unilateral participation gain.

\begin{figure}[t]
    \centering
    \includegraphics[width=0.7\linewidth]{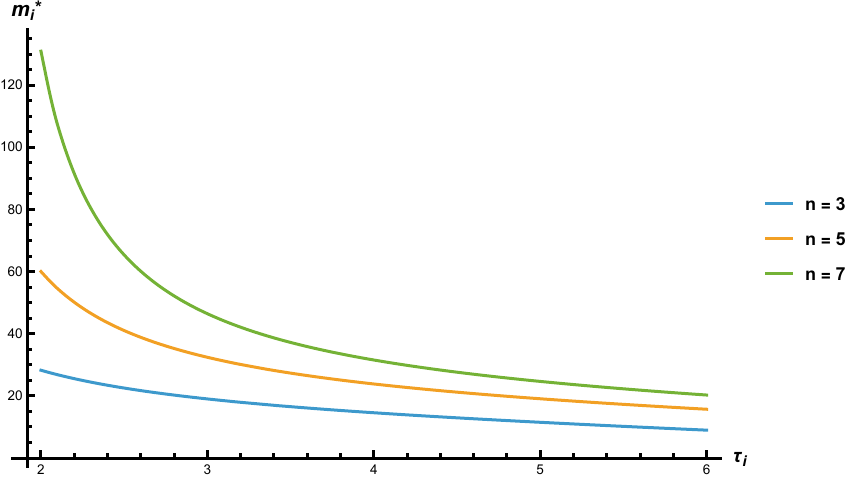}
    \caption{Firm-specific privacy noise threshold $m_i^*$ as a function of signal noise $\tau_i$,
    for different numbers of competing firms $n$.}
    \label{fig:mi_threshold_tau_i}
\end{figure}

\Cref{fig:mi_threshold_tau_i} plots the resulting relationship between $\tau_i$ and $m_i^*$.
Two patterns emerge clearly.
First, for any fixed market size $n$, the required privacy noise $m_i^*$ is decreasing in $\tau_i$:
firms with noisier private signals require less privacy protection to be willing to share.
Second, for any given $\tau_i$, the threshold $m_i^*$ is increasing in $n$,
reflecting stronger competitive externalities in larger markets,
which in turn necessitate stronger privacy protection to sustain individual participation.

\section{Additional Welfare Analysis}
\label{app:welfare-derivations}

This appendix provides the surplus derivations and additional comparative statics omitted from the main text.

\subsection{Surplus Derivations}
\label{app:surplus-derivations}

For the inverse demand function $P(Q,\theta)=a+\theta-Q$, consumer surplus is
\[
CS(Q,\theta)
=
\int_0^Q (a+\theta-x)\,dx
-
P(Q,\theta)Q
=
\frac{Q^2}{2}.
\]
Because production costs are normalized to zero, producer surplus equals total firm profit:
\[
PS(Q,\theta)
=
\sum_{i=1}^n \pi_i
=
QP(Q,\theta)
=
Q(a+\theta-Q).
\]
Therefore, total surplus is
\[
TS(Q,\theta)
=
CS(Q,\theta)+PS(Q,\theta)
=
(a+\theta)Q-\frac{Q^2}{2}.
\]
We use these expressions to derive expected surplus under full platform participation and under the no-sharing benchmark.

\paragraph{Full platform participation.}
Suppose all firms participate in the platform and the platform uses a symmetric privacy noise level $m_i=m$. The platform releases the posterior mean
\[
\mu_n
=
\mathbb{E}[\theta \mid \widetilde{s}_1,\ldots,\widetilde{s}_n,s_p],
\]
where
\[
\widetilde{s}_i=s_i+v_i,
\qquad
v_i\sim \mathcal{N}(0,m),
\]
and
\[
s_p=\theta+\varepsilon_p,
\qquad
\varepsilon_p\sim \mathcal{N}(0,\tau_p).
\]
The posterior variance of $\theta$ under full participation is
\[
V_n^S
=
\left(
\frac{1}{u}
+
\frac{n}{\tau+m}
+
\frac{1}{\tau_p}
\right)^{-1}.
\]
Thus, the variance of the posterior mean is
\[
\sigma_S^2(m)
:=
\operatorname{Var}(\mu_n)
=
u-V_n^S
=
u-
\left(
\frac{1}{u}
+
\frac{n}{\tau+m}
+
\frac{1}{\tau_p}
\right)^{-1}.
\]

Under full information sharing, firms observe the common posterior mean $\mu_n$. The symmetric Cournot quantity is
\[
q_i^S
=
\frac{a+\mu_n}{n+1},
\]
and total output is
\[
Q^S
=
\sum_{i=1}^n q_i^S
=
\frac{n(a+\mu_n)}{n+1}.
\]
Since $\mathbb{E}[\mu_n]=0$, we have
\[
\mathbb{E}[(a+\mu_n)^2]=a^2+\sigma_S^2(m).
\]
Therefore,
\[
\mathbb{E}[CS^S]
=
\frac{1}{2}\mathbb{E}[(Q^S)^2]
=
\frac{n^2}{2(n+1)^2}
\left(a^2+\sigma_S^2(m)\right),
\]
\[
\mathbb{E}[PS^S]
=
\sum_{i=1}^n \mathbb{E}[\pi_i^S]
=
\frac{n}{(n+1)^2}
\left(a^2+\sigma_S^2(m)\right),
\]
and
\[
\mathbb{E}[TS^S]
=
\mathbb{E}[CS^S]+\mathbb{E}[PS^S]
=
\frac{n(n+2)}{2(n+1)^2}
\left(a^2+\sigma_S^2(m)\right).
\]

\paragraph{No information sharing.}
Now suppose firms do not participate in the platform. Firm $i$ observes only its private signal
\[
s_i=\theta+\varepsilon_i,
\qquad
\varepsilon_i\sim \mathcal{N}(0,\tau).
\]
Let
\[
\mu_i
=
\mathbb{E}[\theta\mid s_i]
=
\frac{u}{u+\tau}s_i
\]
denote firm $i$'s private posterior mean. Define
\[
\sigma_P^2
:=
\operatorname{Var}(\mu_i)
=
\frac{u^2}{u+\tau},
\]
and
\[
\rho
:=
\frac{\operatorname{Cov}(\mu_i,\mu_j)}{\operatorname{Var}(\mu_i)}
=
\frac{u}{u+\tau},
\qquad i\neq j.
\]
Thus,
\[
\operatorname{Cov}(\mu_i,\mu_j)=\rho\sigma_P^2.
\]

In the symmetric Bayesian Cournot equilibrium without information sharing, firm $i$'s quantity takes the linear form
\[
q_i^{NS}
=
\frac{a}{n+1}
+
\beta \mu_i,
\]
where
\[
\beta
=
\frac{1}{2+(n-1)\rho}.
\]
Total output is
\[
Q^{NS}
=
\sum_{i=1}^n q_i^{NS}
=
\frac{na}{n+1}
+
\beta\sum_{i=1}^n \mu_i.
\]
Since
\[
\operatorname{Var}\left(\sum_{i=1}^n \mu_i\right)
=
n\sigma_P^2[1+(n-1)\rho],
\]
expected consumer surplus under no sharing is
\[
\mathbb{E}[CS^{NS}]
=
\frac{n^2a^2}{2(n+1)^2}
+
\frac{1}{2}\beta^2 n\sigma_P^2[1+(n-1)\rho].
\]
Similarly,
\[
\mathbb{E}[PS^{NS}]
=
n\mathbb{E}[(q_i^{NS})^2]
=
\frac{na^2}{(n+1)^2}
+
n\beta^2\sigma_P^2.
\]
Expected total surplus is
\[
\mathbb{E}[TS^{NS}]
=
\frac{n(n+2)a^2}{2(n+1)^2}
+
\frac{n\beta^2\sigma_P^2}{2}
[3+(n-1)\rho].
\]

\paragraph{Surplus differences.}
Define
\[
\Delta CS
=
\mathbb{E}[CS^S]-\mathbb{E}[CS^{NS}],
\]
\[
\Delta PS
=
\mathbb{E}[PS^S]-\mathbb{E}[PS^{NS}],
\]
and
\[
\Delta TS
=
\mathbb{E}[TS^S]-\mathbb{E}[TS^{NS}].
\]
The deterministic demand-intercept terms cancel out. Therefore,
\[
\Delta CS
=
\frac{1}{2}
\left[
\frac{n^2}{(n+1)^2}\sigma_S^2(m)
-
n\beta^2\sigma_P^2[1+(n-1)\rho]
\right],
\]
\[
\Delta PS
=
n
\left[
\frac{\sigma_S^2(m)}{(n+1)^2}
-
\beta^2\sigma_P^2
\right],
\]
and
\[
\Delta TS
=
\frac{n}{2}
\left[
\frac{n+2}{(n+1)^2}\sigma_S^2(m)
-
\beta^2\sigma_P^2[3+(n-1)\rho]
\right].
\]

\subsection{Additional Comparative Statics}
\label{app:welfare-comparative-statics}

This subsection records comparative statics with respect to prior uncertainty $u$, private signal noise $\tau$, and the number of firms $n$. The main text focuses on privacy noise $m$ and platform signal noise $\tau_p$ because those parameters are directly controlled or affected by the platform mechanism.

\begin{proposition}[Effect of demand uncertainty]
\label{prop:demand-uncertainty-welfare}
Under full platform participation, expected consumer surplus, producer surplus, and total surplus are strictly increasing in the prior variance $u$. Under no information sharing, expected consumer surplus, producer surplus, and total surplus are also strictly increasing in $u$.
\end{proposition}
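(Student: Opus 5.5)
The plan is to use the closed-form surplus expressions from \cref{app:surplus-derivations}. In each regime, $u$ enters expected surplus only through a single variance term, since the deterministic intercept terms $\frac{n^2a^2}{2(n+1)^2}$, $\frac{na^2}{(n+1)^2}$ and $\frac{n(n+2)a^2}{2(n+1)^2}$ do not depend on $u$. So in each regime it suffices to show that the relevant variance term is strictly increasing in $u$.

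\emph{Full participation.} Each of $\mathbb{E}[CS^S]$, $\mathbb{E}[PS^S]$ and $\mathbb{E}[TS^S]$ equals a positive constant times $a^2+\sigma_S^2(m)$, so it suffices to show $\partial\sigma_S^2/\partial u>0$. Write $\mathcal{P}_n=\frac1u+\frac{n}{\tau+m}+\frac1{\tau_p}$, so that $\sigma_S^2=u-\mathcal{P}_n^{-1}$. Differentiating gives
\[
\frac{\partial\sigma_S^2}{\partial u}=1-\frac{1}{u^2\mathcal{P}_n^2}.
\]
Because $u\mathcal{P}_n=1+u\big(\frac{n}{\tau+m}+\frac1{\tau_p}\big)>1$, this derivative is strictly positive. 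The argument covers every $m\ge0$ and $\tau_p\in(0,\infty]$, provided the convention is kept that at least one informative signal is present (either $\tau+m<\infty$ or $\tau_p<\infty$).

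\emph{No sharing.} Put $\rho=u/(u+\tau)$, $\sigma_P^2=u\rho$ and $\beta=1/(2+(n-1)\rho)$. Then the three surpluses are proportional to
\[
\mathbb{E}[PS^{NS}]\propto \frac{u\rho}{(2+(n-1)\rho)^2},\quad
\mathbb{E}[CS^{NS}]\propto \frac{u\rho(1+(n-1)\rho)}{(2+(n-1)\rho)^2},\quad
\mathbb{E}[TS^{NS}]\propto \frac{u\rho(3+(n-1)\rho)}{(2+(n-1)\rho)^2}.
\]
Since $\rho$ is strictly increasing in $u$, I would write each term as $u\cdot g(\rho)$ and differentiate: $\frac{d}{du}[u\,g(\rho)]=g(\rho)+u\,g'(\rho)\rho'(u)$, with $u\rho'(u)=\rho(1-\rho)$.

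The main obstacle is $PS^{NS}$. Here $g(\rho)=\rho/(2+(n-1)\rho)^2$ is decreasing once $\rho>2/(n-1)$, so monotonicity of $g$ alone does not settle the sign. The step I would carry out is to compute
\[
g+\rho(1-\rho)g'=\frac{\rho\,[(2+(n-1)\rho)+(1-\rho)(2-(n-1)\rho)]}{(2+(n-1)\rho)^3}.
\]
Expanding the bracket gives $4-2\rho+(n-1)\rho^2$. This is positive for $\rho\in(0,1)$ because $4-2\rho>2$, so $PS^{NS}$ is strictly increasing in $u$.

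For $CS^{NS}$ I would apply the same computation to $h(\rho)=\rho(1+(n-1)\rho)/(2+(n-1)\rho)^2$. Since $\rho\mapsto(1+(n-1)\rho)$ is increasing, the derivative only picks up extra nonnegative terms; I would still write out the numerator explicitly to confirm it is positive. Total surplus $TS^{NS}=CS^{NS}+PS^{NS}$ is then a sum of two strictly increasing functions, so it is strictly increasing as well, which completes the proof.
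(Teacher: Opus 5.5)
Your proof is correct and follows essentially the paper's route of differentiating the closed-form information terms in $u$. Your $1-1/(u\mathcal{P}_n)^2$ equals the paper's $uK_n(2+uK_n)/(1+uK_n)^2$ because $u\mathcal{P}_n=1+uK_n$, and your $\rho\,(4-2\rho+(n-1)\rho^2)/(2+(n-1)\rho)^3$ becomes exactly the paper's $u[(n+1)u^2+6u\tau+4\tau^2]/((n+1)u+2\tau)^3$ after substituting $\rho=u/(u+\tau)$. The only difference is economy: you get $CS^{NS}$ from $PS^{NS}$ via the product rule on $[u\,g(\rho)]\cdot[1+(n-1)\rho]$, which already suffices so the explicit numerator is unnecessary, and you get $TS^{NS}$ from $TS=CS+PS$, whereas the paper differentiates all three expressions separately.
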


\Cref{prop:demand-uncertainty-welfare} shows that greater demand uncertainty raises the absolute value of information in both regimes. Under platform participation, a more volatile state makes the shared posterior more valuable. Under no sharing, it also makes each firm's private posterior more valuable. Hence, the effect of $u$ on the surplus gains from platform participation is generally ambiguous: $\Delta CS$, $\Delta PS$, and $\Delta TS$ may rise or fall depending on whether the shared-posterior channel dominates the corresponding increase in the no-sharing benchmark.

\begin{proposition}[Effect of private signal noise]
\label{prop:private-signal-noise-welfare}
Under full platform participation, expected consumer surplus, producer surplus, and total surplus are strictly decreasing in the private signal noise $\tau$. Under no information sharing, expected consumer surplus and expected total surplus are strictly decreasing in $\tau$. The marginal effect of $\tau$ on expected producer surplus under no information sharing satisfies
\[
\frac{\partial \mathbb{E}[PS^{NS}]}{\partial \tau}\ge 0
\quad
\Longleftrightarrow
\quad
\tau\le \frac{(n-3)u}{2},
\]
and
\[
\frac{\partial \mathbb{E}[PS^{NS}]}{\partial \tau}\le 0
\quad
\Longleftrightarrow
\quad
\tau\ge \frac{(n-3)u}{2}.
\]
In particular, if $n\le 3$, expected producer surplus under no information sharing is strictly decreasing in $\tau$ for all $\tau>0$.
\end{proposition}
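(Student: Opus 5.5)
The plan is to reduce everything to a one-variable monotonicity problem in the posterior informativeness. For full participation I will use the closed form $\sigma_S^2(m)=u-\bigl(\tfrac1u+\tfrac{n}{\tau+m}+\tfrac1{\tau_p}\bigr)^{-1}$ from \cref{app:surplus-derivations}. For no sharing I will reparametrize by $\rho=u/(u+\tau)\in(0,1)$, which is strictly decreasing in $\tau$. Every sign claim in $\tau$ then becomes the opposite sign claim in $\rho$.

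\emph{Full participation.} Differentiating gives
\[
\frac{\partial\sigma_S^2}{\partial\tau}=-\frac{n}{(\tau+m)^2}\Bigl(\tfrac1u+\tfrac{n}{\tau+m}+\tfrac1{\tau_p}\Bigr)^{-2}<0,
\]
exactly as in the proof of \Cref{prop:privacy-noise-welfare}, since $\tau$ enters only through $\tau+m$. The expected $CS^S$, $PS^S$ and $TS^S$ are positive multiples of $a^2+\sigma_S^2$. Hence all three are strictly decreasing in $\tau$.

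\emph{No sharing.} Use $\sigma_P^2=u\rho$, $\beta=1/(2+k\rho)$ and set $k:=n-1$. The $\tau$-dependent parts of the surpluses are then
\[
\mathbb{E}[CS^{NS}]:\ \frac{nu}{2}\,\frac{\rho(1+k\rho)}{(2+k\rho)^2},\qquad
\mathbb{E}[PS^{NS}]:\ nu\,\frac{\rho}{(2+k\rho)^2},\qquad
\mathbb{E}[TS^{NS}]:\ \frac{nu}{2}\,\frac{\rho(3+k\rho)}{(2+k\rho)^2}.
\]
Each has a common denominator factor $(2+k\rho)^{-3}$ after differentiation, so I only need the sign of the numerator in each case.
\begin{itemize}
\item For the CS term the numerator is $(1+2k\rho)(2+k\rho)-2k\rho(1+k\rho)=2+3k\rho>0$.
\item For the TS term it is $(3+2k\rho)(2+k\rho)-2k\rho(3+k\rho)=6+k\rho>0$.
\item For the PS term it is $2-k\rho$.
\end{itemize}
So CS and TS are strictly increasing in $\rho$, hence strictly decreasing in $\tau$.

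For PS, the derivative in $\tau$ is nonnegative iff $2-(n-1)\rho\le 0$, i.e.\ iff $\rho\ge 2/(n-1)$. This is equivalent to $(n-1)u\ge 2(u+\tau)$, i.e.\ $\tau\le (n-3)u/2$; the reverse inequality gives the other case. When $n\le 3$, the threshold is $\le 0$. Moreover $2-(n-1)\rho>2-2\rho>0$ for $\rho<1$, so PS is strictly decreasing for all $\tau>0$.

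The only real obstacle is bookkeeping in these three quotient-rule numerators, where cancellation of the $k^2\rho^2$ terms is what makes the signs clean. I would verify each numerator expansion carefully. The rest is just chain rule through $d\rho/d\tau=-u/(u+\tau)^2<0$.
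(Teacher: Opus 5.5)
Your proof is correct and follows the paper's argument: you use $\partial\sigma_S^2/\partial\tau<0$ for full participation and differentiate each no-sharing information component explicitly, and your substitution $\rho=u/(u+\tau)$ simply makes the numerators cleaner. In fact your $6+k\rho$ translates to $\frac{d}{d\tau}\bigl(\beta^2\sigma_P^2[3+(n-1)\rho]\bigr)=-u^2[(n+5)u+6\tau]/((n+1)u+2\tau)^3$, so the paper's displayed coefficient $(n^2+4n-1)$ is an algebra slip that does not affect the sign. The only fix your write-up needs is that for $n=3$ the step $2-(n-1)\rho>2-2\rho$ holds with equality, so write $2-(n-1)\rho\ge 2-2\rho>0$.
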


\Cref{prop:private-signal-noise-welfare} highlights that private signal noise affects welfare through both information and competition. Under full platform participation, a larger $\tau$ reduces the informativeness of firms' reports and lowers the value of the platform posterior. Under no sharing, a larger $\tau$ weakens private information, which reduces consumer surplus and total surplus. Producer surplus is more subtle because weaker private information can also soften state-contingent competitive responses. When $\tau$ is small and $n>3$, this strategic-softening effect can dominate; when $\tau$ is large, the information-loss effect dominates.

\begin{proposition}[Effect of the number of firms]
\label{prop:number-of-firms-welfare}
Treating $n$ as continuous for comparative-static intuition, expected consumer surplus and expected total surplus are strictly increasing in $n$ under both full platform participation and no information sharing.

The effect of $n$ on expected producer surplus is ambiguous. Under full platform participation,
\[
\mathbb{E}[PS^S]
=
\frac{n}{(n+1)^2}(a^2+\sigma_S^2(m)).
\]
The Cournot producer-surplus coefficient $n/(n+1)^2$ is decreasing in $n$ for $n>1$, while the informativeness term $\sigma_S^2(m)$ is increasing in $n$. Hence, expected producer surplus under full platform participation may increase or decrease with $n$.

Under no information sharing,
\[
\mathbb{E}[PS^{NS}]
=
\frac{na^2}{(n+1)^2}
+
n\beta^2\sigma_P^2.
\]
The deterministic Cournot component $na^2/(n+1)^2$ is decreasing in $n$ for $n>1$. The information-dependent component has marginal effect
\[
\frac{\partial}{\partial n}
\left(
n\beta^2\sigma_P^2
\right)
=
\sigma_P^2
\frac{2-(n+1)\rho}
{\left[2+(n-1)\rho\right]^3}.
\]
Thus, this component increases with $n$ when
\[
\rho<\frac{2}{n+1},
\]
and decreases with $n$ when
\[
\rho>\frac{2}{n+1}.
\]
Therefore, expected producer surplus under no information sharing may also increase or decrease with $n$.
\end{proposition}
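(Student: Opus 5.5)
The plan is to work directly from the closed forms in \cref{app:surplus-derivations}. Under full participation, every surplus measure has the form $c\,(a^2+\sigma_S^2(m))$, where the coefficient $c$ depends on $n$ and is given by $\tfrac{n^2}{2(n+1)^2}$, $\tfrac{n}{(n+1)^2}$, or $\tfrac{n(n+2)}{2(n+1)^2}$. The demand intercept $a$ does not depend on $n$ in any other way.

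For consumer and total surplus, I would first show that each coefficient increases in $n$. For consumer surplus, $\frac{d}{dn}\frac{n^2}{(n+1)^2}=\frac{2n}{(n+1)^3}>0$. For total surplus, $\frac{d}{dn}\frac{n(n+2)}{(n+1)^2}=\frac{2}{(n+1)^3}>0$, which follows from $n(n+2)=(n+1)^2-1$. Next I would show that $\sigma_S^2(m)=u-(1/u+n/(\tau+m)+1/\tau_p)^{-1}$ increases in $n$, because the posterior precision increases in $n$. Both factors are positive and increasing, so the product $c\,(a^2+\sigma_S^2)$ strictly increases in $n$.

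For the no-sharing regime, I would write $x=(n-1)\rho$. This gives $\mathbb{E}[CS^{NS}]=\frac{n^2a^2}{2(n+1)^2}+\frac{n\sigma_P^2(1+x)}{2(2+x)^2}$ and $\mathbb{E}[TS^{NS}]=\frac{n(n+2)a^2}{2(n+1)^2}+\frac{n\sigma_P^2(3+x)}{2(2+x)^2}$. The deterministic parts increase in $n$ by the derivatives computed above, so it remains to check that the information terms increase. This is the main step. I would differentiate $g(n)=n(1+x)/(2+x)^2$ with $dx/dn=\rho$. Up to the positive factor $(2+x)^{-3}$, the derivative is
$(1+x)(2+x)+n\rho(2+x)-2n\rho(1+x) = (1+x)(2+x)+n\rho(1-x)$.
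Substituting $n\rho=x+\rho$ gives $(1+x)(2+x)+(x+\rho)(1-x)=2+4x+\rho(1-x)$. This equals $2+\rho+x(4-\rho)$, which is positive because $\rho\in(0,1)$.

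For total surplus the same computation with $h(n)=n(3+x)/(2+x)^2$ gives a derivative proportional to $(3+x)(2+x)+n\rho(2+x)-2n\rho(3+x)=(3+x)(2+x)-n\rho(4+x)$. Here the sign is less obvious. Substituting $n\rho=x+\rho$ gives $6+5x+x^2-(x+\rho)(4+x)=6+x-\rho(4+x)$. Since $\rho<1$, this is greater than $6+x-4-x=2>0$.

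For producer surplus, the full-participation claim follows from the displayed product: $n/(n+1)^2$ has derivative $(1-n)/(n+1)^3<0$, while $\sigma_S^2$ increases in $n$, so the sign is ambiguous. For the no-sharing regime, I would differentiate $n\beta^2=n/(2+(n-1)\rho)^2$. This gives $\frac{(2+x)-2n\rho}{(2+x)^3}=\frac{2-(n+1)\rho}{(2+x)^3}$, which matches the stated formula and yields the threshold $\rho\gtrless 2/(n+1)$. Together with the decreasing term $na^2/(n+1)^2$, this confirms that the overall effect is ambiguous.

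The only real obstacle is the sign check for the no-sharing information terms. The substitution $n\rho=x+\rho$ with $\rho<1$ resolves it.
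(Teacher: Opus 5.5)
Your approach is the paper's: differentiate the closed-form surplus expressions in $n$, use $\partial\sigma_S^2/\partial n>0$ for the sharing regime, and sign the no-sharing information terms directly. Your total-surplus and producer-surplus computations match the paper's.

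There is one algebra slip, in the consumer-surplus term. We have $n\rho(2+x)-2n\rho(1+x)=-n\rho x$, not $n\rho(1-x)$. The numerator is therefore $(1+x)(2+x)-n\rho x=2+(3-\rho)x$, which is the paper's $n\rho(3-\rho)+(1-\rho)(2-\rho)$. This is still strictly positive, so your conclusion stands.
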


\Cref{prop:number-of-firms-welfare} shows that market size affects welfare through both competition and information. Consumer surplus and total surplus increase with $n$ because adding firms raises aggregate output. Producer surplus may move in either direction because stronger Cournot competition compresses rents, while additional firms may improve information aggregation under platform participation or change the information-dependent response term under no sharing.

\section{Proofs of Additional Results}\label{app:d:proofs-of-additional-results}

\subsection{\texorpdfstring{\Cref{prop:demand-uncertainty-welfare}}{propdemanduncertaintywelfare}: Effect of demand uncertainty}
\label{app:proof:prop:demand-uncertainty-welfare}

\begin{proof}
We first consider the full-participation regime. Let
\[
K_n
=
\frac{n}{\tau+m}
+
\frac{1}{\tau_p}.
\]
Then
\[
\sigma_S^2
=
u-
\left(
\frac{1}{u}+K_n
\right)^{-1}
=
\frac{u^2K_n}{1+uK_n}.
\]
Differentiating with respect to \(u\), we obtain
\[
\frac{\partial \sigma_S^2}{\partial u}
=
\frac{uK_n(2+uK_n)}{(1+uK_n)^2}
>0.
\]
Since expected consumer surplus, producer surplus, and total surplus under full
platform participation are all increasing in \(\sigma_S^2\), it follows that
\[
\frac{\partial \mathbb{E}[CS^S]}{\partial u}>0,
\qquad
\frac{\partial \mathbb{E}[PS^S]}{\partial u}>0,
\qquad
\frac{\partial \mathbb{E}[TS^S]}{\partial u}>0.
\]

We now consider the no-sharing regime. Recall that
\[
\sigma_P^2=\frac{u^2}{u+\tau},
\qquad
\rho=\frac{u}{u+\tau},
\qquad
\beta=\frac{1}{2+(n-1)\rho}.
\]
Equivalently,
\[
\beta
=
\frac{u+\tau}{(n+1)u+2\tau}.
\]

For consumer surplus under no information sharing, the information-dependent component is
\[
\beta^2\sigma_P^2[1+(n-1)\rho]
=
\frac{u^2(nu+\tau)}
{\left((n+1)u+2\tau\right)^2}.
\]
Differentiating this expression with respect to \(u\), we obtain
\[
\frac{\partial}{\partial u}
\left[
\frac{u^2(nu+\tau)}
{\left((n+1)u+2\tau\right)^2}
\right]
=
\frac{
u\left[n(n+1)u^2+6n u\tau+4\tau^2\right]
}
{\left((n+1)u+2\tau\right)^3}
>0.
\]
Therefore,
\[
\frac{\partial \mathbb{E}[CS^{NS}]}{\partial u}>0.
\]

For producer surplus under no information sharing, the information-dependent component is
\[
\beta^2\sigma_P^2
=
\frac{u^2(u+\tau)}
{\left((n+1)u+2\tau\right)^2}.
\]
Differentiating gives
\[
\frac{\partial}{\partial u}
\left[
\frac{u^2(u+\tau)}
{\left((n+1)u+2\tau\right)^2}
\right]
=
\frac{
u\left[(n+1)u^2+6u\tau+4\tau^2\right]
}
{\left((n+1)u+2\tau\right)^3}
>0.
\]
Hence,
\[
\frac{\partial \mathbb{E}[PS^{NS}]}{\partial u}>0.
\]

Finally, for total surplus under no information sharing, the information-dependent
component is
\[
\beta^2\sigma_P^2[3+(n-1)\rho]
=
\frac{u^2((n+2)u+3\tau)}
{\left((n+1)u+2\tau\right)^2}.
\]
Differentiating yields
\[
\frac{\partial}{\partial u}
\left[
\frac{u^2((n+2)u+3\tau)}
{\left((n+1)u+2\tau\right)^2}
\right]
=
\frac{
u\left[(n+1)(n+2)u^2+6(n+2)u\tau+12\tau^2\right]
}
{\left((n+1)u+2\tau\right)^3}
>0.
\]
Therefore,
\[
\frac{\partial \mathbb{E}[TS^{NS}]}{\partial u}>0.
\]
This completes the proof.
\end{proof}

\subsection{\texorpdfstring{\Cref{prop:private-signal-noise-welfare}}{propprivatesignalnoisewelfare}: Effect of private signal noise}
\label{app:proof:prop:private-signal-noise-welfare}

\begin{proof}
We first consider the full-participation regime. Recall that
\[
\sigma_S^2
=
u-
\left(
\frac{1}{u}
+
\frac{n}{\tau+m}
+
\frac{1}{\tau_p}
\right)^{-1}.
\]
Let
\[
\mathcal{P}_n
=
\frac{1}{u}
+
\frac{n}{\tau+m}
+
\frac{1}{\tau_p}.
\]
Then
\[
\sigma_S^2=u-\mathcal{P}_n^{-1}.
\]
Differentiating with respect to \(\tau\), we obtain
\[
\frac{\partial \sigma_S^2}{\partial \tau}
=
-\frac{n}{(\tau+m)^2\mathcal{P}_n^2}
<0.
\]
Since expected consumer surplus, producer surplus, and total surplus under full
platform participation are all increasing in \(\sigma_S^2\), it follows that
\[
\frac{\partial \mathbb{E}[CS^S]}{\partial \tau}<0,
\qquad
\frac{\partial \mathbb{E}[PS^S]}{\partial \tau}<0,
\qquad
\frac{\partial \mathbb{E}[TS^S]}{\partial \tau}<0.
\]

We now consider the no-sharing regime. Recall that
\[
\sigma_P^2=\frac{u^2}{u+\tau},
\qquad
\rho=\frac{u}{u+\tau},
\qquad
\beta=\frac{1}{2+(n-1)\rho}.
\]
Equivalently,
\[
\beta
=
\frac{u+\tau}{(n+1)u+2\tau}.
\]

For consumer surplus under no information sharing, the information-dependent component is
\[
\beta^2\sigma_P^2[1+(n-1)\rho].
\]
Using the expressions above, we can write
\[
\beta^2\sigma_P^2[1+(n-1)\rho]
=
\frac{u^2(nu+\tau)}
{\left((n+1)u+2\tau\right)^2}.
\]
Differentiating this expression with respect to \(\tau\), we obtain
\[
\frac{\partial}{\partial \tau}
\left[
\frac{u^2(nu+\tau)}
{\left((n+1)u+2\tau\right)^2}
\right]
=
-\frac{
u^2\left[(3n-1)u+2\tau\right]
}
{\left((n+1)u+2\tau\right)^3}
<0.
\]
Therefore,
\[
\frac{\partial \mathbb{E}[CS^{NS}]}{\partial \tau}<0.
\]

For producer surplus under no information sharing,
\[
\mathbb{E}[PS^{NS}]
=
\frac{na^2}{(n+1)^2}
+
n\beta^2\sigma_P^2.
\]
Using
\[
\beta=\frac{u+\tau}{(n+1)u+2\tau}
\qquad\text{and}\qquad
\sigma_P^2=\frac{u^2}{u+\tau},
\]
we can write the information-dependent component as
\[
n\beta^2\sigma_P^2
=
n
\frac{u^2(u+\tau)}
{\left((n+1)u+2\tau\right)^2}.
\]
Differentiating with respect to \(\tau\), we obtain
\[
\frac{\partial \mathbb{E}[PS^{NS}]}{\partial \tau}
=
n u^2
\frac{(n-3)u-2\tau}
{\left((n+1)u+2\tau\right)^3}.
\]
Hence,
\[
\frac{\partial \mathbb{E}[PS^{NS}]}{\partial \tau}\ge 0
\quad
\Longleftrightarrow
\quad
\tau\le \frac{(n-3)u}{2},
\]
and
\[
\frac{\partial \mathbb{E}[PS^{NS}]}{\partial \tau}\le 0
\quad
\Longleftrightarrow
\quad
\tau\ge \frac{(n-3)u}{2}.
\]
If \(n\le 3\), then \((n-3)u/2\le 0\). Therefore, for all economically relevant
\(\tau>0\),
\[
\frac{\partial \mathbb{E}[PS^{NS}]}{\partial \tau}<0.
\]

Finally, for total surplus under no information sharing, the information-dependent
component is
\[
\beta^2\sigma_P^2[3+(n-1)\rho].
\]
Using the expressions above, we can write
\[
\beta^2\sigma_P^2[3+(n-1)\rho]
=
\frac{u^2((n+2)u+3\tau)}
{\left((n+1)u+2\tau\right)^2}.
\]
Differentiating with respect to \(\tau\), we obtain
\[
\frac{\partial}{\partial \tau}
\left[
\frac{u^2((n+2)u+3\tau)}
{\left((n+1)u+2\tau\right)^2}
\right]
=
-\frac{
u^2\left[(n^2+4n-1)u+6\tau\right]
}
{\left((n+1)u+2\tau\right)^3}
<0.
\]
Therefore,
\[
\frac{\partial \mathbb{E}[TS^{NS}]}{\partial \tau}<0.
\]
This completes the proof.
\end{proof}

\subsection{\texorpdfstring{\Cref{prop:number-of-firms-welfare}}{propnumberoffirmswelfare}: Effect of the number of firms}
\label{app:proof:prop:number-of-firms-welfare}

\begin{proof}
Treating \(n\) as continuous for comparative-static intuition, recall that under full
platform participation,
\[
\sigma_S^2
=
u-
\left(
\frac{1}{u}
+
\frac{n}{\tau+m}
+
\frac{1}{\tau_p}
\right)^{-1}.
\]
Let
\[
\mathcal{P}_n
=
\frac{1}{u}
+
\frac{n}{\tau+m}
+
\frac{1}{\tau_p}.
\]
Then
\[
\sigma_S^2=u-\mathcal{P}_n^{-1}.
\]
Differentiating with respect to \(n\), we obtain
\[
\frac{\partial \sigma_S^2}{\partial n}
=
\frac{1}{(\tau+m)\mathcal{P}_n^2}
>0.
\]
Thus, increasing \(n\) improves the informativeness of the platform posterior.

We first consider the full-participation regime. Expected consumer surplus is
\[
\mathbb{E}[CS^S]
=
\frac{n^2}{2(n+1)^2}(a^2+\sigma_S^2).
\]
Differentiating with respect to \(n\), we obtain
\[
\frac{\partial \mathbb{E}[CS^S]}{\partial n}
=
\frac{n}{(n+1)^3}(a^2+\sigma_S^2)
+
\frac{n^2}{2(n+1)^2}
\frac{\partial \sigma_S^2}{\partial n}
>0.
\]
Hence, expected consumer surplus under full platform participation is strictly increasing
in \(n\).

Expected total surplus under full platform participation is
\[
\mathbb{E}[TS^S]
=
\frac{n(n+2)}{2(n+1)^2}(a^2+\sigma_S^2).
\]
Since
\[
\frac{\partial}{\partial n}
\left[
\frac{n(n+2)}{2(n+1)^2}
\right]
=
\frac{1}{(n+1)^3}
>0,
\]
we have
\[
\frac{\partial \mathbb{E}[TS^S]}{\partial n}
=
\frac{1}{(n+1)^3}(a^2+\sigma_S^2)
+
\frac{n(n+2)}{2(n+1)^2}
\frac{\partial \sigma_S^2}{\partial n}
>0.
\]
Thus, expected total surplus under full platform participation is strictly increasing in \(n\).

Expected producer surplus under full platform participation is
\[
\mathbb{E}[PS^S]
=
\frac{n}{(n+1)^2}(a^2+\sigma_S^2).
\]
The coefficient
\[
\frac{n}{(n+1)^2}
\]
satisfies
\[
\frac{\partial}{\partial n}
\left[
\frac{n}{(n+1)^2}
\right]
=
\frac{1-n}{(n+1)^3},
\]
which is negative for \(n>1\). At the same time,
\[
\frac{\partial \sigma_S^2}{\partial n}>0.
\]
Therefore, the effect of \(n\) on \(\mathbb{E}[PS^S]\) depends on the relative strength of
these two forces. More explicitly,
\[
\frac{\partial \mathbb{E}[PS^S]}{\partial n}
=
\frac{1-n}{(n+1)^3}(a^2+\sigma_S^2)
+
\frac{n}{(n+1)^2}
\frac{\partial \sigma_S^2}{\partial n}.
\]
Thus, expected producer surplus under full platform participation increases in \(n\) when
the second term dominates the first, and decreases when the first term dominates.

We now consider the no-sharing regime. Expected consumer surplus is
\[
\mathbb{E}[CS^{NS}]
=
\frac{n^2a^2}{2(n+1)^2}
+
\frac{1}{2}n\beta^2\sigma_P^2[1+(n-1)\rho],
\]
where
\[
\beta=\frac{1}{2+(n-1)\rho}.
\]
Since \(\sigma_P^2\) and \(\rho\) do not depend on \(n\), differentiating gives
\[
\frac{\partial \mathbb{E}[CS^{NS}]}{\partial n}
=
\frac{na^2}{(n+1)^3}
+
\frac{\sigma_P^2}{2}
\frac{
n\rho(3-\rho)+(1-\rho)(2-\rho)
}
{\left[2+(n-1)\rho\right]^3}.
\]
Because \(\rho\in(0,1)\), both terms are positive. Hence,
\[
\frac{\partial \mathbb{E}[CS^{NS}]}{\partial n}>0.
\]

Expected total surplus under no information sharing is
\[
\mathbb{E}[TS^{NS}]
=
\frac{n(n+2)a^2}{2(n+1)^2}
+
\frac{n\beta^2\sigma_P^2}{2}[3+(n-1)\rho].
\]
Differentiating gives
\[
\frac{\partial \mathbb{E}[TS^{NS}]}{\partial n}
=
\frac{a^2}{(n+1)^3}
+
\frac{\sigma_P^2}{2}
\frac{
n\rho(1-\rho)+(2-\rho)(3-\rho)
}
{\left[2+(n-1)\rho\right]^3}.
\]
Again, since \(\rho\in(0,1)\), both terms are positive. Therefore,
\[
\frac{\partial \mathbb{E}[TS^{NS}]}{\partial n}>0.
\]

Finally, expected producer surplus under no information sharing is
\[
\mathbb{E}[PS^{NS}]
=
\frac{na^2}{(n+1)^2}
+
n\beta^2\sigma_P^2.
\]
The deterministic Cournot component satisfies
\[
\frac{\partial}{\partial n}
\left[
\frac{na^2}{(n+1)^2}
\right]
=
\frac{(1-n)a^2}{(n+1)^3},
\]
which is negative for \(n>1\).

The information-dependent component is
\[
n\beta^2\sigma_P^2
=
\frac{n\sigma_P^2}
{\left[2+(n-1)\rho\right]^2}.
\]
Differentiating with respect to \(n\), we obtain
\[
\frac{\partial}{\partial n}
\left[
n\beta^2\sigma_P^2
\right]
=
\sigma_P^2
\frac{2-(n+1)\rho}
{\left[2+(n-1)\rho\right]^3}.
\]
Therefore, the information-dependent component is increasing in \(n\) if
\[
\rho<\frac{2}{n+1},
\]
decreasing in \(n\) if
\[
\rho>\frac{2}{n+1},
\]
and has zero marginal effect if
\[
\rho=\frac{2}{n+1}.
\]

Combining the deterministic Cournot component and the information-dependent component,
the overall effect of \(n\) on expected producer surplus under no sharing depends on
whether the information-dependent effect is large enough to offset the decline in the
deterministic Cournot component. This completes the proof.
\end{proof}

\subsection{\texorpdfstring{\Cref{prop:privacy-restores-sharing-hetero}}{propprivacyrestoressharinghetero}: Privacy restores voluntary sharing under heterogeneous signals}\label{app:proof:prop:privacy-restores-sharing-hetero}

\begin{proof}
We prove the claim by analyzing each firm’s unilateral incentive and then taking a common
high-noise limit. The argument follows the same structure as in the homogeneous case.

Firm $i$ observes a private signal $s_i=\theta+\varepsilon_i$ with $\varepsilon_i\sim\mathcal N(0,\tau_i)$,
and the platform observes an external signal $s_p=\theta+\varepsilon_p$ with $\varepsilon_p\sim\mathcal N(0,\tau_p)$.
If a set $J\subseteq\{1,\dots,n\}$ participates, the platform receives privatized reports
\[
y_i := s_i+\xi_i=\theta+(\varepsilon_i+\xi_i),\qquad \xi_i\sim\mathcal N(0,m_i),
\]
so that $\varepsilon_i+\xi_i\sim \mathcal N(0,\tau_i+m_i)$.
The platform releases the posterior mean
\[
t_J(\mathbf m):=\mathbb E\!\left[\theta \,\middle|\, \{y_i\}_{i\in J},\, s_p\right],
\qquad \mathbf m:=(m_1,\dots,m_n).
\]
Under joint Gaussianity, $t_J(\mathbf m)$ is linear and admits the precision-weighted form
\begin{equation}
\label{eq:tJ-hetero-form}
t_J(\mathbf m)
=
V_J(\mathbf m)\left(
\sum_{i\in J}\frac{1}{\tau_i+m_i}y_i+\frac{1}{\tau_p}s_p
\right),
\qquad
V_J(\mathbf m):=
\left(\frac{1}{u}+\sum_{i\in J}\frac{1}{\tau_i+m_i}+\frac{1}{\tau_p}\right)^{-1}.
\end{equation}
Moreover, since $t_J(\mathbf m)=\mathbb E[\theta\mid \{y_i\}_{i\in J},s_p]$, the law of total variance yields
\begin{equation}
\label{eq:var-tJ-hetero}
\Var\!\big(t_J(\mathbf m)\big)=u-V_J(\mathbf m),
\end{equation}
where the posterior variance $V_J(\mathbf m)$ is deterministic in the Gaussian model.

Under full participation $J=[n]$,
\[
t_{[n]}(\mathbf m)=V_{n}(\mathbf m)\left(\sum_{i=1}^n\frac{1}{\tau_i+m_i}y_i+\frac{1}{\tau_p}s_p\right),
\]
and under unilateral non-participation by firm $i$, the platform aggregates only $[n]\setminus\{i\}$:
\[
t_{-i}(\mathbf m):=t_{[n]\setminus\{i\}}(\mathbf m)
=
V_{-i}(\mathbf m)\left(\sum_{j\neq i}\frac{1}{\tau_j+m_j}y_j+\frac{1}{\tau_p}s_p\right).
\]

\paragraph{Sharing equilibrium.}
Fix $\mathbf m\ge 0$ and suppose all firms participate. Because primitives are jointly Gaussian
and payoffs are quadratic under linear demand, there exists a (unique) Bayesian Nash equilibrium
in which each firm $i$ uses an affine strategy in its information $(s_i,t_{[n]}(\mathbf m))$:
\[
q_i^{S}(\mathbf m)=c_i(\mathbf m)+d_i(\mathbf m)\,s_i+e_i(\mathbf m)\,t_{[n]}(\mathbf m).
\]
The equilibrium coefficients solve a linear system obtained from the first-order conditions and
Gaussian projection identities of the form
$\mathbb E[\theta\mid s_i,t_{[n]}(\mathbf m)]$ and $\mathbb E[s_j\mid s_i,t_{[n]}(\mathbf m)]$.
All entries of this system are polynomials in the relevant covariance elements of
$(\theta,s_1,\dots,s_n,s_p,y_1,\dots,y_n)$, and hence depend continuously on $\mathbf m$
through the terms $(\tau_i+m_i)^{-1}$ and $V_J(\mathbf m)$ in \eqref{eq:tJ-hetero-form}.
Therefore, the equilibrium coefficients and the ex-ante sharing payoff
$\mathbb E[\pi_i^{S}(\mathbf m)]$ vary continuously with $\mathbf m$.

\paragraph{Unilateral deviation.}
Fix a firm $i$. Suppose firm $i$ does not participate while all other firms do.
Then firm $i$ observes only $s_i$ and uses an affine best response
\[
q_i^{NS}(\mathbf m)=\bar c_i(\mathbf m)+\bar d_i(\mathbf m)\,s_i,
\]
whereas each participating firm $j\neq i$ observes $(s_j,t_{-i}(\mathbf m))$ and uses an affine strategy
\[
q_j^{S}(\mathbf m)=\bar c_j(\mathbf m)+\bar d_j(\mathbf m)\,s_j+\bar e_j(\mathbf m)\,t_{-i}(\mathbf m).
\]
Then, these coefficients solve a linear system whose entries are covariance elements depending
continuously on $\mathbf m$. Hence the deviation payoff $\mathbb E[\pi_i^{NS}(\mathbf m)]$ is continuous in $\mathbf m$.

Define firm $i$'s unilateral participation gain under privacy noise,
\[
\Delta_i(\mathbf m):=\mathbb E[\pi_i^{S}(\mathbf m)]-\mathbb E[\pi_i^{NS}(\mathbf m)].
\]
By the preceding discussion, each $\Delta_i(\mathbf m)$ is continuous in $\mathbf m$ on $[0,\infty)^n$.

By assumption, under the non-private mechanism ($\mathbf m=\mathbf 0$) the platform signal is not precise enough
to sustain full participation; equivalently, the sharing profile fails to be incentive compatible, so
there exists at least one firm $i$ with $\Delta_i(\mathbf 0)<0$.

Next consider a common high-noise scaling path $\mathbf m(t):=t\,\mathbf 1$ with $t\ge 0$.
As $t\to\infty$, each privatized report precision satisfies $(\tau_i+t)^{-1}\to 0$.
From \eqref{eq:tJ-hetero-form},
\[
t_{[n]}(\mathbf m(t)) \xrightarrow[]{t\nearrow\infty} \mathbb E[\theta\mid s_p],
\qquad
t_{-i}(\mathbf m(t)) \xrightarrow[]{t\nearrow\infty} \mathbb [\theta\mid s_p],
\]
in $L^2$ (hence in distribution), because the contribution of the privatized reports vanishes and the posterior is driven only by $(\theta,s_p)$.
Since $\tau_p<\infty$, the public posterior $\mathbb E[\theta\mid s_p]$ is non-degenerate and payoff-relevant.

In the limit environment $t\to\infty$, if firm $i$ \emph{participates} when all others participate, it observes
$(s_i,\mathbb E[\theta\mid s_p])$; if it \emph{does not} participate while others do, it observes only $s_i$ while its rivals condition on
$(s_j,\mathbb E[\theta\mid s_p])$.
Thus, participation strictly enlarges firm $i$'s information set by adding the informative signal $\mathbb E[\theta\mid s_p]$.
In linear-quadratic Cournot, the value of (Blackwell) more informative signals is strictly positive whenever the additional signal is non-degenerate,
so firm $i$'s optimized ex-ante profit is strictly higher under participation in this limit. Hence,
\begin{equation}
\label{eq:Delta-hetero-infty-pos}
\lim_{t\to\infty}\Delta_i(\mathbf m(t))>0
\qquad \text{for each } i\in\{1,\dots,n\}.
\end{equation}

By continuity of $\Delta_i(\mathbf m(t))$ in $t$, for each firm $i$ there exists a finite threshold $t_i^\ast>0$
such that $\Delta_i(\mathbf m(t))>0$ for all $t\ge t_i^\ast$.
Let
\[
t^\ast:=\max_{i\in\{1,\dots,n\}} t_i^\ast,
\qquad
\mathbf m^\ast := t^\ast \mathbf 1.
\]
Then for all $\mathbf m\ge \mathbf m^\ast$ (componentwise), in particular for $\mathbf m=\mathbf m^\ast$ and any larger vector,
we have $\Delta_i(\mathbf m)\ge \Delta_i(\mathbf m^\ast)>0$ for every firm $i$.
Therefore, under the privacy-preserving mechanism with $\mathbf m\ge \mathbf m^\ast$, each firm strictly prefers to participate
when all other firms participate. This establishes that sharing is sustained as an equilibrium under sufficiently large privacy noise,
completing the proof.
\end{proof}

\subsection{\texorpdfstring{\Cref{prop:sequential-heterogeneous}}{propsequentialheterogeneous}: Privacy-induced participation under sequential entry}\label{app:proof:prop:sequential-heterogeneous}
\begin{proof}
Fix firm $i$ and suppose all firms $j\neq i$ have already joined the platform and share using a common privacy noise level $m$.
Firm $i$ has signal noise variance $\tau_i$ and, if it participates, the platform applies a firm-specific privacy noise level $m_i$ to its report.
Throughout, inverse demand is $p=a+\theta-\sum_{k=1}^n q_k$, the prior is $\theta\sim\mathcal N(0,u)$, and the platform signal is $s_p=\theta+\eta$ with $\eta\sim\mathcal N(0,\tau_p)$.

For each participating firm $k$, the platform observes the privatized report
\[
y_k:=s_k+\xi_k=\theta+(\varepsilon_k+\xi_k),
\qquad
\varepsilon_k\sim\mathcal N(0,\tau_k),\ \xi_k\sim\mathcal N(0,m_k),
\]
with all shocks independent across firms and from $\theta$.
For firms $j\neq i$, $(\tau_j,m_j)=(\tau,m)$, while for firm $i$, $(\tau_i,m_i)=(\tau_i,m_i)$.

If firm $i$ participates, the platform releases the posterior mean
\[
t_{[n]}(m_i):=\mathbb E\!\left[\theta \,\middle|\, \{y_j\}_{j\neq i},\, y_i,\, s_p\right].
\]
Under joint Gaussianity, $t_{[n]}(m_i)$ is linear and admits the precision-weighted form
\begin{align}
\label{eq:t-all-hetero}
t_{[n]}(m_i)
& =
V_n(m_i)\left(
\frac{1}{\tau+m}\sum_{j\neq i} y_j
+\frac{1}{\tau_i+m_i}y_i
+\frac{1}{\tau_p}s_p
\right), \\
V_n(m_i) & :=
\left(\frac{1}{u}+\frac{n-1}{\tau+m}+\frac{1}{\tau_i+m_i}+\frac{1}{\tau_p}\right)^{-1}. \nonumber
\end{align}
If firm $i$ does not participate, the platform aggregates only firms $j\neq i$ and releases

\begin{align*}
t_{-i}(m)&:=\mathbb E\!\left[\theta \,\middle|\, \{y_j\}_{j\neq i},\, s_p\right]
=
V_{n-1}(m)\left(
\frac{1}{\tau+m}\sum_{j\neq i} y_j
+\frac{1}{\tau_p}s_p
\right), \\
\qquad
V_{n-1}(m)&:=
\left(\frac{1}{u}+\frac{n-1}{\tau+m}+\frac{1}{\tau_p}\right)^{-1}.
\end{align*}
Moreover, since each released statistic is a conditional expectation, the law of total variance yields
\[
\Var\!\big(t_{[n]}(m_i)\big)=u-V_n(m_i),
\qquad
\Var\!\big(t_{-i}(m)\big)=u-V_{n-1}(m),
\]
and these are deterministic in the Gaussian model.

For later use, write $t_{[n]}(m_i)=\alpha_O\sum_{j\neq i}y_j+\alpha_I y_i+\beta s_p$, where
\[
\alpha_O=\frac{V_n(m_i)}{\tau+m},\qquad
\alpha_I=\frac{V_n(m_i)}{\tau_i+m_i},\qquad
\beta=\frac{V_n(m_i)}{\tau_p},
\]
and write $t_{-i}(m)=\alpha_{-i}\sum_{j\neq i}y_j+\beta_{-i}s_p$, where
\[
\alpha_{-i}=\frac{V_{n-1}(m)}{\tau+m},\qquad
\beta_{-i}=\frac{V_{n-1}(m)}{\tau_p}.
\]

Assume firm $i$ participates, so all firms observe a private signal and the released statistic $t_{[n]}(m_i)$.
We look for a linear Bayesian equilibrium with two types of strategies:
\[
q_i^{J}=c_I+d_I s_i+e_I\,t_{[n]}(m_i),
\qquad
q_j^{J}=c_O+d_O s_j+e_O\,t_{[n]}(m_i)\quad (j\neq i).
\]
Let $t:=t_{[n]}(m_i)$.
For firm $i$, define $X_I:=(s_i,t)^{\mathsf T}$.
By Gaussian projection, for any $j\neq i$ and for $\theta$,
\[
\mathbb E[s_j\mid s_i,t]=\Sigma_{s_jX_I}\Sigma_{X_I X_I}^{-1}X_I,
\qquad
\mathbb E[\theta\mid s_i,t]=\Sigma_{\theta X_I}\Sigma_{X_I X_I}^{-1}X_I.
\]
The required covariance elements follow from $s_k=\theta+\varepsilon_k$ and \eqref{eq:t-all-hetero}:
\[
\Var(s_i)=u+\tau_i,\qquad \Var(t)=u-V_n(m_i),\qquad \Cov(\theta,t)=\Var(t),
\]
\begin{align*}
    \Cov(s_i,t) & =\alpha_I(u+\tau_i)+\alpha_O(n-1)u+\beta u=:C_{it}, \\
\Cov(s_j,t) & =\alpha_O\big((n-1)u+\tau\big)+\alpha_I u+\beta u=:C_{ot}\quad (j\neq i),
\end{align*}
\[
\Cov(s_i,s_j)=u,\qquad \Cov(\theta,s_i)=u.
\]
Hence
\[
\Sigma_{X_I X_I}=
\begin{bmatrix}
u+\tau_i & C_{it}\\
C_{it} & u-V_n(m_i)
\end{bmatrix},
\qquad
\Sigma_{\theta X_I}=\begin{bmatrix} u & u-V_n(m_i)\end{bmatrix},
\qquad
\Sigma_{s_jX_I}=\begin{bmatrix} u & C_{ot}\end{bmatrix}.
\]
Write the projection coefficients as
\[
\mathbb E[\theta\mid s_i,t]=\kappa_S s_i+\kappa_T t,
\qquad
\mathbb E[s_j\mid s_i,t]=\lambda_S s_i+\lambda_T t,
\]
where $(\kappa_S,\kappa_T)=\Sigma_{\theta X_I}\Sigma_{X_I X_I}^{-1}$ and
$(\lambda_S,\lambda_T)=\Sigma_{s_jX_I}\Sigma_{X_I X_I}^{-1}$.

A representative other firm $j\neq i$ conditions on $X_O:=(s_j,t)^{\mathsf T}$, with covariance
\[
\Sigma_{X_O X_O}=
\begin{bmatrix}
u+\tau & C_{ot}\\
C_{ot} & u-V_n(m_i)
\end{bmatrix},
\]
and corresponding projection coefficients
\[
\mathbb E[\theta\mid s_j,t]=\kappa_S^{O}s_j+\kappa_T^{O}t,
\qquad
\mathbb E[s_i\mid s_j,t]=\phi_S s_j+\phi_T t,
\qquad
\mathbb E[s_k\mid s_j,t]=\chi_S s_j+\chi_T t\ (k\neq i,j),
\]
where $(\kappa_S^{O},\kappa_T^{O})=\Sigma_{\theta X_O}\Sigma_{X_O X_O}^{-1}$,
$(\phi_S,\phi_T)=\Sigma_{s_iX_O}\Sigma_{X_O X_O}^{-1}$,
and $(\chi_S,\chi_T)=\Sigma_{s_kX_O}\Sigma_{X_O X_O}^{-1}$.

Firm best responses satisfy, for each $k$,
\[
q_k=\frac12\Big(a+\mathbb E[\theta\mid \mathcal I_k]-\mathbb E[Q_{-k}\mid \mathcal I_k]\Big),
\qquad
Q_{-k}:=\sum_{\ell\neq k} q_\ell,
\]
where $\mathcal I_i=\sigma(s_i,t)$ and $\mathcal I_j=\sigma(s_j,t)$ for $j\neq i$.
Substituting the linear strategies and the Gaussian projections above and matching coefficients on $(1,s_i,t)$ and $(1,s_j,t)$ yields a linear system in the six unknown coefficients $(c_I,d_I,e_I,c_O,d_O,e_O)$, which has a unique solution because the relevant covariance matrices are positive definite for $u>0$, $\tau>0$, $\tau_i>0$, $\tau_p\in(0,\infty)$, and $m,m_i\ge 0$.
In particular, matching constant terms gives
\[
c_I=c_O=\frac{a}{n+1}.
\]
The remaining coefficients are uniquely determined by coefficient matching on the signal and statistic components.

Firm $i$'s ex-ante expected profit under participation is
\[
\pi_i^{S}(m_i)=\mathbb E\!\left[(q_i^{J})^2\right]
=c_I^2+d_I^2\Var(s_i)+e_I^2\Var(t)+2d_I e_I\,\Cov(s_i,t),
\]
where $\Var(s_i)=u+\tau_i$, $\Var(t)=u-V_n(m_i)$, and $\Cov(s_i,t)=C_{it}$.

\paragraph{Non-participation outcome (firm $i$ does not participate).}
Assume firm $i$ does not participate, so it observes only $s_i$ and uses a linear strategy
\[
q_i^{N}=c_0+d_0 s_i.
\]
All other firms participate and observe $(s_j,t_{-i}(m))$, using
\[
q_j^{J}=c_1+d_1 s_j+e_1 t_{-i}(m)\quad (j\neq i).
\]
Let $t^-:=t_{-i}(m)$.
For firm $i$, $\mathbb E[\theta\mid s_i]=\frac{u}{u+\tau_i}s_i$.
Moreover, since $t^-$ is Gaussian and linear in $(\theta,\{y_j\}_{j\neq i},s_p)$,
\[
\mathbb E[t^-\mid s_i]=\frac{\Cov(t^-,s_i)}{\Var(s_i)}\,s_i.
\]
Using $t^-=\alpha_{-i}\sum_{j\neq i}y_j+\beta_{-i}s_p$ and $\Cov(s_i,y_j)=u$ for $j\neq i$,
\[
\Cov(s_i,t^-)=\alpha_{-i}(n-1)u+\beta_{-i}u,
\qquad
\Var(s_i)=u+\tau_i,
\]
so
\[
\mathbb E[t^-\mid s_i]=\frac{\alpha_{-i}(n-1)u+\beta_{-i}u}{u+\tau_i}\,s_i.
\]
Substituting these projections into firm $i$'s best response and matching coefficients on $(1,s_i)$ yields linear equations relating $(c_0,d_0)$ to $(c_1,d_1,e_1)$.

For a representative participating firm $j\neq i$, define $X:=(s_j,t^-)^{\mathsf T}$.
Gaussian projection gives
\[
\mathbb E[s_i\mid s_j,t^-]=\phi_S s_j+\phi_T t^-,
\qquad
\mathbb E[s_k\mid s_j,t^-]=\chi_S s_j+\chi_T t^-\ (k\neq i,j),
\qquad
\mathbb E[\theta\mid s_j,t^-]=\kappa_S^- s_j+\kappa_T^- t^-,
\]
with coefficients computed from the explicit covariance matrix of $(s_j,t^-)$.
Plugging the linear strategies into firm $j$'s first-order condition and matching coefficients on $(1,s_j,t^-)$ yields a $3\times 3$ linear system in $(d_0,d_1,e_1)$ (and $c_0=c_1=a/(n+1)$).

Firm $i$'s ex-ante expected profit under non-participation is
\[
\pi_i^{NS}
=
\mathbb E\!\left[(q_i^{NS})^2\right]
=
c_0^2+d_0^2\Var(s_i)
=
\frac{a^2}{(n+1)^2}+d_0^2(u+\tau_i).
\]

Define the unilateral participation gain as
\[
\Delta_i(m_i):=\pi_i^{S}(m_i)-\pi_i^{NS}(m_i).
\]
All equilibrium coefficients above are obtained by solving linear systems whose entries are polynomials in the covariance elements of $(\theta,s_p,s_1,\dots,s_n,y_1,\dots,y_n)$.
These covariance elements depend continuously on $m_i$ through $V_n(m_i)$ and $\alpha_i(m_i)=V_n(m_i)/(\tau_i+m_i)$.
Hence $\pi_i^{S}(m_i)$ and $\Delta_i(m_i)$ are continuous in $m_i$ on $[0,\infty)$.

By assumption $\tau_p^*(n,u,\tau_i,\tau)<\tau_p<\infty$, the platform signal is not sufficiently informative to induce firm $i$'s participation when no privacy noise is applied to firm $i$'s report.
Equivalently, given the other firms' participation, firm $i$ strictly prefers not to participate at $m_i=0$, so $\Delta_i(0)<0$.

Consider $m_i\to\infty$.
In \eqref{eq:t-all-hetero}, the precision on firm $i$'s privatized report satisfies $1/(\tau_i+m_i)\to 0$, hence $\alpha_I\to 0$ and $V_n(m_i)\to V_{n-1}(m)$.
Therefore $t_{[n]}(m_i)\to t_{-i}(m)$ in $L^2$.
In the limit, participation by firm $i$ no longer reveals any payoff-relevant information about its private signal to competitors, while it grants firm $i$ access to the non-degenerate statistic $t_{-i}(m)$ constructed from existing participants and $s_p$.
Under non-participation, firm $i$ does not observe any platform statistic and remains restricted to $s_i$.
Since $\tau_p<\infty$ implies that $t_{-i}(m)$ is informative about $\theta$, access to $t_{-i}(m)$ strictly expands firm $i$'s information and hence strictly increases its optimized ex-ante profit, implying $\lim_{m_i\to\infty}\Delta_i(m_i)>0$.

By continuity of $\Delta_i(\cdot)$ and the inequalities $\Delta_i(0)<0$ and $\lim_{m_i\to\infty}\Delta_i(m_i)>0$, the intermediate value theorem implies that there exists $\bar m_i>0$ such that $\Delta_i(\bar m_i)=0$.
Thus, a finite firm-specific privacy noise level can render firm $i$ (weakly) willing to participate given that all other firms participate.

\end{proof}

\end{APPENDICES}



\end{document}